\documentclass[11pt]{article}

\usepackage[T1]{fontenc}
\usepackage{lmodern}
\usepackage[margin=1in]{geometry}
\usepackage{amsmath,amssymb,amsthm,mathtools,bm}
\usepackage{aliascnt}
\usepackage{algorithm}
\usepackage{algpseudocode}
\usepackage{microtype}
\usepackage{enumitem}
\usepackage{xcolor}
\usepackage{hyperref}
\usepackage[nameinlink,capitalise,noabbrev]{cleveref}

\definecolor{newblue}{rgb}{0.19,0.55,0.91}

\hypersetup{
    colorlinks=true,
    linkcolor=blue,
    citecolor=newblue,
    urlcolor=blue!60!black
}

\numberwithin{equation}{section}

\newtheorem{theorem}{Theorem}[section]
\newaliascnt{lemma}{theorem}
\newtheorem{lemma}[lemma]{Lemma}
\aliascntresetthe{lemma}
\newaliascnt{proposition}{theorem}
\newtheorem{proposition}[proposition]{Proposition}
\aliascntresetthe{proposition}
\newaliascnt{corollary}{theorem}
\newtheorem{corollary}[corollary]{Corollary}
\aliascntresetthe{corollary}
\theoremstyle{definition}
\newaliascnt{definition}{theorem}
\newtheorem{definition}[definition]{Definition}
\aliascntresetthe{definition}
\newaliascnt{remark}{theorem}
\newtheorem{remark}[remark]{Remark}
\aliascntresetthe{remark}

\crefname{theorem}{Theorem}{Theorems}
\Crefname{theorem}{Theorem}{Theorems}
\crefname{lemma}{Lemma}{Lemmas}
\Crefname{lemma}{Lemma}{Lemmas}
\crefname{proposition}{Proposition}{Propositions}
\Crefname{proposition}{Proposition}{Propositions}
\crefname{corollary}{Corollary}{Corollaries}
\Crefname{corollary}{Corollary}{Corollaries}
\crefname{definition}{Definition}{Definitions}
\Crefname{definition}{Definition}{Definitions}
\crefname{remark}{Remark}{Remarks}
\Crefname{remark}{Remark}{Remarks}
\crefname{section}{Section}{Sections}
\Crefname{section}{Section}{Sections}
\crefname{equation}{Equation}{Equations}
\Crefname{equation}{Equation}{Equations}

\DeclarePairedDelimiter\norm{\lVert}{\rVert}

\newcommand{\R}{\mathbb{R}}
\newcommand{\E}{\mathbb{E}}
\newcommand{\Prob}{\mathbb{P}}
\newcommand{\Prb}{\mathbb{P}}
\newcommand{\eps}{\varepsilon}
\newcommand{\A}{A}
\newcommand{\B}{B}
\newcommand{\Smat}{S}
\newcommand{\I}{I}
\newcommand{\Pmat}{P}
\newcommand{\arow}{a}
\newcommand{\evec}{e}
\newcommand{\OPT}{\mathsf{OPT}}
\newcommand{\cost}{\operatorname{cost}}
\newcommand{\Fset}{\mathcal{F}}
\newcommand{\nnz}{\operatorname{nnz}}
\newcommand{\wtO}{\widetilde{O}}
\newcommand{\wtOmega}{\widetilde{\Omega}}
\newcommand{\wtTheta}{\widetilde{\Theta}}

\newcommand{\cH}{\mathcal{H}}
\newcommand{\cP}{\mathcal{P}}
\newcommand{\Gr}{\operatorname{Gr}}
\newcommand{\dist}{\operatorname{dist}}
\newcommand{\rank}{\operatorname{rank}}

\newcommand{\diag}{\operatorname{diag}}
\newcommand{\Rad}{\mathfrak{R}}
\newcommand{\Pdim}{\operatorname{Pdim}}
\newcommand{\sgn}{\operatorname{sgn}}
\newcommand{\polylog}{\operatorname{polylog}}
\newcommand{\ip}[2]{\left\langle #1,#2\right\rangle}
\newcommand{\abs}[1]{\left\lvert #1\right\rvert}
\newcommand{\1}{\mathbf{1}}

\title{Nearly Optimal Strong Coresets for $\ell_p$ Subspace Approximation}
\author{
Honghao Lin\footnote{Google Research, Carnegie Mellon University / Texas A\&M University}
\and
Vahab Mirrokni\footnote{Google Research}
\and
David P. Woodruff\footnote{Google Research and Carnegie Mellon University}
}
\date{}

\begin{document}
\maketitle

\begin{abstract}
We study strong coresets for $\ell_p$ subspace approximation. Given
$\A\in\R^{n\times d}$, the goal is to sample and rescale a small number of
its rows to form $\Smat\A$ such that
\[
    \norm{\Smat\A(\I-\Pmat_F)}_{p,2}^p
    =(1\pm\eps)\norm{\A(\I-\Pmat_F)}_{p,2}^p
\]
simultaneously for every subspace $F\subseteq\R^d$ of dimension at most $k$,
where $\Pmat_F$ is the orthogonal projector onto $F$.
Woodruff and Yasuda (FOCS 2025)~\cite{WY25} obtained coreset sizes
$\wtO_p(k\eps^{-4/p})$ for $1\leq p<2$ and
$\wtO_p(k^{p/2}\eps^{-p})$ for $p>2$.
We improve these bounds to $\wtO_p(k\eps^{-2})$ and
$\wtO_p(k^{p/2}\eps^{-2})$, respectively.
For $1\leq p<2$, our algorithm runs in
$\wtO_p(\nnz(\A)+d^\omega+k\eps^{-2})$ time. The resulting coreset size
matches the known sampling lower bound~\cite{LWW21} up to logarithmic factors
when
$k+1\geq C\log(1/\eps)$ for an absolute constant $C$.
For $p>2$, our algorithm runs in $\wtO_p(\nnz(\A)+d^\omega)$ time, matching
the running time of the Woodruff--Yasuda framework.

We use different techniques in the two regimes. For
$1\leq p<2$, we combine a bicriteria low-rank split with Lewis-weight sampling
and empirical-process bounds independent of the output dimension. For $p>2$,
we give a sharper analysis of the Woodruff--Yasuda construction. By retaining the
truncation in its sampling probabilities throughout the row-count recurrence,
we show that it achieves the improved $\eps^{-2}$ dependence.
\end{abstract}

\section{Introduction}

Subspace approximation is a basic primitive for representing high-dimensional
data by a low-dimensional linear model~\cite{DV07,DTV11,SV12,CW15}. Given a
collection of points, the goal is to find a low-dimensional subspace that
minimizes the aggregate distance of the points from that subspace. This model
underlies classical dimensionality reduction as well as robust variants of
low-rank approximation.

Let $\A\in\R^{n\times d}$ have rows $\arow_1^\top,\dots,\arow_n^\top$. For a
subspace $F\subseteq\R^d$, let $\Pmat_F$ denote the orthogonal projector onto
$F$. For $p\geq1$, the $\ell_p$ subspace approximation cost of $F$ is
\[
    \cost_{\A}(F)
    :=\norm{\A(\I-\Pmat_F)}_{p,2}^p
    =\sum_{i=1}^n
      \norm{\arow_i^\top(\I-\Pmat_F)}_2^p.
\]
Writing $\Fset_k$ for the set of subspaces of dimension at most $k$, the
optimization problem is
\[
    \OPT_A:=\min_{F\in\Fset_k}\cost_A(F).
\]
The case $p=2$ is principal component analysis and is solved by the singular
value decomposition. The case $p=1$ includes the median hyperplane problem,
while larger values of $p$ increasingly emphasize points with large residuals
and give finite-$p$ relaxations of the center hyperplane objective. Except for
the Euclidean case $p=2$, these problems are generally computationally
hard~\cite{DTV11,GRSW12,CW15}.

This hardness makes data reduction especially consequential. A common route
to an approximation algorithm is to first replace the input by a much smaller
instance and then run a more expensive optimization routine on that instance;
the size of the reduced instance therefore directly affects both the storage
cost and the eventual running time~\cite{FMSW10,CW15}. The central target is a
\emph{dimension-independent} summary: its size may depend on the intrinsic
rank $k$, the accuracy $\eps$, and the fixed exponent $p$, but not on either
the number of points $n$ or the ambient dimension $d$.

Coresets provide such summaries while retaining the geometric objective. A
row-sampling and rescaling matrix $\Smat$ represents a $(1\pm\eps)$ strong row
coreset if it is supported on few input rows and satisfies
\[
    \cost_{\Smat\A}(F)=(1\pm\eps)\cost_{\A}(F)
    \qquad\text{simultaneously for every }F\in\Fset_k.
\]
Equivalently, the coreset is a small weighted subset of the input points that
preserves every rank-at-most-$k$ subspace cost. If it contains $m$ rows, its
row span has dimension at most $m$, so rotational invariance also allows the
retained instance to be represented in at most $m$ dimensions. Thus a bound
$m=\operatorname{poly}(k/\eps)$ removes both $n$ and $d$ from the size of the optimization
instance.

The uniformity over $F$ is the defining strength of this guarantee. An
optimizer for the sampled instance immediately yields an approximate optimizer
for the original data, but the same sample can also be reused for many queries,
for an adaptively chosen candidate subspace, or for a restricted family of
subspaces imposed only after preprocessing. These applications are not covered
by a weak coreset, which need only preserve the optimum value or an approximate
minimizer~\cite{FL11,HV20}. A strong coreset is therefore a reusable surrogate
for the entire objective rather than only for one optimization run.

There is also an important distinction between a coreset supported on the
original input rows and a reduced instance consisting of modified points.
Earlier dimension-independent constructions append an extra coordinate to
each retained point~\cite{SW18,FKW21}. Such a construction still gives a
compact representation, but it is not a weighted subset of the original
dataset. A true row-subset coreset preserves the ambient representation and
the sparsity pattern of every retained row, maintains a direct correspondence
between sampled and original points, and can be passed to downstream routines
without adapting them to an additional coordinate. The distinction is also
theoretically meaningful: the known sampling lower bounds concern weighted
subsets of the input rows~\cite{LWW21,WY23}.

Obtaining dimension independence, a true row subset, nearly optimal rank
dependence, and an efficient construction simultaneously was the main obstacle.
Sohler and Woodruff~\cite{SW18} achieved dimension-independent size with nearly
optimal dependence on $k$, but used modified points and an exponential-time
construction. Subsequent work either retained the extra coordinate and lost
additional polynomial factors~\cite{FKW21}, or produced true row subsets with
polynomially worse dependence on $k$~\cite{HV20}. The natural benchmark is
$\wtO(k)$ rows for $p<2$ and $\wtO(k^{p/2})$ rows for $p>2$, with polynomial
dependence on $1/\eps$~\cite{LWW21,WY23}.

Woodruff and Yasuda~\cite{WY25} reached this benchmark in the rank parameter.
They obtained the first dimension-independent strong row coresets with nearly
optimal dependence on $k$ for every $p\ne2$, together with a
$\wtO(\nnz(\A)+d^\omega)$-time construction. Their framework recursively
samples rows according to root ridge leverage scores and returns
$\wtO_p(k\eps^{-4/p})$ rows for $1\leq p<2$. For $p>2$, its row count is
\[
    \frac{k^{p/2}}{\eps^p}
    \left(\log\frac{k}{\eps\delta}\right)^{O(p^2)}.
\]
This result resolves the dimension and rank dependence, but it leaves the
accuracy dependence open. For $1\leq p<2$, the $\eps^{-4/p}$ upper bound is
separated from the sampling lower bound $\wtOmega_p(k\eps^{-2})$ by a
polynomial factor~\cite{LWW21}. For $p>2$, their one-round sampling theorem is
applied recursively, and the analysis of the resulting row-count recurrence
produces the factor $\eps^{-p}$. It is not immediate whether this factor is
required by the sampling rule itself.

\begin{quote}
\emph{What is the right dependence on $\eps$ for dimension-independent strong
row coresets for $\ell_p$ subspace approximation?}
\end{quote}

\subsection{Related work}

Coresets and sketches for high-dimensional geometric approximation have been
studied through dimensionality reduction, sensitivity sampling, and matrix
sampling techniques~\cite{FMSW10,FL11,VX12}.
Earlier work developed sampling-based dimension reduction and approximation
algorithms specifically for subspace objectives~\cite{DV07,DTV11,SV12}.
For squared Euclidean objectives, Feldman, Schmidt, and Sohler developed
dimension-independent reductions for PCA and projective clustering~\cite{FSS20}.
Sohler and Woodruff subsequently obtained the first strong coresets for
subspace approximation whose size is independent of both $n$ and $d$ for
constant $p$, using a representative subspace and points with an appended
coordinate rather than a weighted subset of the original rows~\cite{SW18}.
Their construction achieves the desired dependence on $k$ but runs in
exponential time, and the appended coordinate means that downstream
algorithms must be adapted to the modified instance. Feng, Kacham, and
Woodruff reduced the running time to polynomial for sum-of-distances
objectives, while retaining the appended coordinate and losing additional
polynomial factors in the coreset size~\cite{FKW21}.

A complementary line of work seeks coresets supported on the input rows.
Huang and Vishnoi obtained the first dimension-independent strong coresets of
this form through sensitivity sampling, but their size loses polynomial
factors in the rank parameter~\cite{HV20}.
Woodruff and Yasuda developed offline and online subset-selection algorithms
and online strong coresets for $\ell_p$ subspace approximation~\cite{WY23}.
Lewis-weight sampling is a central tool for $\ell_p$ subspace embeddings when
$p<2$~\cite{CP15}, and Munteanu and Omlor's $\ell_2$-augmentation framework
provides a related $\wtO_p(d\eps^{-2})$ sensitivity bound for subspace
embeddings~\cite{MO24}.

\subsection{Our results}

Our results obtain $\eps^{-2}$ dependence in both regimes. For
$1\leq p<2$, the first theorem closes the gap with the sampling lower bound,
up to logarithmic factors. Its pre-sparsified implementation makes these
factors independent of $n$.

\begin{theorem}[Strong row coresets for $1\leq p<2$]
\label{thm:main-small-p}
Fix $1\leq p<2$.
Given $\A\in\R^{n\times d}$, $1\leq k<d$, and
$\eps\in(0,1/2)$, there is a randomized algorithm running in
\[
    \wtO_p\!\left(\nnz(\A)+d^\omega+k\eps^{-2}\right)
\]
time that outputs a reweighted row subset $\Smat\A$ supported on at most
\[
 C_pk\eps^{-2}
 \left(\log\frac{C_pk}{\eps}\right)^{C_p}
\]
rows such that, with probability at least $2/3$,
\[
    \cost_{\Smat\A}(F)=(1\pm\eps)\cost_{\A}(F)
    \qquad\text{simultaneously for every }F\in\Fset_k.
\]
The sampled rows and their weights do not depend on $F$.
\end{theorem}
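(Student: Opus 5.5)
We follow the route sketched in the abstract: a bicriteria low-rank split, then Lewis-weight sampling, then a chaining bound that does not depend on the output dimension.

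\textbf{Step 1: bicriteria split.} We first compute, in $\wtO_p(\nnz(\A)+d^\omega)$ time, a subspace $G$ of dimension $r=\wtO(k)$ with $\cost_\A(G)\le C_p\OPT_\A$. For this we use standard $\ell_p$ column-subset or residual-sampling bicriteria algorithms, run after an input-sparsity sketch. We then write $\A=\A\Pmat_G+\A(\I-\Pmat_G)=:\B+R$, so that $\norm{R}_{p,2}^p\le C_p\OPT_\A\le C_p\cost_\A(F)$ for every $F\in\Fset_k$. For a fixed $F$, row $i$ has cost
\[
\norm{\arow_i^\top(\I-\Pmat_F)}_2^p=\bigl\|(b_i+\rho_i)^\top(\I-\Pmat_F)\bigr\|_2^p .
\]
Here $b_i$ lies in an $r$-dimensional space, while the residual $\rho_i$ is controlled in aggregate by $\OPT$.

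\textbf{Step 2: sampling distribution.} We sample rows with probabilities
\[
q_i\propto \tau_i+\frac{\norm{\rho_i}_2^p}{\norm{R}_{p,2}^p},
\]
where $\tau_i$ are $\ell_p$ Lewis weights of $\B$ restricted to $G$. These sum to $r$ and are computable in $\wtO(\nnz+d^\omega)$ time via the usual iteration. We take $m=\wtO(k\eps^{-2})$ samples and rescale by $(mq_i)^{-1/p}$. A sensitivity computation then gives, for every $F$,
\[
\frac{\norm{\arow_i^\top(\I-\Pmat_F)}_2^p}{\cost_\A(F)}\lesssim_p \tau_i+\frac{\norm{\rho_i}_2^p}{\norm{R}_{p,2}^p}.
\]
To see this, use $\norm{(b_i+\rho_i)^\top(\I-\Pmat_F)}_2^p\lesssim \norm{b_i^\top(\I-\Pmat_F)}_2^p+\norm{\rho_i}_2^p$. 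The Lewis-weight property bounds the first term against $\norm{\B(\I-\Pmat_F)}_{p,2}^p$, which is at most $\cost_\A(F)+\norm{R}_{p,2}^p\lesssim\cost_\A(F)$. The total sensitivity is therefore $\wtO(k)$.

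\textbf{Step 3: uniform concentration (the main obstacle).} Total sensitivity alone gives only $k\cdot\mathrm{dim}/\eps^2$ via a VC/net argument, and naive nets over $\Fset_k$ in $\R^d$ depend on $d$. The core of the argument is to bound the error
\[
\E\sup_{F\in\Fset_k}\Bigl|\cost_{\Smat\A}(F)-\cost_\A(F)\Bigr|\Big/\cost_\A(F)
\]
by symmetrization and Dudley/Talagrand chaining, using a metric in which entropy depends only on $r=\wtO(k)$.

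The key reduction is to parametrize $F$ through $\B$. The map $x\mapsto\norm{x^\top(\I-\Pmat_F)}_2$ equals $\min_{y\in F}\norm{x-y}_2$. We then split each row's contribution into two parts. The first is the part depending on $\B(\I-\Pmat_F)$, which is a function of an $r\times d$ matrix acting on an $r$-dimensional row space. This is handled as in $\ell_p$ subspace-embedding chaining for $p<2$ (the Bourgain–Lindenstrauss–Milman / Ledoux–Talagrand style bound with Lewis weights), giving $\eps$ error at $m=\wtO(r\eps^{-2})$. The second is a cross term, handled by a Lipschitz bound $\bigl|\norm{u+v}^p-\norm{u}^p\bigr|\lesssim \norm{v}\,(\norm{u}+\norm{v})^{p-1}$ together with the residual-sampling part of $q_i$. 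Its entropy is controlled through a Gaussian process indexed by projections, whose supremum is bounded by $\sqrt{k}$ times a trace-type quantity rather than by $d$; this is the ``independent of the output dimension'' step.

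I expect the delicate part to be making the cross-term chaining dimension-free while keeping $p<2$ concavity losses logarithmic. I would first try to show that the relevant process is dominated by a sub-Gaussian process on the unit ball of an $O(k)$-dimensional space of directions. The bound would then come via Gaussian comparison (Slepian), with variance proxy $\max_i q_i^{-1}\,\mathrm{sens}_i\le \wtO(k)/m$.

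\textbf{Step 4: conclusion.} Chaining with $m=C_pk\eps^{-2}\log^{C_p}(k/\eps)$ gives expected relative sup-error at most $\eps/10$, and Markov's inequality gives success probability $2/3$. For the running time, we pre-sparsify the rows by an initial constant-factor Lewis/sensitivity sample to $\mathrm{poly}(k/\eps)$ rows. This makes the log factors independent of $n$, after which the final sampling costs $\wtO(k\eps^{-2})$. Finally, the sampled rows and weights depend only on $\A$, $G$, and the random choices, never on $F$.
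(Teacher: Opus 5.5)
Your architecture (bicriteria split, Lewis weights plus residual mass, symmetrization, pre-sparsification) matches the paper's, but there are two genuine gaps.

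\textbf{Step~2: the residual is under-sampled.} With $q_i\propto\tau_i+\norm{\rho_i}_2^p/\norm{R}_{p,2}^p$, the Lewis part has total mass $r$ and the residual part mass $1$. So only a $1/(r+1)$ fraction of the $m=\wtO(k\eps^{-2})$ draws, i.e.\ $\wtO(\eps^{-2})$ samples, goes to the residual, and this provably fails.

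Take $k$ even, and let $\A$ consist of $k/2$ orthonormal rows spanning $U$ together with $n\gg k$ orthonormal rows $a_i$ spanning $W\perp U$. For every $k$-dimensional $F$ we have $\sum_i\norm{\Pmat_Fa_i}_2^2\le k$ over the second block. Since $p\le2$, this gives $\OPT_\A\ge n-k=\cost_\A(U)-k$, so $G=U$ is a valid bicriteria subspace. Your rule then puts mass $1/(k/2+1)$ on the second block. Once $k/2$ exceeds the number of sampled second-block rows, which happens for large $k$ at fixed $\eps$, the subspace $F=U\oplus\operatorname{span}\{\text{sampled second-block rows}\}\in\Fset_k$ has $\cost_{\Smat\A}(F)=0$ but $\cost_\A(F)\ge n-k$.

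The paper instead uses $s_i=w_i+D\rho_i$ with $D=k+r$, so that $q_i\ge\rho_i/2$. The residual then receives a constant fraction of the samples. Just as important for the chaining, the normalized residual scale $\beta_i=(\rho_i/q_i)^{1/p}$ is $O(1)$, so every residual-driven class has constant envelope.

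\textbf{Step~3: no mechanism that is both dimension-free and linear in $k$.} The estimate $\bigl|\norm{u+v}^p-\norm{u}^p\bigr|\lesssim\norm{v}(\norm{u}+\norm{v})^{p-1}$ bounds the size of the cross term, not its increments in $F$, which is what chaining needs. If you contract through the vector $Q_Fe_i$, you are left with the class $i\mapsto Q_Fe_i$, whose doubly indexed Rademacher complexity grows like $\sqrt d$. Likewise, sub-Gaussian comparison with variance proxy $\wtO(k)/m$ over an $O(k)$-dimensional index set gives $\wtO(k/\sqrt m)$, i.e.\ $m\approx k^2\eps^{-2}$.

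The paper decouples envelope from complexity.
\begin{enumerate}
\item It writes $Z_F(i)=\norm{x_F(i)+y_F(i)}_2^p$ exactly as a Lipschitz function (\cref{lem:composition}) of four scalar coordinates $T_F,V_F,H_F,W_F^\perp$, and applies Maurer's vector contraction.
\item The low-rank power $T_F$ has envelope $O(D)$. It is controlled by Cohen--Peng Lewis-weight bounds, lifted to Hilbert-valued maps by a Gaussian-mixture convex-hull argument (\cref{thm:low-rank-lewis}). This yields $\wtO(\sqrt{D/m}+D/m)$ rather than $D/\sqrt m$.
\item The three residual coordinates have $O(1)$ envelopes and dimension-free entropy of order $r\delta^{-2}$ or $k\delta^{-2}$, up to logarithms. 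For $V_F$ and $H_F$, a Gaussian sketch shared by the whole class reduces to at most $Lr$ real parameters, counted by sign patterns or a volumetric net (\cref{lem:directional,lem:phase}).
\item For $W_F^\perp(i)=t_F^p\beta_i^p\dist(u_i,F+\operatorname{span}\{b_i\})^p$, one compresses to $k+O(\delta^{-2})$ dimensions and bounds the pseudodimension on Grassmann charts (\cref{thm:angular}).
\end{enumerate}

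\textbf{Step~4 (minor).} The pre-sparsification must itself be a $(1\pm\eps/3)$ strong coreset; the paper uses \cref{thm:WY-presparsification} and \cref{lem:coreset-composition}. A constant-factor sample would give only a constant-factor guarantee after composition.
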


When $k+1\geq C\log(1/\eps)$, the sampling lower bound of
Li, Wang, and Woodruff~\cite{LWW21} implies
$\wtOmega_p(k\eps^{-2})$ rows. Thus \cref{thm:main-small-p} is optimal in
that parameter range. The upper bound itself holds without these additional
restrictions.

For $p>2$, the $\eps^{-p}$ dependence is not intrinsic to the
Woodruff--Yasuda construction.
We keep their one-round sampling probabilities and recursive framework, but a
sharper row-count analysis yields $\eps^{-2}$ dependence while preserving the
$k^{p/2}$ dependence and nearly input-sparsity running time.

\begin{theorem}[Strong row coresets for $p>2$]
\label{thm:main-large-p}
Fix $p>2$.
There is a constant $C_p\geq 1$ such that the following holds.
Given $\A\in\R^{n\times d}$, $1\leq k<d$, and
$\eps,\delta\in(0,1/2)$, there is a randomized algorithm running in
\[
    \wtO_p\bigl(\nnz(\A)+d^\omega\bigr)
\]
time that outputs a row-sampling matrix $\Smat$ such that, with
probability at least $1-\delta$,
\[
    \cost_{\Smat\A}(F)=(1\pm\eps)\cost_{\A}(F)
    \qquad\text{simultaneously for every }F\in\Fset_k,
\]
and
\[
    \nnz(\Smat)
    \leq
    \frac{C_p k^{p/2}}{\eps^2}
    \left(\log\frac{C_p k}{\eps\delta}\right)^{p+5}.
\]
\end{theorem}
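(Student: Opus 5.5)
The plan is to keep the Woodruff--Yasuda recursive framework unchanged: one-round sampling by truncated root ridge leverage scores, halving the row count, repeated for $O(\log n)$ rounds. The new ingredient is a sharper accounting of how many rows survive each round and of how errors compose.

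\textbf{One round.} Let $A^{(t)}$ be the current matrix with $n_t$ rows, and fix $\lambda_t=\OPT_{A^{(t)}}/k$ (up to constants). Row $i$ is kept with probability
\[
q_i=\min\Bigl\{1,\ \max\bigl\{1/2,\ \beta\,\tau_i^{p/2}\bigr\}\Bigr\},
\]
where $\tau_i$ is the ridge leverage score of row $i$ at regularization $\lambda_t^{2/p}$, and kept rows are rescaled by $q_i^{-1/p}$. I would invoke the Woodruff--Yasuda one-round theorem as a black box. It gives
\[
\cost_{S_tA^{(t)}}(F)=(1\pm\eps_t)\cost_{A^{(t)}}(F)\quad\text{for all }F\in\Fset_k,
\]
provided
\[
\beta\gtrsim \eps_t^{-2}\,\bigl(\text{the maximum $\ell_p$-sensitivity times }n_t^{1-2/p}\text{-type factor}\bigr)\,\polylog .
\]
The point I want to isolate is that the truncation at $1/2$ controls the expected new row count:
\[
\E\, n_{t+1}\le \tfrac12 n_t+\beta\sum_i\tau_i^{p/2}.
\]
In WY this second term was bounded crudely, which produced the $\eps^{-p}$.

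\textbf{Row-count recurrence.} Because $\sum_i\tau_i\le 2k$ and $\tau_i\le1$, Hölder's inequality gives
\[
\sum_i\tau_i^{p/2}\le \max_i\tau_i^{p/2-1}\cdot 2k .
\]
A more useful bound uses the fact that rows with $\tau_i^{p/2}\beta\ge 1/2$ number at most $(2\beta)^{2/p}\cdot 2k$, while the remaining rows contribute at most $n_t/2$. So the recurrence takes the form
\[
n_{t+1}\le \tfrac12 n_t + O(\beta^{2/p}k)\quad\text{(heavy rows)} + (\text{light rows sampled at rate } \beta\tau^{p/2}).
\]
The requirement on $\beta$ in the one-round theorem scales like $\eps_t^{-2}\, n_t^{1-2/p}/k^{1-p/2}$ times $\polylog$. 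Substituting and solving for the fixed point $n^*$ gives
\[
n^*\approx \beta(n^*)\,k\approx \eps^{-2}(n^*)^{1-2/p}k^{p/2}\cdot k^{\ldots},
\]
which solves to $n^*\approx k^{p/2}\eps^{-p}$. That is exactly the WY loss, so the improvement cannot come from this substitution alone.

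Instead, the truncation has to be retained inside the variance bound itself. With $q_i\ge1/2$ on every row, the variance of $\sum_i (\mathbf 1_i/q_i-1)\,\|a_i^\top(I-P_F)\|^p$ is at most
\[
\max_i\frac{\|a_i^\top(I-P_F)\|^p}{q_i}\cdot\cost(F).
\]
The key inequality, the $\ell_p$ sensitivity bound for $p>2$, is
\[
\|a_i^\top(I-P_F)\|^p\lesssim \tau_i^{p/2}\,k^{p/2-1}\cdots\cost(F).
\]
The plan is to prove that, as long as $n_t\gg k^{p/2}\eps^{-2}$, the truncated $q_i$ already satisfies
\[
\max_i \frac{\text{sens}_i}{q_i}\le \frac{\eps_t^2}{\polylog}
\]
with
\[
\beta=\Theta\bigl(k^{p/2}\eps_t^{-2}\,\polylog/k\bigr),
\]
so that $\beta$ no longer depends on $n_t$. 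Then $n_{t+1}\le \tfrac12 n_t+O(k^{p/2}\eps^{-2}\polylog)$, whose fixed point is the claimed bound.

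\textbf{Error schedule.} Set $\eps_t=\eps\cdot\bigl(n^*/n_t\bigr)^{c}$ for a small constant $c>0$. The errors then form a geometric series dominated by the last rounds, so the total error is $\prod_t(1\pm\eps_t)=1\pm O(\eps)$, and a union bound over $O(\log n)$ rounds gives failure probability $\delta$. The chaining, net, and union-bound steps over $\Fset_k$ are taken from WY and contribute the $(\log(k/(\eps\delta)))^{p+5}$ factor. Running time is as in WY, using sketched ridge leverage scores.

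\textbf{Main obstacle.} The hard step is showing that the chaining/Dudley bound in WY's one-round theorem yields an $\eps^{-2}$ (variance-type) dependence, not $\eps^{-p}$, once $q_i\ge1/2$ caps the per-row scaling. I would first try bounding the Gaussian-process increments by $\max_i \text{sens}_i/q_i$ times the total cost, using the truncation to keep the rescaled $\ell_\infty$ norms bounded by $2^{1/p}$. This avoids the $n^{1-2/p}$ Hölder loss that WY incurred.
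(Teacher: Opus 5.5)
Your final plan has a genuine gap. The step you defer as the ``main obstacle'' is not merely unproved; it is false. You want $q_i=\max\{1/2,\min\{1,\beta\tau_i^{p/2}\}\}$, with $\beta$ independent of $n_t$, to already give $\max_i\mathrm{sens}_i/q_i\le\eps_t^2/\polylog$. This rests on a sensitivity bound $\norm{a_i^\top(I-P_F)}_2^p\lesssim\tau_i^{p/2}k^{p/2-1}\cdots\cost(F)$ with no row-count factor. Ridge leverage scores are $\ell_2$ quantities, and the factor $N^{p/2-1}$ in the Woodruff--Yasuda rule cannot be dropped.

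Here is a counterexample. Take $k=1$, $M$ copies of $e_1$, one row $e_2$, and $n'$ copies of $be_2$, with $n'b^2=T=(n')^{1-2/p}$ (so $n'b^p=1$) and $M\ge1+T$.
\begin{itemize}
\item For $F=\operatorname{span}\{e_1\}$, the single row $e_2$ carries half of $\cost_A(F)=2$.
\item At $\lambda=\norm{A-A_1}_F^2=1+T$, its ridge leverage score is $1/(2(1+T))$, which is polynomially small in $n'$.
\item For any $\beta$ that is polylogarithmic in the row count, this row eventually gets $q_i=1/2$. Then $\cost_{SA}(F)$ concentrates near $3$ or near $1$, never within $1\pm\eps$ of $2$.
\item The Woodruff--Yasuda rule keeps this row once $\alpha\le4^{-p/2}$, because $N^{p/2-1}\tau_i^{p/2}\ge4^{-p/2}$.
\end{itemize}

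Your diagnosis of the $\eps^{-p}$ is also misplaced. The one-round threshold $\alpha=\Theta_p\bigl(\zeta^2/((\log N)^3+\log(1/\eta))\bigr)$ is already of variance type, so chaining is not where the loss occurs. The loss comes from the row-count relaxation $\sum_i\min\{1,N^{p/2-1}\tau_i^{p/2}/\alpha\}\le\alpha^{-1}\sum_i\min\{1,N^{p/2-1}\tau_i^{p/2}\}$.

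The idea you actually need is the one you wrote down and then abandoned. Your heavy-row count $O(k\beta^{2/p})$ is essentially \cref{lem:threshold}. From $\min\{1,y^{p/2}\}\le y$ with $y=c^{2/p}\tau_i$, and the trace bound $\sum_i\tau_i\le2k$, you get $\sum_i\min\{1,c\tau_i^{p/2}\}\le2kc^{2/p}$.
\begin{itemize}
\item Keep the unchanged Woodruff--Yasuda scale $c=N^{p/2-1}/\alpha$; you wrote $n_t^{1-2/p}$ instead. Then the expected number of retained rows is at most $2k\alpha^{-2/p}N^{1-2/p}$.
\item The recurrence $N_t\lesssim k\alpha^{-2/p}N_{t-1}^{1-2/p}$ has fixed point $k^{p/2}\alpha^{-1}=\wtO_p(k^{p/2}\eps^{-2})$, reached in $O_p(\log\log N)$ rounds. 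Your $1/2$ floor is harmless but unnecessary.
\item You obtained $\eps^{-p}$ only because you solved $n^*\approx\beta k$ instead of $n^*\approx k\beta^{2/p}$.
\end{itemize}

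To finish the argument you also need three repairs.
\begin{enumerate}
\item The regularization must be $\lambda=\norm{A-A_k}_F^2/k$, not $(\OPT/k)^{2/p}$; otherwise the trace bound fails.
\item The one-round guarantee carries an additive term $\pm C_p\zeta(\log(C_pN/(\zeta\eta)))^{p/2}\OPT_A$. It must be absorbed by tuning $\zeta$ (\cref{lem:tuning}) and composed across rounds (\cref{lem:error-composition}). Your product $\prod_t(1\pm\eps_t)$ ignores it.
\item You must first pre-sparsify with \cref{thm:WY-presparsification} at accuracy $\eps/3$, so that $\log N_0=O_p(\log(k/(\eps\delta)))$. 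Otherwise the $\log N$ factors and your $O(\log n)$ rounds leave a $\log n$ dependence that the stated bound does not allow.
\end{enumerate}
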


The logarithmic exponent $p+5$ is a convenient explicit bound and is not
optimized.
Thus \cref{thm:main-large-p} improves the best known row-subset upper bound
from $\eps^{-p}$ to $\eps^{-2}$ and reduces the logarithmic exponent from
$O(p^2)$ to $O(p)$. Taken together, the two theorems give the
dimension-independent size bound
\[
    \wtO_p\!\left(k^{\max\{1,p/2\}}\eps^{-2}\right)
\]
for every fixed $p\in[1,\infty)\setminus\{2\}$, with the regime-specific
running times stated above.

\subsection{Technical overview}

\paragraph{The case $1\leq p<2$.}
To see the main difficulty, consider sampling indices
$I_1,\ldots,I_m$ independently from a distribution $q$ and estimating, for
each query subspace $F$,
\[
    \cost_{\A}(F)
    =\sum_i\norm{Q_Fa_i}_2^p
    \quad\text{by}\quad
    \widehat C_F
    :=\frac1m\sum_{s=1}^m
      \frac{\norm{Q_Fa_{I_s}}_2^p}{q_{I_s}},
    \qquad Q_F=I-P_F.
\]
For any fixed $F$, this estimator is unbiased. If $q_i$ is chosen according
to the largest relative contribution that row $i$ can make to a query, then
no summand is too large, and Bernstein's inequality gives concentration for
that particular $F$. However, a strong coreset requires uniform concentration
over all $F\in\Fset_k$. A standard $\eps$-net argument would cover $\Fset_k$
and take a union bound over the net, but the required covering
number depends on the ambient dimension $d$. We instead need a
dimension-independent uniform concentration bound governed only by $k$.

We first explain the argument assuming that we are given a subspace $F_0$,
spanned by $\wtO(k)$ rows of $A$, with
\[
    r:=\dim(F_0)=\wtO(k),
    \qquad
    \OPT_A:=\min_{F\in\Fset_k}\cost_A(F),
    \qquad
    \norm{\A(I-P_{F_0})}_{p,2}^p\leq\Gamma_0\OPT_A,
\]
where $\Gamma_0$ depends only on $p$. Thus $F_0$ has low dimension and its
projection residual is within a constant factor of the optimal rank-$k$ cost.
For each input row $a_i$, a sketched regression computes a fitted vector
$b_i\in F_0$, and we set $e_i=a_i-b_i$. Let $B$ and $E$ be the matrices with
rows $b_i$ and $e_i$, respectively. The regression guarantee gives
\[
    \A=B+E,
    \qquad \operatorname{rowspan}(B)=F_0,
    \qquad \rank(B)=r,
    \qquad R:=\norm{E}_{p,2}^p
    =\sum_i\norm{e_i}_2^p\leq\Gamma\OPT_A,
\]
where $\Gamma$ depends only on $p$. This split serves two different purposes.
The rows $b_i$ of $B$ lie in an $r$-dimensional space and can therefore be
controlled by Lewis weights. The residual $E$ may have high rank, but its
total mass is bounded by every query cost, since
$R\leq\Gamma\OPT_A\leq\Gamma\cost_{\A}(F)$.

Let $w_i$ be the $\ell_p$ Lewis weight of $b_i$, and, when $R>0$, let
$\rho_i=\norm{e_i}_2^p/R$ be the fraction of the residual mass in row $i$.
The vector-valued Lewis-weight sensitivity bound in
\cref{lem:lewis-p2-sensitivity} shows that,
for every $F$,
\[
    \norm{Q_Fb_i}_2^p
    \leq w_i\sum_j\norm{Q_Fb_j}_2^p.
\]
Moreover, the triangle inequality and the bound on $R$ imply
$\sum_j\norm{Q_Fb_j}_2^p=O_{p,\Gamma}(\cost_{\A}(F))$.
Since $Q_F$ is a contraction,
$\norm{Q_Fe_i}_2^p\leq\norm{e_i}_2^p
=\rho_iR\leq\Gamma\rho_i\cost_{\A}(F)$.
Thus $w_i$ and $\rho_i$ control the low-rank and residual contributions of
row $i$, respectively.

Set $D=k+r$, define the unnormalized balanced scores
\[
    s_i\asymp_{p,\Gamma}w_i+D\rho_i,
    \qquad q_i=\frac{s_i}{\sum_j s_j}.
\]
Set
\[
    m=\wtO_p(D\eps^{-2}),
\]
with a sufficiently large hidden constant, and draw $I_1,\ldots,I_m$
independently with replacement from $q$, rescaling each sampled row
$a_{I_s}$ by $(mq_{I_s})^{-1/p}$.
Since $\sum_iw_i=r$ and $\sum_i\rho_i=1$, the total score is $O(D)$, and
the resulting distribution satisfies
\[
    q_i\gtrsim_{p,\Gamma}\frac{w_i}{D},
    \qquad q_i\gtrsim_{p,\Gamma}\rho_i.
\]
In particular, every normalized row contribution is bounded by $O(D)$. This
already gives the correct sample size for a fixed $F$. The remaining challenge
is to prove concentration simultaneously over all $F\in\Fset_k$ without
introducing a dependence on the ambient dimension.

For this purpose, normalize each positive-cost query and write
\[
    x_F(i)=\frac{Q_Fb_i}{q_i^{1/p}\cost_{\A}(F)^{1/p}},
    \qquad
    y_F(i)=\frac{Q_Fe_i}{q_i^{1/p}\cost_{\A}(F)^{1/p}},
\]
so that
\[
    Z_F(i):=\norm{x_F(i)+y_F(i)}_2^p
    =\frac{\norm{Q_Fa_i}_2^p}{q_i\cost_{\A}(F)},
    \qquad \E_q Z_F=1.
\]
For the samples above, the relative estimator is exactly
\[
    \frac{\widehat C_F}{\cost_{\A}(F)}
    =\frac1m\sum_{s=1}^m Z_F(I_s).
\]
It is not enough to analyze the two summands separately: they need not be
orthogonal, and their parallel components may reinforce or cancel. We instead
describe the nonlinear loss $Z_F(i)$ by four scalar coordinates:
\begin{enumerate}
    \item the low-rank power $\norm{x_F(i)}_2^p$;
    \item the power in the component of $y_F(i)$ parallel to $x_F(i)$;
    \item their signed first-order interaction; and
    \item the power in the component of $y_F(i)$ orthogonal to $x_F(i)$.
\end{enumerate}
An exact geometric identity recovers $Z_F(i)$ from these four quantities
through a Lipschitz map, including at $p=1$, when $x_F(i)=0$, and under exact
cancellation.

The advantage of this decomposition is that each scalar class has
dimension-independent complexity. The low-rank process is controlled by the
Lewis geometry of the rank-$r$ matrix $B$. Directional and duality-map
estimates control the two parallel interaction terms, while a row-augmented
subspace entropy bound controls the orthogonal residual term. The
corresponding Rademacher bounds scale with $r$ or $k$, rather than with the
dimension of the output space $F^\perp$. Vector contraction then combines the
four estimates and yields
\[
    \E\sup_F\left|\frac1m\sum_{s=1}^m Z_F(I_s)-1\right|
    \leq
    \wtO_p\!\left(
      \sqrt{\frac{k+\rank(B)}{m}}
      +\frac{k+\rank(B)}{m}
    \right).
\]
Since $D=k+r=\wtO(k)$, the choice above uses
$m=\wtO_p(k\eps^{-2})$ samples. Taking the hidden constant sufficiently
large, the displayed bound and Markov's inequality give a
$(1\pm\eps)$ approximation simultaneously for every $F\in\Fset_k$ with
constant probability. Merging repeated samples then yields a strong row
coreset supported on at most $m$ input rows.

It remains to obtain such an $F_0$ and implement the scores efficiently. The
fast bicriteria construction of Woodruff and Yasuda~\cite[Lemma~2.3]{WY25}
provides, with high probability, a subspace spanned by $\wtO(k)$ input rows
that satisfies the assumed dimension and residual bounds. Once $F_0$ is
available, standard sparse embeddings, Johnson--Lindenstrauss sketches, and
Lewis-weight routines give constant-factor estimates of the residual masses
and Lewis weights; see \cref{sec:algorithm}.

\paragraph{The case $p>2$.}
Woodruff and Yasuda~\cite{WY25} provide a one-round sampling primitive for
sparsifying a weighted active matrix. Given a current matrix
$\B\in\R^{N\times d}$ with rows $\mathbf b_i^\top$, it sets
\[
    \lambda_{\B}:=\frac{\norm{\B-\B_k}_F^2}{k},
    \qquad
    \tau_i:=\mathbf b_i^\top
    (\B^\top\B+\lambda_{\B}\I)^\dagger\mathbf b_i,
\]
where $\B_k$ is a best rank-$k$ approximation to $\B$ in Frobenius norm and
$\dagger$ denotes the Moore--Penrose pseudoinverse.
These are ridge leverage scores at the standard rank-$k$ regularization
scale~\cite{CMM17}.
The primitive samples row $i$ independently with probability
\[
    q_i
    =\min\left\{1,
      \frac{N^{p/2-1}\tau_i^{p/2}}{\alpha}
      \right\},
\]
rescaling a retained row by $q_i^{-1/p}$.
Their one-round theorem shows that the sampling step simultaneously preserves
the cost of every rank-$k$ subspace, up to a small multiplicative error and an
additive term proportional to the optimum cost.
Since every subspace cost is at least the optimum, the additive term can be
absorbed into the relative-error budget after tuning its coefficient.
The construction in~\cite{WY25} applies this primitive recursively. We
retain the same sampling rule but sharpen the analysis of the number of rows
retained in each round. The per-round errors and failure probabilities
compose across the recursion.

The parameter $\alpha$ is the sampling threshold in the generic-chaining
argument that controls the deviation uniformly over all
$F\in\Fset_k$. For a one-round target error $\zeta$ and failure probability
$\eta$, it is chosen as
\[
    \alpha
    =\Theta_p\!\left(
      \frac{\zeta^2}{(\log N)^3+\log(1/\eta)}
    \right).
\]
Thus, up to logarithmic factors, $\alpha$ scales quadratically with the target
accuracy. The expected number of rows that survive one round is $\sum_i q_i$,
so the recursive coreset size is governed by how this thresholded sum is
bounded.
The available global information about the scores is the trace bound
$\sum_i\tau_i\leq 2k$.

For $0<\alpha\le1$, the earlier row-count bound in~\cite{WY25} uses the
relaxation
\[
    \sum_i q_i
    \leq
    \frac{1}{\alpha}
    \sum_i\min\{1,N^{p/2-1}\tau_i^{p/2}\}
    =O_p\!\left(\frac{k}{\alpha}N^{1-2/p}\right).
\]
Writing $N_t$ for the number of active rows after round $t$, iterating this
estimate gives a row-count recurrence of the form
\[
    N_t\lesssim k\alpha^{-1}N_{t-1}^{1-2/p},
\]
whose fixed point is
$k^{p/2}\alpha^{-p/2}$.
After the overall error $\eps$ is distributed across the recursive rounds,
$\alpha=\wtTheta_p(\eps^2)$.
The resulting fixed point therefore has an $\eps^{-p}$ dependence.

Our analysis keeps the full factor $N^{p/2-1}/\alpha$ inside the truncation.
At a high level, this charges both sampling regimes to the same linear score
budget. Rows whose sampling probabilities saturate at one must each carry a
substantial amount of ridge leverage mass, so the trace bound limits how many
such rows there can be. For the remaining rows, the superlinear contribution
$\tau_i^{p/2}$ can also be dominated by a suitably scaled linear contribution
in $\tau_i$. The following pointwise inequality captures both regimes at once.
For $p>2$ and every $y\geq 0$,
\[
    \min\{1,y^{p/2}\}\leq y.
\]
Taking $y=c^{2/p}\tau_i$ yields, for every $c>0$,
\[
    \min\{1,c\tau_i^{p/2}\}\leq c^{2/p}\tau_i.
\]
Applying this inequality directly with
$c=N^{p/2-1}/\alpha$ and then using the trace bound gives
\[
    \sum_iq_i
    \leq
    2k\alpha^{-2/p}N^{1-2/p},
\]
and the resulting recurrence is
\[
    N_t\lesssim k\alpha^{-2/p}N_{t-1}^{1-2/p}.
\]
Its fixed point is
\[
    \bigl(k\alpha^{-2/p}\bigr)^{p/2}
    =k^{p/2}\alpha^{-1},
\]
which gives $\eps^{-2}$ after substitution.
Thus the gain comes from preserving the interaction between truncation and
the chaining threshold before the recurrence is iterated: the threshold
contributes $\alpha^{-2/p}$ rather than $\alpha^{-1}$ to the row-count
recurrence, and the fixed-point calculation turns this local saving into the
improvement from $\eps^{-p}$ to $\eps^{-2}$.

\paragraph{Removing the $\log n$ dependence.}
Both analyses above contain logarithmic factors in the number $N$ of rows to
which the sampling procedure is applied. Running them directly on $A$ would
therefore leave a dependence on $\log n$ in the final coreset size. We first
apply the Woodruff--Yasuda strong coreset construction stated in
\cref{thm:WY-presparsification} with accuracy $\eps/3$, obtaining a
weighted row subset $A_0$ with $\operatorname{poly}_p(k/\eps)$ rows. We then
apply the corresponding construction described above to $A_0$. Since
$\log(\operatorname{rows}(A_0))=O_p(\log(k/\eps))$, all logarithmic factors in
the second stage depend only on $k$ and $\eps$.

\subsection{Open problems}

For $p>2$, the known lower bound for strong coresets is
\[
    \wtOmega\!\left(
      \frac{k^{p/2}}{\eps}+\frac{k}{\eps^2}
    \right),
\]
obtained from lower bounds for $\ell_p$ subspace embeddings~\cite{LWW21,WY23}.
This does not match our upper bound
$\wtO_p(k^{p/2}\eps^{-2})$ as a joint function of $k$ and $\eps$.
The main remaining question is whether the $p>2$ upper bound can be improved
to match the known lower bound up to logarithmic factors, or whether a
stronger lower bound captures an unavoidable interaction between the rank and
accuracy parameters. In contrast, for $1\leq p<2$ our upper bound matches the
known sampling lower bound under the parameter conditions stated above.

\paragraph{Organization.}
\Cref{sec:prelim} collects the sampling and sketching models, Lewis-weight
facts, and empirical-process tools used throughout the paper.
\Cref{sec:small-p-split,sec:algorithm} prove the upper bound for
$1\leq p<2$ and record the matching known lower bound.
The following three sections treat $p>2$: the one-round sampling section
states the baseline primitive from~\cite{WY25}, \cref{sec:threshold} proves
the sharper row-count estimate, and \cref{sec:construction} completes the
recursive construction. The appendices contain the empirical-process and
entropy estimates used in the $p<2$ analysis.

\section{Preliminaries}
\label{sec:prelim}

\subsection{Notation and sampling models}

Throughout the paper, $p$ is fixed within the regime under consideration.
Constants denoted by $c_p$ and $C_p$ may depend on $p$ but on no other
parameter, and their values may change from one occurrence to the next.
We write $\nnz(\A)$ for the number of nonzero entries of $\A$ and use $\omega$
for the exponent of matrix multiplication.
The notation $\wtO$, $\wtOmega$, and $\wtTheta$ suppresses logarithmic factors
in the relevant parameters. A subscript $p$ indicates that the implicit
constant may depend on $p$.
We write
\[
    \norm{Y}_{p,2}
    :=\left(\sum_i\norm{\evec_i^\top Y}_2^p\right)^{1/p}.
\]
For $F\subseteq\R^d$, write $P_F$ for the orthogonal projector onto $F$ and
$Q_F=I-P_F$. Let $\Gr(j,d)$ denote the set of $j$-dimensional subspaces of
$\R^d$. The set of subspaces of $\R^d$ of dimension at most $k$ is
\[
    \Fset_k:=\bigcup_{j=0}^k\Gr(j,d).
\]
We use the equivalent notation
\[
    \Phi_A(F)=\cost_A(F)=\norm{AQ_F}_{p,2}^p,
    \qquad
    \OPT_A:=\min_{F\in\Fset_k}\cost_A(F).
\]
We use the convention that
\[
    X=(1\pm a)Y\pm b
\]
means
$(1-a)Y-b\leq X\leq(1+a)Y+b$.
All logarithms are natural.

\begin{definition}[Strong row coreset]
A reweighted row subset is represented by a matrix
$\Smat\in\R^{s\times n}$ with one nonzero entry in each row. Repeated input
indices may be aggregated, and the size is the number of distinct retained
rows. It is a $(1\pm\eps)$ strong row coreset for rank-at-most-$k$
$\ell_p$ subspace approximation if
\[
    \cost_{\Smat\A}(F)
    =(1\pm\eps)\cost_{\A}(F)
    \qquad\text{for every }F\in\Fset_k.
\]
\end{definition}

\begin{definition}[Bernoulli $\ell_p$ sampling]
Given inclusion probabilities $\pi_1,\dots,\pi_N\in[0,1]$, let
$\xi_1,\dots,\xi_N$ be independent Bernoulli variables with
$\Prob[\xi_i=1]=\pi_i$.
The associated $\ell_p$ sampling matrix is diagonal with
\[
    \Smat_{ii}
    :=
    \begin{cases}
        \pi_i^{-1/p}\xi_i, & \pi_i>0,\\
        0, & \pi_i=0.
    \end{cases}
\]
After sampling, zero rows are deleted from the active matrix.
A product of successive sampling maps can be viewed, after padding deleted
rows with zeros, as a single diagonal reweighting of the original rows.
\end{definition}

\begin{definition}[Sampling with replacement]\label{def:iid-sampling}
Let $q$ be a probability distribution on $[N]$, and draw
$I_1,\ldots,I_m$ independently from $q$. The corresponding sampling matrix
$\Smat\in\R^{m\times N}$ has rows
\[
    \Smat_{s,*}=(mq_{I_s})^{-1/p}\evec_{I_s}^\top,
    \qquad s\in[m].
\]
Thus, if $M$ has rows $z_i^\top$, then
\[
    \norm{\Smat M}_{p,2}^p
    =\frac1m\sum_{s=1}^m\frac{\norm{z_{I_s}}_2^p}{q_{I_s}}.
\]
Repeated draws may be merged, so the support contains at most $m$ distinct
input rows.
\end{definition}

\subsection{Pre-sparsification and composition}

We use the following result of Woodruff and Yasuda as a preprocessing step.
The theorem is stated for weighted inputs by absorbing each positive row
weight into the corresponding row scaling.

\begin{theorem}[Woodruff--Yasuda pre-sparsification~\cite{WY25}]
\label{thm:WY-presparsification}
Fix $p\in[1,\infty)\setminus\{2\}$. Given
$A\in\R^{N\times d}$, $1\le k<d$, and
$\xi,\delta\in(0,1/2)$, there is a randomized algorithm that returns a
weighted row subset $A_0=S_0A$ satisfying, with probability at least
$1-\delta$,
\[
 \cost_{A_0}(F)=(1\pm\xi)\cost_A(F)
 \qquad\text{for every }F\in\Fset_k.
\]
For a constant $C_p$, its support size is at most
\begin{equation}\label{eq:WY-presparsification-size}
 \begin{cases}
  C_pk\xi^{-4/p}
  \left(\log\dfrac{C_pk}{\xi\delta}\right)^{C_p},
     &1\le p<2,\\[6pt]
  C_pk^{p/2}\xi^{-p}
  \left(\log\dfrac{C_pk}{\xi\delta}\right)^{C_p},
     &p>2.
 \end{cases}
\end{equation}
The running time is $\wtO_p(\nnz(A)+d^\omega)$.
\end{theorem}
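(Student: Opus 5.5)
Since this theorem is imported from~\cite{WY25}, I would not reprove it. The plan is to reduce it to the two main theorems of Woodruff and Yasuda and check that the form stated here, for weighted inputs and with failure probability $\delta$, follows from them. There are three steps: handle the weights, match the parameters and the size bounds, and check the running time.

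\emph{Weights.} Suppose the input is a matrix $A$ together with positive row weights $\mu_i$, with cost $\sum_i\mu_i\norm{a_i^\top Q_F}_2^p$. Set $D_\mu=\diag(\mu_i^{1/p})$ and $\widetilde A=D_\mu A$. Then
\[
    \cost_{\widetilde A}(F)=\sum_i\mu_i\norm{a_i^\top Q_F}_2^p
\]
for every $F$, because $\norm{\mu_i^{1/p}a_i^\top Q_F}_2^p=\mu_i\norm{a_i^\top Q_F}_2^p$. Moreover $\nnz(\widetilde A)=\nnz(A)$, and $\widetilde A$ has the same rows as $A$ up to positive scalars.

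I would apply the unweighted algorithm of~\cite{WY25} to $\widetilde A$ and obtain a sampling matrix $S'$. Then I would set $S_0:=S'D_\mu$. This matrix is still supported on at most as many original rows as $S'$, with one nonzero entry per row. It satisfies $\cost_{S_0A}(F)=\cost_{S'\widetilde A}(F)$, so the $(1\pm\xi)$ guarantee transfers verbatim, simultaneously for all $F\in\Fset_k$.

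\emph{Parameters and size.} I would invoke the $1\le p<2$ theorem of~\cite{WY25} with accuracy $\xi$, which gives $\wtO_p(k\xi^{-4/p})$ rows, and the $p>2$ theorem, which gives $k^{p/2}\xi^{-p}(\log\frac{k}{\xi\delta})^{O(p^2)}$ rows. Both are instantiated with failure probability $\delta$. Two points need checking here.
\begin{enumerate}[label=(\roman*)]
    \item The logarithmic factors must depend only on $k,\xi,\delta$ and not on $N$. If the cited statement carries a $\log N$, I would remove it by first running a crude constant-accuracy pass, or by absorbing it as~\cite{WY25} do through their recursion, which terminates at a size independent of $N$.
    \item The exponent $O(p^2)$ must fit into the form $(\log\frac{C_pk}{\xi\delta})^{C_p}$. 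This holds because $p$ is fixed, so $C_p$ may be taken to dominate $O(p^2)$.
\end{enumerate}
If the cited result were stated only with constant success probability, I would boost it. The natural route is to carry $\log(1/\delta)$ into the chaining thresholds, as their analysis already does. This only enlarges the polylogarithmic factor.

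\emph{Running time.} The reweighting costs $O(\nnz(A))$, and the algorithm of~\cite{WY25} runs in $\wtO_p(\nnz(A)+d^\omega)$ on $\widetilde A$. The total running time is therefore as claimed.

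The only step I expect to require care is (i): confirming that the cited size bound is free of $\log N$ and holds with probability $1-\delta$, rather than only at constant probability. My first attempt would be to read this off the recursion in~\cite{WY25}. There, each round's failure probability is set to $\delta$ divided by the number of rounds, and the row count at the fixed point depends only on $k$, $\xi$ and $\delta$.
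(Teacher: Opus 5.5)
Your proposal is correct and matches the paper, which likewise treats this as an import of \cite[Theorems~1.3 and~1.4]{WY25} and handles weighted inputs exactly as you do, by absorbing each positive row weight into the row scaling (your $S_0=S'D_\mu$). The contingencies in (i) and the boosting step are not needed, because the paper takes the failure probability $\delta$ and the $N$-independence of the row count directly from the cited statements; also, a single crude pre-pass would by itself only reduce a $\log N$ to a $\log\log N$, not remove it.
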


This is the row-subset strong-coreset guarantee in
\cite[Theorems~1.3 and~1.4]{WY25}. Its dependence on $\xi$ will be larger than
that of our final coreset, but its row count is independent of the input row
count $N$.

We also use their fast bicriteria construction for the low-rank
split~\cite[Lemma~2.3]{WY25}.

\begin{lemma}[Woodruff--Yasuda bicriteria subspace]
\label{lem:WY-bicriteria}
Fix $1\le p<2$. Given $A\in\R^{N\times d}$, $1\le k<d$, and
$\delta\in(0,1/2)$, there is a randomized algorithm that returns a set
$\mathcal J$ of at most
\[
 C_pk\left(\log\frac{C_pN}{\delta}\right)^{C_p}
\]
rows whose span $F_0$ satisfies
\[
 \cost_A(F_0)\le\Gamma_0\OPT_A
\]
with probability at least $1-\delta$, where $\Gamma_0$ depends only on $p$.
The algorithm runs in
$\wtO_p(\nnz(A)+|\mathcal J|^\omega)$ time.
\end{lemma}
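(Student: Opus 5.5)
This is \cite[Lemma~2.3]{WY25}, and we would simply import it. If a self-contained argument were wanted, we would prove it by \emph{adaptive residual sampling} in the style of Deshpande--Varadarajan~\cite{DV07}, made fast by sketching. Maintain a row set $\mathcal J_t$ with span $F_t$, starting from $\mathcal J_0=\emptyset$. In round $t+1$, sample $O_p(k\log(1/\delta'))$ rows independently, each with probability proportional to the residual mass $\norm{a_i^\top Q_{F_t}}_2^p$. Add the sampled rows to $\mathcal J_t$. After $T=O_p(\log(N/\delta))$ rounds, output $\mathcal J=\mathcal J_T$. This gives the stated bound $|\mathcal J|\le C_pk(\log(C_pN/\delta))^{C_p}$ directly.

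\textbf{Progress step.} The heart of the argument is a one-round progress statement, and this is the main obstacle. With $F^\ast$ an optimal rank-$k$ subspace, we want: with constant probability, either
\[
 \cost_A(F_t)\le\Gamma_0\OPT_A,
\]
or the new span $F_{t+1}$ satisfies a potential decrease. A candidate potential is $\cost_A(F_{t+1}+F^\ast_{\text{part}})$, where $F^\ast_{\text{part}}$ is the part of $F^\ast$ not yet captured.

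The route we would try first is as follows.
\begin{enumerate}[label=(\roman*)]
\item Split the rows into those well approximated by $F^\ast$ and those whose residual to $F_t$ dominates.
\item Use the triangle inequality in $\ell_{p,2}$, with $p<2$ so that $\norm{\cdot}_{p,2}^p$ is quasi-subadditive with constant $2^{p-1}$.
\item Show that when $\cost_A(F_t)\gg\OPT_A$, a constant fraction of the residual mass lies on rows whose projections onto $F^\ast$ carry new directions.
\item Sampling $O(k)$ such rows then yields, via a volume or $\ell_p$ Lewis-weight argument restricted to the rank-$k$ matrix $AP_{F^\ast}$, a subspace of $\mathrm{span}(\mathcal J)$ approximating $F^\ast$ on that mass up to constant factors.
\end{enumerate}
Taking $O(\log(1/\delta'))$ independent repetitions and using the best candidate boosts the success probability. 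Since $\cost_A(\{0\})\le\operatorname{poly}(N)\OPT_A$ after the standard normalization, $O(\log N)$ rounds of constant-factor decrease suffice. A union bound over rounds with $\delta'=\delta/T$ then gives overall failure probability $\delta$.

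\textbf{Running time.} Residual norms $\norm{a_i^\top Q_{F_t}}_2$ need only be known up to constant factors, since constant distortion of the sampling probabilities only changes constants. We would compute them by a Johnson--Lindenstrauss sketch: form $A Q_{F_t} G$ with $G$ a $d\times O(\log N)$ Gaussian (or sparse) matrix. We apply $Q_{F_t}=I-P_{F_t}$ through an orthonormal basis of $\mathrm{span}(\mathcal J_t)$, computed in $O(|\mathcal J|^\omega)$ time after sketching the $|\mathcal J|$ rows, and multiply $A$ by the thin matrix in $\wtO(\nnz(A))$ time. Summed over the $\polylog$ rounds, this gives $\wtO_p(\nnz(A)+|\mathcal J|^\omega)$.

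We expect steps (i)--(iv) to be the delicate part. The residual must be charged to new directions without depending on $d$, and the $\ell_p$ Lewis-weight sensitivity bound for rank-$k$ matrices (the tool later recorded as \cref{lem:lewis-p2-sensitivity}) is what we would use to make the ``few rows capture $F^\ast$'' claim quantitative.
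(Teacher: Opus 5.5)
Your primary route, importing \cite[Lemma~2.3]{WY25}, is exactly what the paper does: the paper states this lemma without proof and attributes it to Woodruff and Yasuda. Your optional self-contained sketch is not a proof, however. Its $O(\log(N/\delta))$ round count rests on $\cost_A(\{0\})\le\operatorname{poly}(N)\OPT_A$, which fails for real inputs: rows $e_1,\dots,e_k$ and $t\,e_{k+1}$ give a ratio of order $k/t^p$ as $t\to0$, and the ratio is infinite when $\OPT_A=0$. So a constant-factor decrease per round does not bound the number of rounds unless you first secure an approximate starting subspace, and step (iv) is asserted rather than proved.
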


\begin{lemma}[Composition of strong row coresets]
\label{lem:coreset-composition}
Suppose $A_0=S_0A$ is a $(1\pm\xi_0)$ strong row coreset for $A$, and
$A_1=S_1A_0$ is a $(1\pm\xi_1)$ strong row coreset for $A_0$. Then, for every
$F\in\Fset_k$,
\[
 (1-\xi_0)(1-\xi_1)\cost_A(F)
 \le \cost_{A_1}(F)
 \le (1+\xi_0)(1+\xi_1)\cost_A(F).
\]
In particular, if $0<\eps<1/2$ and
$\xi_0=\xi_1=\eps/3$, then $A_1$ is a $(1\pm\eps)$ strong row coreset for
$A$. Moreover, $S_1S_0$ is again a row-sampling matrix for the original rows
of $A$.
\end{lemma}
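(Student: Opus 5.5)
The plan is to prove \cref{lem:coreset-composition} by chaining the two guarantees pointwise in $F$, and then to check the arithmetic and the structural claim separately.

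Fix any $F\in\Fset_k$. Since $A_1$ is a $(1\pm\xi_1)$ strong coreset for $A_0$, we have $(1-\xi_1)\cost_{A_0}(F)\le\cost_{A_1}(F)\le(1+\xi_1)\cost_{A_0}(F)$. Since $A_0$ is a $(1\pm\xi_0)$ strong coreset for $A$, we also have $(1-\xi_0)\cost_A(F)\le\cost_{A_0}(F)\le(1+\xi_0)\cost_A(F)$. The factors $1\pm\xi_1$ are nonnegative because we may assume $\xi_1\le1$; when $\xi_1>1$ the lower bound is trivial, as costs are nonnegative. So we can multiply the second chain by $1-\xi_1$ and by $1+\xi_1$ and substitute into the first, which gives the displayed two-sided bound. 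The only point needing care is the sign of $1-\xi_0$ and $1-\xi_1$, and it is handled this way.

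For the specialization $\xi_0=\xi_1=\eps/3$ with $0<\eps<1/2$, the upper factor is $(1+\eps/3)^2=1+2\eps/3+\eps^2/9$. Since $\eps<1/2$, we have $\eps^2/9\le\eps/18<\eps/3$, so this factor is at most $1+\eps$. The lower factor is $(1-\eps/3)^2=1-2\eps/3+\eps^2/9\ge1-\eps$. Hence $\cost_{A_1}(F)=(1\pm\eps)\cost_A(F)$ for every $F\in\Fset_k$.

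For the final claim, view $S_0$, after padding deleted rows with zeros as in the sampling definitions, as a matrix with one nonzero per row indexed by the original rows of $A$. Likewise $S_1$ has one nonzero per row indexed by the rows of $A_0$. Each row of $S_1S_0$ is then a positive multiple of a single row of $S_0$, which is itself a positive multiple of some $\evec_i^\top$. So $S_1S_0$ has exactly one nonzero per row, and it is supported on original rows of $A$. Repeated indices can be merged by the usual aggregation, replacing weights $w_1,\dots,w_t$ on the same row with $(\sum w_j^p)^{1/p}$, which preserves every $\norm{\cdot}_{p,2}^p$. I expect no real obstacle here; the only subtlety is the sign caveat above.
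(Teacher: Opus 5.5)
Your argument is correct and matches the paper's proof: you chain the two coreset inequalities pointwise in $F$, check $(1\pm\eps/3)^2$ against $1\pm\eps$ for $\eps<1/2$, and note that a product of row-selection-and-rescaling maps again selects and rescales original rows of $A$. One small correction to your sign caveat: if both $\xi_0>1$ and $\xi_1>1$, then $(1-\xi_0)(1-\xi_1)>0$, so the lower bound is not trivial and can in fact fail. The lemma should therefore be read with $\xi_0,\xi_1\le1$, as in every use in the paper, and there your multiplication by $1-\xi_1\ge0$ is exactly what is needed.
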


\begin{proof}
Apply the two coreset inequalities successively. For the stated choice of
parameters,
$(1-\eps/3)^2\ge1-\eps$ and
$(1+\eps/3)^2\le1+\eps$ when $\eps<1/2$.
The final assertion follows because a product of row-selection and rescaling
maps still selects and rescales rows of the original matrix.
\end{proof}

\subsection{Leverage scores and Lewis weights}

Throughout this subsection, zero rows are deleted before computing Lewis
weights and restored with weight zero afterward.

\begin{definition}[Leverage scores]
Let $M\in\R^{N\times r}$ have rows $m_i^\top$. The leverage score of row
$i$ is
\[
    \tau_i(M):=m_i^\top(M^\top M)^\dagger m_i.
\]
Thus $\sum_i\tau_i(M)=\rank(M)$. If $M$ has full column rank, the
pseudoinverse may be replaced by the inverse, and
$\tau_i(MR)=\tau_i(M)$ for every invertible change of coordinates
$R\in\R^{r\times r}$.
\end{definition}

\begin{definition}[$\ell_p$ Lewis weights]\label{def:lewis}
Let $M\in\R^{N\times r}$ have full column rank and rows $m_i^\top$. The
$\ell_p$ Lewis weights of $M$ are the positive numbers
$w_1,\ldots,w_N$ satisfying
\begin{equation}\label{eq:lewis-definition}
    w_i=\tau_i\!\left(W^{1/2-1/p}M\right),
    \qquad W=\diag(w_1,\ldots,w_N).
\end{equation}
Equivalently,
\begin{equation}\label{eq:lewis-fixed-point}
    m_i^\top\!\left(M^\top W^{1-2/p}M\right)^{-1}m_i
    =w_i^{2/p}.
\end{equation}
We call $\widetilde w_i$ an $\alpha$-approximation to the Lewis weights if
$\alpha^{-1}w_i\le\widetilde w_i\le\alpha w_i$ for every positive-weight
row.
\end{definition}

Thus Lewis weights are ordinary leverage scores after a self-consistent
row reweighting; when $p=2$, they reduce to the usual leverage scores of
$M$.
For $1\le p<4$, Cohen and Peng prove that these weights exist and are
unique~\cite[Sec.~3, in particular Cor.~3.4]{CP15}. Since they are leverage
scores of a reweighted matrix,
\begin{equation}\label{eq:lewis-total}
    \sum_iw_i=\rank(M)=r.
\end{equation}
They are invariant under invertible changes of column coordinates:
$M$ and $MR$ have the same Lewis weights whenever $R$ is invertible.
Accordingly, for vectors lying in an $r$-dimensional subspace, their Lewis
weights mean the weights of their coordinates in any basis of that
subspace; this is independent of the chosen basis.

\begin{theorem}[Cohen--Peng Lewis-weight sampling]\label{thm:cp-sampling}
Fix $1\le p<2$ and $0<\eps<1$. Let $M\in\R^{N\times r}$ have Lewis
weights $w_i$, and
let $q$ be a distribution satisfying
$q_i\gtrsim_p w_i/r$ whenever $w_i>0$. If $\Smat$ is formed from
$m=\wtO_p(r\eps^{-2})$ i.i.d. samples from $q$ as in
\cref{def:iid-sampling}, then, with constant probability,
\[
    \norm{\Smat Mx}_p^p=(1\pm\eps)\norm{Mx}_p^p
    \qquad\text{simultaneously for every }x\in\R^r.
\]
\end{theorem}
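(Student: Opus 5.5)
This is the Cohen--Peng theorem~\cite{CP15}, and I would follow their route. It has three ingredients: a change of density that turns the Lewis weights into pointwise sensitivity bounds, symmetrization of the sampled sum, and a Dudley entropy estimate whose size depends only on $r$. Finally a bootstrapping step closes the argument.

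\emph{Step 1: normalization and sensitivity.} Since Lewis weights and the quantities $\norm{Mx}_p$ are invariant under invertible column changes, I may rotate coordinates so that $M^\top W^{1-2/p}M=\I$. Then \eqref{eq:lewis-fixed-point} gives $\norm{m_i}_2=w_i^{1/p}$, hence $|m_i^\top x|\le w_i^{1/p}\norm{x}_2$. For $p<2$ I then write
\[
\norm{x}_2^2=\sum_iw_i^{1-2/p}|m_i^\top x|^{2-p}|m_i^\top x|^p\le\norm{x}_2^{2-p}\norm{Mx}_p^p .
\]
This yields $\norm{x}_2^p\le\norm{Mx}_p^p$, and therefore the sensitivity bound $|m_i^\top x|^p\le w_i\norm{Mx}_p^p$. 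Combined with $q_i\gtrsim_p w_i/r$, every normalized summand $|m_{I_s}^\top x|^p/(q_{I_s}\norm{Mx}_p^p)$ is $O_p(r)$. This bound, together with unit mean, already gives Bernstein concentration for each fixed $x$.

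\emph{Step 2: symmetrization.} Let $\mathcal B=\{x:\norm{Mx}_p\le1\}$, and set
\[
\Delta=\sup_{x\in\mathcal B}\Bigl|\frac1m\sum_s\frac{|m_{I_s}^\top x|^p}{q_{I_s}}-\norm{Mx}_p^p\Bigr| .
\]
I would bound $\E\Delta$ by twice the corresponding Rademacher average. Conditional on the samples, this average is a subgaussian process. Its metric is controlled by $(\max_s|m_{I_s}^\top x-m_{I_s}^\top x'|/q_{I_s}^{1/p})^{p/2}$ times $(\frac1m\sum_s|m_{I_s}^\top x|^p/q_{I_s})^{1/2}$, by the standard estimate for $|a|^p-|b|^p$ with $p<2$. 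The second factor is at most $(1+\Delta)^{1/2}$.

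\emph{Step 3: entropy, the main obstacle.} I need covering numbers of $\mathcal B$ in the norm $\norm{x}_{X}:=\max_s|m_{I_s}^\top x|/q_{I_s}^{1/p}$. By Step~1 and $q_i\gtrsim w_i/r$, each functional satisfies $|m_{I_s}^\top x|/q_{I_s}^{1/p}\lesssim r^{1/p}\norm{x}_2$, and $\mathcal B$ lies in the Euclidean unit ball. For small radii I would use a volumetric bound in $\R^r$, which gives $\log N\lesssim r\log(r/t)$. For large radii I would use dual Sudakov or Maurey's lemma on $m$ functionals, giving $\log N\lesssim r^{2/p}\log m/t^2$ at scale $t$, following Bourgain--Lindenstrauss--Milman and Talagrand. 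Plugging these into Dudley's integral should give
\[
\E\Delta\lesssim_p\sqrt{\frac{r\,\polylog(r,m)}{m}}\,\sqrt{1+\E\Delta}.
\]
Getting the entropy exponent to balance exactly to $r$, without an extra power of $r$ from the $p/2$-Hölder metric, is the delicate part. If the direct route loses, I would first try Talagrand's trick of splitting $p<2$ through an intermediate exponent, or Cohen--Peng's recursive halving argument.

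\emph{Step 4: conclusion.} Solving the self-bounding inequality gives $\E\Delta\lesssim\sqrt{r\polylog/m}+r\polylog/m$. Taking $m=\wtO_p(r\eps^{-2})$ with a large enough constant makes this at most $\eps/10$. Markov's inequality then gives $\Delta\le\eps$ with constant probability, which is exactly $\norm{\Smat Mx}_p^p=(1\pm\eps)\norm{Mx}_p^p$ for all $x$ by homogeneity.
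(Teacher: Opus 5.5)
\textbf{Step 3 does not give the bound you claim, and Step 3 is where the whole content of the theorem lies.} Your Steps 1, 2 and 4 are sound. The paper does not reprove this statement; it cites \cite[Theorem~7.1 and Lemma~7.4]{CP15}, whose core is Talagrand's Rademacher bound for matrices with small Lewis weights, \cite[Lemmas~7.2--7.3]{CP15}.

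Both entropy bounds you quote use only $\mathcal B\subseteq B_2^r$. After the $1/m$ normalization your metric is $\lesssim m^{-1/2}\norm{x-x'}_X^{p/2}(1+\Delta)^{1/2}$, and $\sup_{\mathcal B}\norm{x}_X\lesssim r^{1/p}$. Write $\mathcal N(\cdot)$ for covering numbers, to avoid a clash with the row count $N$. The volumetric bound $r\log(r^{1/p}/t)$ and the dual Sudakov bound $r^{2/p}\log m/t^2$ cross at $t_0\approx r^{1/p-1/2}$. The Dudley integral $\int\sqrt{\log\mathcal N(t)}\,d(t^{p/2})$ is then of order $\sqrt r\,t_0^{p/2}\approx r^{1-p/4}$, up to logarithms.

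So you only get $\E\Delta\lesssim r^{1-p/4}\sqrt{\polylog/m}\,(1+\E\Delta)^{1/2}$. That means $m\approx r^{2-p/2}\eps^{-2}$, which is $r^{3/2}\eps^{-2}$ at $p=1$. The loss does not come from the $p/2$-H\"older metric. It comes from covering $\mathcal B$ as if it were the Euclidean ball, which is larger by up to a factor $r^{1/p-1/2}$ in $\ell_p$ scale. Indeed, H\"older with $\sum_iw_i=r$ gives $\norm{Mx}_p^p\le r^{1-p/2}\norm{x}_2^p$, with equality when all $|m_i^\top x|/w_i^{1/p}$ coincide. Neither of your fallbacks names the estimate that repairs this. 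In particular, halving addresses the comparison between sampled and original Lewis structure, not this entropy loss.

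\textbf{What is missing is an entropy bound for $\mathcal B$ itself, decaying like $t^{-p}$ rather than $t^{-2}$.} In your coordinates set $\norm{x}_\infty:=\max_{i:w_i>0}|m_i^\top x|/w_i^{1/p}\le\norm{x}_2$. Your Step 1 computation, with $\norm{x}_\infty$ in place of $\norm{x}_2$, gives the interpolation inequality $\norm{x}_2^2\le\norm{x}_\infty^{2-p}\norm{Mx}_p^p$.
\begin{itemize}
\item If $x,c\in\mathcal B$ and $\norm{x-c}_\infty\le 2t$, then $\norm{x-c}_2\le 2t^{1-p/2}$.
\item Dual Sudakov over the $N$ unit functionals $m_i/w_i^{1/p}$ covers this Euclidean ball by $\exp(C_pt^{-p}\log(2N))$ sup-norm balls of radius $t/2$.
\item Since $\mathcal N(\mathcal B,\norm{\cdot}_\infty,1)=1$, iterating over dyadic scales gives $\log\mathcal N(\mathcal B,\norm{\cdot}_\infty,t)\lesssim_p t^{-p}\log(2N)$.
\item Because $\norm{x}_X\lesssim r^{1/p}\norm{x}_\infty$, this yields $\log\mathcal N(\mathcal B,\norm{\cdot}_X,t)\lesssim_p r\,t^{-p}\log(2N)$.
\end{itemize}
The Dudley integrand is now $\sqrt{r\log(2N)}\,dt/t$ on $[(\log N)^{1/p},Cr^{1/p}]$, and the volumetric bound handles smaller scales. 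You obtain $\E\Delta\lesssim\sqrt{r\log(2N)/m}\,\polylog(r)\,(1+\E\Delta)^{1/2}$, after which Step 4 goes through.

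This version carries a factor $\log N$ in the number of rows of $M$. The reason is that the interpolation needs the supremum over all rows, not only over the $m$ sampled ones. That is acceptable if $\wtO_p$ may hide $\log N$, as in the paper's own \cref{cor:CP-first-moment}, where $N_0$ contains $n$. Removing it requires the Lewis structure of the sampled matrix. Cohen and Peng supply this through their halving reduction with auxiliary rows, \cite[Lemma~7.4]{CP15}.
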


This is the standard Lewis-weight subspace-embedding theorem of Cohen and
Peng~\cite[Theorem~7.1 and Lemma~7.4]{CP15}; logarithmic factors depend on the
fixed $p$. We later prove a vector-valued refinement tailored to subspace
costs.

\begin{algorithm}[htbp]
\caption{Iterative Algorithm to Compute the $\ell_p$ Lewis Weights~\cite{CP15}}
\label{alg:cp-iteration}
\begin{algorithmic}[1]
\State Initialize $w=\mathbf 1\in\R^N$
\For{$t=1,2,\ldots,T$}
    \State Let $W=\diag(w_1,\ldots,w_N)$
    \State Let $\tau\in\R^N$ be a constant-factor approximation of the
      leverage scores of $W^{1/2-1/p}M$
    \State Set $w_i\gets(w_i^{2/p-1}\tau_i)^{p/2}$
\EndFor
\State \Return $w$
\end{algorithmic}
\end{algorithm}

\begin{lemma}[Cohen--Peng iteration~\cite{CP15}]\label{lem:cp-iteration}
Fix $1\le p<2$, and let $M\in\R^{N\times r}$ have full column rank. After
$T=O_p(\log\log(N+2))$ iterations of
\cref{alg:cp-iteration}, the returned vector $w$ is a constant-factor
approximation to the $\ell_p$ Lewis weights of $M$.
\end{lemma}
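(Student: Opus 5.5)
The plan is to view \cref{alg:cp-iteration} as a fixed-point iteration for the map whose fixed points are the Lewis weights, and to show that this map is a contraction in the multiplicative (log-ratio) metric with contraction factor $1-p/2<1$. Define, for a positive weight vector $w$,
\[
    g_i(w):=m_i^\top\!\left(M^\top W^{1-2/p}M\right)^{-1}m_i .
\]
The leverage score of row $i$ of $W^{1/2-1/p}M$ equals $w_i^{1-2/p}g_i(w)$. Hence the exact update is $w_i\gets(w_i^{2/p-1}\cdot w_i^{1-2/p}g_i(w))^{p/2}=g_i(w)^{p/2}$. By \eqref{eq:lewis-fixed-point}, the Lewis weights $w^*$ are exactly the fixed points $w^*_i=g_i(w^*)^{p/2}$. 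With $c$-approximate leverage scores, the update becomes $w_i\gets \theta_i\, g_i(w)^{p/2}$ for some $\theta_i\in[c^{-p/2},c^{p/2}]$.

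The contraction step goes as follows. Suppose $\alpha^{-1}w^*\le w\le\alpha w^*$ entrywise. Since $1-2/p\in[-1,0)$ for $1\le p<2$, the map $t\mapsto t^{1-2/p}$ is decreasing. This gives the Loewner bounds
\[
    \alpha^{-(2/p-1)}M^\top W^{*\,1-2/p}M\preceq M^\top W^{1-2/p}M\preceq\alpha^{2/p-1}M^\top W^{*\,1-2/p}M .
\]
Inverting flips these inequalities, so $g_i(w)$ lies within a factor $\alpha^{2/p-1}$ of $g_i(w^*)$. Raising to the power $p/2$, the new iterate is within a factor $c^{p/2}\alpha^{1-p/2}$ of $w^*$. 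Writing $\alpha_t$ for the approximation factor after round $t$ and $\gamma:=1-p/2\le1/2$, we get
\[
    \log\alpha_{t+1}\le\gamma\log\alpha_t+\tfrac p2\log c .
\]
Unrolling, $\log\alpha_T\le\gamma^T\log\alpha_0+O_p(\log c)$. So $T=O_p(\log\log\alpha_0)$ rounds suffice for a constant-factor approximation. This reduces the lemma to showing that the starting point, possibly after one or two rounds, satisfies $\log\alpha_0\le\operatorname{poly}(N)$, i.e.\ $\alpha_0\le\exp(\operatorname{poly}(N))$. In fact I would aim for $\alpha_0\le N^{O_p(1)}$, which gives $T=O_p(\log\log N)$.

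I expect the initialization bound to be the main obstacle. The upper side is easy. Each $w^*_i$ is a leverage score, so $w^*_i\le1$. Then $W^{*\,1-2/p}\succeq I$, which gives $g_i(w^*)\le\tau_i(M)\le1$. Also the all-ones start dominates $w^*$, and monotonicity of $g$ together with the homogeneity $g(\beta w)=\beta^{2/p-1}g(w)$ keeps later iterates controlled from above. The lower side is the delicate one, because individual Lewis weights can be tiny.

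For the lower side I would first try to compare both $w^*$ and the first iterate $\tau(M)^{p/2}$ against leverage scores. I would use the standard bounds relating Lewis weights and leverage scores for $p<2$, of the form $w^*_i\ge N^{-O(1)}\tau_i(M)^{O_p(1)}$, obtained from $\sum_iw^*_i=r$ and the fixed-point identity. Rows with extremely small leverage would then be handled relative to one another rather than absolutely. If a pure polynomial bound fails, the fallback is to follow Cohen--Peng's own analysis~\cite{CP15}, which tracks the iteration in the weighted metric where the contraction above applies uniformly. In either case the conclusion is that $T=O_p(\log\log(N+2))$ iterations give a constant-factor approximation.
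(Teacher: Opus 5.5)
Your contraction step is correct, and it is the core of the Cohen--Peng argument that the paper imports by citation without reproving it. The exact update is $w\mapsto g(w)^{p/2}$, and it maps an $\alpha$-approximation to a $c^{p/2}\alpha^{1-p/2}$-approximation. The gap is the initialization, which you leave unproved. This is exactly where the $\log\log(N+2)$ count has to come from, rather than a dependence on $\min_i w_i^*$.

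As you observe, $\mathbf{1}$ is not an $N^{O(1)}$-approximation: for $r=1$ the Lewis weights are $|m_i|^p/\sum_j|m_j|^p$, which can be arbitrarily small. The route you sketch does not close this. The first iterate equals $\tau_i(M)^{p/2}$ up to a factor $c^{p/2}$, and $\tau_i(M)$ can itself be arbitrarily small. So a bound $w_i^*\ge N^{-O(1)}\tau_i(M)^{\kappa}$ gives a polynomially bounded ratio only if $\kappa=p/2$ exactly. The constraint $\sum_iw_i^*=r$ gives no per-row lower bound, and treating small-leverage rows ``relative to one another'' is not an argument. Falling back on Cohen--Peng's own analysis just restates the citation.

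The missing estimate is the Loewner sandwich
\[
 M^\top M\preceq G_*:=M^\top W_*^{1-2/p}M\preceq N^{2/p-1}M^\top M .
\]
The left half is your observation $w^*\le\mathbf{1}$. For the right half:
\begin{enumerate}
\item Fix $x\ne0$ and set $v_i=m_i^\top x$ and $Q=x^\top G_*x>0$.
\item Cauchy--Schwarz and the fixed-point identity give $v_i^2\le(m_i^\top G_*^{-1}m_i)\,Q=(w_i^*)^{2/p}Q$.
\item Hence $s_i:=(w_i^*)^{-2/p}v_i^2\in[0,Q]$, and $\Sigma:=\sum_is_i\le NQ$.
\item Note $Q=\sum_iw_i^*s_i$ and $\norm{Mx}_2^2=\sum_i(w_i^*)^{2/p}s_i$. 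Jensen's inequality for $t\mapsto t^{2/p}$ with weights $s_i/\Sigma$ gives $\norm{Mx}_2^2\ge Q^{2/p}\Sigma^{1-2/p}$.
\item Since $1-2/p\le0$ and $\Sigma\le NQ$, this is at least $N^{1-2/p}Q$.
\end{enumerate}

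Inverting the sandwich yields
\[
 N^{p/2-1}\tau_i(M)^{p/2}\le w_i^*\le\tau_i(M)^{p/2},
\]
so the first iterate is already a $c^{p/2}N^{1-p/2}$-approximation. Starting your recursion at $t=1$ then gives
\[
 \log\alpha_T\le(1-p/2)^{T-1}\log\alpha_1+\log c .
\]
Because $\log\alpha_1=O(\log c+\log N)$ and $1-p/2\le1/2$, this gives $\alpha_T\le ec$ for some $T=O(\log\log(N+2))$. With this estimate inserted, your proof is complete.
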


\begin{lemma}[Lewis normalization]
\label{lem:lewis-position}
Let $M\in\R^{N\times r}$ have full column rank, let $1\le p\le2$, and let
$m_i^\top$ and $w_i$ be its rows and Lewis weights.
There exist an invertible matrix $J\in\R^{r\times r}$ and unit vectors
$y_i\in\R^r$ such that
\begin{equation}\label{eq:lewis-factorization}
 m_i=w_i^{1/p}Jy_i,\qquad
 \sum_iw_iy_iy_i^\top=I_r,\qquad
 \sum_iw_i=r.
\end{equation}
\end{lemma}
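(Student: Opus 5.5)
We take $W=\diag(w_1,\ldots,w_N)$ with the Lewis weights $w_i>0$ of \cref{def:lewis}, and define the $r\times r$ matrix
\[
 G:=M^\top W^{1-2/p}M ,
\]
which is positive definite because $M$ has full column rank and all $w_i>0$. The plan is to set $J:=G^{1/2}$ (symmetric, invertible) and
\[
 y_i:=w_i^{-1/p}G^{-1/2}m_i .
\]
With these choices the factorization $m_i=w_i^{1/p}Jy_i$ holds by construction, and it remains to check the other two identities.

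First, the $y_i$ are unit vectors. We compute
\[
 \norm{y_i}_2^2=w_i^{-2/p}\,m_i^\top G^{-1}m_i ,
\]
and the fixed-point identity \eqref{eq:lewis-fixed-point} says exactly that $m_i^\top G^{-1}m_i=w_i^{2/p}$. Hence $\norm{y_i}_2=1$.

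Second, the isotropy identity. We compute
\[
 \sum_i w_iy_iy_i^\top
 =G^{-1/2}\Bigl(\sum_i w_i^{1-2/p}m_im_i^\top\Bigr)G^{-1/2}
 =G^{-1/2}\,G\,G^{-1/2}
 =I_r ,
\]
using that $\sum_i w_i^{1-2/p}m_im_i^\top=M^\top W^{1-2/p}M=G$.

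Third, the trace identity $\sum_iw_i=r$ is \eqref{eq:lewis-total}. It can also be recovered directly by taking the trace of the isotropy identity: since each $y_i$ is a unit vector, $\operatorname{tr}(w_iy_iy_i^\top)=w_i$, so $\sum_iw_i=\operatorname{tr}(I_r)=r$.

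The only point requiring care is existence and positivity of the Lewis weights, so that $W^{1-2/p}$ and $w_i^{-1/p}$ make sense. For $1\le p\le2$ this is given by Cohen--Peng as cited, once zero rows are removed as stipulated in this subsection. Beyond that the argument is a direct verification with no real obstacle.
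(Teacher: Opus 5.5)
Your proposal is correct and is essentially the paper's proof verbatim. It uses the same choices $G=M^\top W^{1-2/p}M$, $J=G^{1/2}$, and $y_i=w_i^{-1/p}G^{-1/2}m_i$, gets unit norm from \eqref{eq:lewis-fixed-point}, isotropy from $\sum_i w_i^{1-2/p}m_im_i^\top=G$, and $\sum_i w_i=r$ by taking traces, and your positivity remark (deleting zero rows so every $w_i>0$) matches the convention the paper invokes at the start of its proof.
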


\begin{proof}
By the convention at the beginning of this subsection, $M$ has no zero
rows and every $w_i$ is positive. Set
\[
    G=M^\top W^{1-2/p}M,
    \qquad J=G^{1/2},
    \qquad y_i=w_i^{-1/p}G^{-1/2}m_i.
\]
Since $M$ has full column rank and $W$ is positive diagonal, $G$ is positive
definite. Thus $G^{1/2}$ and $G^{-1/2}$ are well defined. By
\eqref{eq:lewis-fixed-point},
\[
    \norm{y_i}_2^2
    =w_i^{-2/p}m_i^\top G^{-1}m_i
    =1,
\]
and the definition of $y_i$ gives
$m_i=w_i^{1/p}Jy_i$. Moreover,
\[
    \sum_iw_iy_iy_i^\top
    =G^{-1/2}\left(
       \sum_iw_i^{1-2/p}m_im_i^\top
     \right)G^{-1/2}
    =G^{-1/2}GG^{-1/2}
    =I_r.
\]
Taking traces and using $\norm{y_i}_2=1$ gives
$\sum_iw_i=r$. This proves \eqref{eq:lewis-factorization}.
\end{proof}

The next vector-valued sensitivity estimate is the offline $p\le2$
specialization of the online bound in~\cite[Lemma~8.15]{WY23}. We include a
short proof because the offline Lewis normalization makes the argument
particularly direct.

\begin{lemma}[Lewis weights bound $(p,2)$-sensitivities]
\label{lem:lewis-p2-sensitivity}
Let $M\in\R^{N\times r}$ have full column rank, let $1\le p\le2$, and let
$m_i^\top$ and $w_i$ be its rows and Lewis weights. For every $s\ge1$ and
every matrix $T\in\R^{s\times r}$,
\begin{equation}\label{eq:lewis-sensitivity}
  \norm{T m_i}_2^p\le w_i\sum_j\norm{T m_j}_2^p.
\end{equation}
\end{lemma}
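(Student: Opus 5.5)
The plan is to reduce \eqref{eq:lewis-sensitivity} to the Lewis normalization of \cref{lem:lewis-position}, and then compare the $p$-th powers with squares using $p\le2$. Write $m_j=w_j^{1/p}Jy_j$ with unit vectors $y_j$ satisfying $\sum_jw_jy_jy_j^\top=I_r$. Set $K:=TJ\in\R^{s\times r}$ and $v_j:=Ky_j$. Then
\[
 \norm{Tm_j}_2^p=w_j\norm{v_j}_2^p ,
\]
so \eqref{eq:lewis-sensitivity} is equivalent to
\[
 w_i\norm{v_i}_2^p\le w_i\sum_jw_j\norm{v_j}_2^p .
\]
Since $w_i>0$ (zero rows were deleted by convention), it suffices to show
\[
 \norm{v_i}_2^p\le\sum_jw_j\norm{v_j}_2^p .
\]
If $K=0$ both sides vanish, so I assume $K\ne0$.

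\emph{Step 1 (quadratic identity).} Using the isotropy relation,
\[
 \sum_jw_j\norm{Ky_j}_2^2
 =\operatorname{tr}\Bigl(K\bigl(\textstyle\sum_jw_jy_jy_j^\top\bigr)K^\top\Bigr)
 =\norm{K}_F^2 .
\]
\emph{Step 2 (pointwise bound).} Because each $y_j$ is a unit vector, $\norm{v_j}_2\le\norm{K}_{\mathrm{op}}$. In particular, $\norm{v_i}_2^p\le\norm{K}_{\mathrm{op}}^p$.

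\emph{Step 3 (passing from squares to $p$-th powers).} Since $p\le2$ and $\norm{v_j}_2\le\norm{K}_{\mathrm{op}}$, we have
\[
 \norm{v_j}_2^p=\norm{v_j}_2^2\,\norm{v_j}_2^{p-2}\ge\norm{v_j}_2^2\,\norm{K}_{\mathrm{op}}^{p-2}
\]
whenever $v_j\neq0$; the inequality also holds trivially when $v_j=0$. Summing with the weights $w_j$ and using Step 1 gives
\[
 \sum_jw_j\norm{v_j}_2^p\ge\frac{\norm{K}_F^2}{\norm{K}_{\mathrm{op}}^{2-p}}\ge\norm{K}_{\mathrm{op}}^{p}\ge\norm{v_i}_2^p ,
\]
where the middle step uses $\norm{K}_F\ge\norm{K}_{\mathrm{op}}$. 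This proves the claim.

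The only delicate point is Step 3: a direct H\"older or Jensen comparison between $\sum_j w_j\norm{v_j}^p$ and $\sum_j w_j\norm{v_j}^2$ goes the wrong way, because the total weight is $r$ rather than $1$. The resolution is to use the operator-norm cap on each $\norm{v_j}$ to lower-bound $\norm{v_j}^p$ by $\norm{v_j}^2\,\norm{K}_{\mathrm{op}}^{p-2}$. This is exactly where $p\le2$ is needed. The rest is bookkeeping with the factorization from \cref{lem:lewis-position}.
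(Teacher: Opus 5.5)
Your proof is correct and follows the paper's argument: you factor $m_j=w_j^{1/p}Jy_j$ via \cref{lem:lewis-position}, set $S=TJ$ (your $K$), and combine $\norm{Sy_j}_2\le\norm{S}_{\mathrm{op}}$, $p\le2$, and $\sum_jw_j\norm{Sy_j}_2^2=\norm{S}_F^2\ge\norm{S}_{\mathrm{op}}^2$ to get $\sum_jw_j\norm{Sy_j}_2^p\ge\norm{S}_{\mathrm{op}}^p\ge\norm{Sy_i}_2^p$. Your handling of the trivial case $S=0$ and of the common factor $w_i$ also matches the paper's.
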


\begin{proof}
Use \cref{lem:lewis-position} to write
$m_j=w_j^{1/p}Jy_j$ with unit $y_j$ and
$\sum_jw_jy_jy_j^\top=I_r$, and set $S=TJ$. The claim is immediate if
$S=0$. Otherwise, since $p\le2$ and
$\norm{Sy_j}_2\le\norm{S}_{\mathrm{op}}$,
\[
 \sum_jw_j\norm{Sy_j}_2^p
 \ge \norm{S}_{\mathrm{op}}^{p-2}
      \sum_jw_j\norm{Sy_j}_2^2
 =\norm{S}_{\mathrm{op}}^{p-2}\norm{S}_F^2
 \ge\norm{S}_{\mathrm{op}}^p.
\]
The conclusion follows from
$\norm{Tm_i}_2^p=w_i\norm{Sy_i}_2^p$ and
$\sum_j\norm{Tm_j}_2^p=\sum_jw_j\norm{Sy_j}_2^p$.
\end{proof}

In particular, the usual scalar $\ell_p$ sensitivity of row $i$ satisfies
\begin{equation}\label{eq:scalar-lewis-sensitivity}
    \sigma_i^{(p)}(M)
    :=\sup_{x:Mx\ne0}
      \frac{|m_i^\top x|^p}{\norm{Mx}_p^p}
    \le w_i.
\end{equation}

\subsection{Euclidean sketches}

\begin{definition}[Oblivious subspace embedding]
A distribution over matrices $\Pi\in\R^{s\times d}$ is an
$(r,\zeta,\delta)$ oblivious subspace embedding (OSE) if, for every fixed
subspace $L\subseteq\R^d$ with $\dim L\le r$,
\[
 \Prob_\Pi\!\left[
   (1-\zeta)\norm{x}_2^2\le\norm{\Pi x}_2^2
   \le(1+\zeta)\norm{x}_2^2
   \ \text{for every }x\in L
 \right]\ge1-\delta.
\]
The subspace must be fixed independently of the draw of $\Pi$.
\end{definition}

We use the following standard Euclidean sketching facts. For any fixed
vectors $v_1,\ldots,v_N$ and $0<\zeta,\delta<1$, a
Johnson--Lindenstrauss map with
$O(\zeta^{-2}\log(N/\delta))$ rows simultaneously preserves all
$\norm{v_i}_2^2$ within a factor $1\pm\zeta$ with probability at least
$1-\delta$~\cite{JL84}. For a fixed $r$-dimensional subspace, sparse OSEs
with constant distortion have $\wtO(r)$ rows and polylogarithmic column
sparsity, including at inverse-polynomial failure probability~\cite{NN13}.
Consequently, multiplying a matrix by the sparse OSE takes input-sparsity
time up to logarithmic factors.

\subsection{Uniform convergence tools}

For a fixed query, a row sample gives an unbiased estimate of its cost.
A scalar concentration inequality controls its error. A strong coreset,
however, requires one sample to be accurate for every query, so we must
control the largest sampling error over a function class. We first record
the fixed-query estimate and then collect the uniform-convergence tools used
for this purpose. All function classes in our applications are countable,
finite, or replaced by a countable dense subclass before these tools are
applied. Thus the suprema below are measurable in every use.
Symmetrization and chaining also underlie Lewis-weight and sensitivity-sampling
analyses for $\ell_p$ objectives~\cite{CP15,MMWY22,MO24}. In the broader
coreset literature, pseudodimension-based uniform convergence is a standard
ingredient of general sensitivity-sampling frameworks~\cite{FL11,BFLSZ21}.

\paragraph{Concentration for one query.}
For a fixed positive-cost query, normalize each sampled contribution by the
true cost. A sensitivity bound then gives a random summand $X$ with
$0\le X\le K$ and $\E X=1$. Necessarily $K\ge1$, and
$\operatorname{Var}(X)\le\E X^2\le K$ and $|X-1|\le K$.
Bernstein's inequality~\cite[Corollary~2.11]{BLM13} therefore gives, for
independent copies $X_1,\ldots,X_m$ and $0<\eta\le1$,
\[
 \Prob\!\left[
   \left|\frac1m\sum_{s=1}^mX_s-1\right|>\eta
 \right]
 \le2\exp\!\left(-\frac{3m\eta^2}{8K}\right).
\]
A zero-cost query has zero contribution from every row and is preserved
exactly.

\paragraph{Empirical averages and symmetrization.}
Let $I_1,\ldots,I_m$ be independent samples from a distribution $q$ on the
row indices. For a function $f$ on the rows, write
\[
 P_mf=\frac1m\sum_{s=1}^m f(I_s),
 \qquad
 Pf=\E_{I\sim q}[f(I)].
\]
Thus $P_mf$ is the sampled average and $Pf$ is its expectation. For a
function class $\mathcal G$, its absolute Rademacher complexity is
\[
 \Rad_m^{\rm abs}(\mathcal G)
 =\E_{I,\eps}\sup_{g\in\mathcal G}
 \left|\frac1m\sum_{s=1}^m\eps_sg(I_s)\right|,
\]
where the $\eps_s$ are independent uniform signs. If the sample
$I_1,\ldots,I_m$ is fixed, then
$\Rad_m^{\rm abs}(\mathcal G\mid I_1,\ldots,I_m)$ denotes the same
quantity with expectation only over the signs. The symmetrization lemma
\cite[Lemma~2.3.1]{VW96} bounds the expected uniform sampling error by
this signed process:
\begin{equation}\label{eq:standard-symmetrization}
    \E\sup_{g\in\mathcal G}|(P_m-P)g|
    \le2\Rad_m^{\rm abs}(\mathcal G).
\end{equation}

\paragraph{Covering numbers and Dudley's bound.}
For a probability distribution $\nu$ on the row indices and real-valued
functions $f,g$ on these indices, write
\[
 \norm{f-g}_{L_2(\nu)}
 =\left(\sum_i\nu_i|f(i)-g(i)|^2\right)^{1/2}.
\]
Let $N(\delta,\mathcal G,L_2(\nu))$ be the minimum cardinality of a set
$\mathcal N\subseteq\mathcal G$ such that every $g\in\mathcal G$ satisfies
$\norm{g-\widetilde g}_{L_2(\nu)}\le\delta$ for some
$\widetilde g\in\mathcal N$. A $\delta$-packing is a subset of $\mathcal G$
whose distinct elements have $L_2(\nu)$-distance greater than $\delta$.
Every maximal $\delta$-packing is also a $\delta$-cover. For the empirical
distribution $P_m$, this distance becomes
\[
 \norm{f-g}_{L_2(P_m)}
 =\left(\frac1m\sum_{s=1}^m
       |f(I_s)-g(I_s)|^2\right)^{1/2}.
\]
Covering numbers at all scales can be combined by Dudley's chaining
argument. Applying the usual entropy integral
\cite[Corollary~13.2]{BLM13} to an $\alpha$-net and bounding the discarded
increments by Cauchy--Schwarz gives the following truncated form; see also
\cite[Lemma~A.5]{BFT17} for a direct chaining proof of the normalized
empirical Rademacher version. Set
\[
 \mathcal G^\pm=\mathcal G\cup(-\mathcal G)\cup\{0\},
 \qquad
 R_m=\sup_{g\in\mathcal G}\norm{g}_{L_2(P_m)}.
\]
If $R_m=0$, the conditional Rademacher complexity is zero. If $R_m>0$, then
for every $0<\alpha\le R_m$,
\begin{equation}\label{eq:standard-dudley}
 \Rad_m^{\rm abs}(\mathcal G\mid I_1,\ldots,I_m)
 \le
 4\alpha+\frac{C}{\sqrt m}
 \int_\alpha^{R_m}
 \sqrt{\log N(u,\mathcal G^\pm,L_2(P_m))}\,du.
\end{equation}
Here symmetrizing the class converts the absolute supremum into an ordinary
one, and adjoining zero makes $R_m$ an upper bound on the relevant covering
scales. The term $4\alpha$ is the error from stopping the chaining at scale
$\alpha$.

For real-valued classes, pseudodimension is the VC dimension of the class of
subgraphs~\cite{Hau92}. Equivalently, $\Pdim(\mathcal G)$ is the largest $v$
for which there are points
$x_1,\ldots,x_v$ and thresholds $t_1,\ldots,t_v$ such that, for every
$S\subseteq[v]$, some $g\in\mathcal G$ satisfies $g(x_j)>t_j$ for $j\in S$
and $g(x_j)\le t_j$ for $j\notin S$. The VC-subgraph covering theorem
\cite[Theorem~2.6.7]{VW96} implies that if $\mathcal G$ takes values in
$[0,1]$ and has pseudodimension $v\ge1$, then, for every probability
distribution $\nu$,
\begin{equation}\label{eq:pdim-entropy}
    \log N(\delta,\mathcal G,L_2(\nu))
    \le Cv\log(C/\delta),
    \qquad 0<\delta<1.
\end{equation}

\paragraph{Lipschitz composition.}
Our loss functions are often Lipschitz functions of several simpler scalar
coordinates. Fix the sample, and let $\{v_f\}$ be a class of functions from
the row indices to $\R^a$, with coordinate functions
$v_f=(v_{f,1},\ldots,v_{f,a})$. Suppose that
$\phi:\R^a\to\R$ satisfies $\phi(0)=0$ and
$|\phi(u)-\phi(v)|\le L\norm{u-v}_2$. Adjoin the zero map to the class if
necessary, and let $(\eps_{s,\ell})$ be independent uniform signs.
Maurer's vector-contraction inequality~\cite[Corollary~4]{Mau16}, followed
by subadditivity over the coordinates, gives
\begin{equation}\label{eq:vector-contraction}
\begin{split}
 &\E_\eps\sup_f
 \left|\frac1m\sum_{s=1}^m
       \eps_s\phi(v_f(I_s))\right|\\
 &\quad\le
 2\sqrt2L\,\E_{\eps_{s,\ell}}\sup_f
 \frac1m\sum_{s=1}^m\sum_{\ell=1}^a
 \eps_{s,\ell}v_{f,\ell}(I_s)\\
 &\quad\le
 2\sqrt2L\sum_{\ell=1}^a
 \E_\eps\sup_f
 \left|\frac1m\sum_{s=1}^m
       \eps_s v_{f,\ell}(I_s)\right|.
\end{split}
\end{equation}
The factor $2$ in the first inequality accounts for the absolute value;
Maurer's theorem gives the factor $\sqrt2$ and the doubly indexed signs.

\subsection{The Euclidean duality map}

For $x$ in a real Hilbert space, define
\[
  J_p(x)=
  \begin{cases}
    \norm{x}_2^{p-2}x,&x\ne0,\\
    0,&x=0.
  \end{cases}
\]
Thus
\[
    \ip{J_p(x)}{x}=\norm{x}_2^p,
    \qquad
    \norm{J_p(x)}_2=\norm{x}_2^{p-1}\quad(x\ne0).
\]
At $p=1$, this means $J_1(x)=x/\norm{x}_2$ off zero and $J_1(0)=0$.
If $g$ is a standard Gaussian in a finite-dimensional real Hilbert space,
then, with $\gamma_p=\E|N(0,1)|^p$,
\begin{equation}\label{eq:gaussian-projection-moment}
    \E_g|\ip{v}{g}|^p=\gamma_p\norm{v}_2^p.
\end{equation}

\section{Coresets from a Low-Rank Split for
\texorpdfstring{$1\le p<2$}{1 <= p < 2}}
\label{sec:small-p-split}

For any linear subspace $F\subseteq\R^d$, let $P_F$ be the orthogonal
projector onto $F$ and let $Q_F=I-P_F$. The query family $\Fset_k$ consists
of all such subspaces of dimension at most $k$, and the cost of
$F\in\Fset_k$ is
\[
    \cost_A(F)=\norm{AQ_F}_{p,2}^p
    =\sum_i\norm{Q_Fa_i}_2^p.
\]
Our goal is to construct a single weighted sample of the rows of $A$ that
preserves this quantity simultaneously for every $F\in\Fset_k$.

Let $F_0\subseteq\R^d$ be a bicriteria subspace satisfying
\[
    r:=\dim(F_0)=\widetilde O(k),
    \qquad
    \cost_A(F_0)\le\Gamma_0\OPT_A
\]
for a constant $\Gamma_0\ge1$. For each row $a_i$, we use a vector
$b_i\in F_0$ as a constant-factor approximation to its Euclidean projection
onto $F_0$. The sketched regression procedure in \cref{sec:algorithm}
computes these vectors without forming an ambient-dimensional projection and
guarantees
\[
    \norm{a_i-b_i}_2
    \le C\dist(a_i,F_0)
    =C\norm{Q_{F_0}a_i}_2
\]
for an absolute constant $C$. Let $e_i=a_i-b_i$, and let $B$ and $E$ have
rows $b_i$ and $e_i$, respectively. The construction also ensures that the
rows of $B$ span $F_0$, and hence
\[
    A=B+E,\qquad
    \rank(B)=r,\qquad
    R:=\norm{E}_{p,2}^p
    \le C^p\cost_A(F_0)
    \le\Gamma\OPT_A
\]
for a constant $\Gamma\ge1$. These are the only properties of the split used
in this section.

The sampling distribution combines one score for each component of
$A=B+E$. Let $w_i$ be the $\ell_p$ Lewis weight of the low-rank row $b_i$.
When $R>0$, let
$\rho_i=\norm{e_i}_2^p/R$ be the fraction of the residual mass contributed
by row $i$, and choose a distribution $q$ satisfying
\[
    q_i\gtrsim_{p,\Gamma}\frac{w_i}{k+r},
    \qquad
    q_i\gtrsim_{p,\Gamma}\rho_i.
\]
Draw $m=\widetilde O_{p,\Gamma}((k+r)\eps^{-2})$ indices
$I_1,\ldots,I_m$ independently from $q$ and retain row
$a_{I_s}$ with scaling $(mq_{I_s})^{-1/p}$. Intuitively, the Lewis-weight
score controls the contribution of the
low-rank component $Q_Fb_i$, while the residual score controls that of
$Q_Fe_i$. In \cref{sec:small-p-split}, we prove that this sampling rule
preserves all queries simultaneously. In \cref{sec:algorithm}, we construct
$F_0$, compute the regression fits $b_i$, and estimate both scores
efficiently.

\subsection{The low-rank-split theorem}
\label{sec:split-theorem}

Suppose that the decomposition $A=B+E$ satisfies
\begin{equation}\label{eq:split-assumption}
 r=\rank(B),\qquad
 R=\norm{E}_{p,2}^p\le\Gamma\,\OPT_A
\end{equation}
for some constant $\Gamma\ge1$.  Put
\[
 D=k+r.
\]

Let $w_i$ denote the Lewis weight of the nonzero row $b_i$, with $w_i=0$
when $b_i=0$. When $R>0$, put $\rho_i=\norm{e_i}_2^p/R$; when $R=0$,
put $\rho_i=0$.

\begin{theorem}[Low-rank-split coreset]\label{thm:split}
Assume $k,r\ge1$, and let $q$ be any probability distribution satisfying
\begin{equation}\label{eq:robust-dominance}
 q_i\gtrsim_{p,\Gamma}\frac{w_i}{D}+\rho_i
 \qquad (i\in[n]).
\end{equation}
Fix $0<\eps,\eta<1/2$, and form $S$ by drawing
$m=\widetilde O_{p,\Gamma}(D\eps^{-2}\eta^{-2})$ rows independently from $q$
and reweighting them as in \cref{def:iid-sampling}. Then, with probability at
least $1-\eta$,
\[
 \cost_{SA}(F)=(1\pm\eps)\cost_A(F)
 \qquad\text{simultaneously for every }F\in\Fset_k.
\]
The suppressed factors are polylogarithmic in
$n+D+\eps^{-1}+\eta^{-1}$.
\end{theorem}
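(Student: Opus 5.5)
The plan is to follow the outline of the technical overview. First I reduce to a uniform deviation bound for normalized losses. For each $F\in\Fset_k$ with $\cost_A(F)>0$ set
\[
 Z_F(i)=\frac{\norm{Q_Fa_i}_2^p}{q_i\cost_A(F)},
 \qquad\text{so that}\qquad
 \frac{\cost_{SA}(F)}{\cost_A(F)}=P_mZ_F,\quad PZ_F=1 .
\]
Zero-cost queries are preserved exactly. It then suffices to show
\[
 \E\sup_F\abs{(P_m-P)Z_F}\le \wtO_{p,\Gamma}\Bigl(\sqrt{D/m}+D/m\Bigr).
\]
Given this, Markov's inequality with $m=\wtO(D\eps^{-2}\eta^{-2})$ yields the claim with probability at least $1-\eta$.

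\textbf{Envelope.} I first record a pointwise envelope. By \cref{lem:lewis-p2-sensitivity}, applied to $B$ in a basis of its row span with $T=Q_F$ restricted there, we get $\norm{Q_Fb_i}_2^p\le w_i\sum_j\norm{Q_Fb_j}_2^p$. The triangle inequality in $\ell_{p,2}$ gives $\norm{BQ_F}_{p,2}\le \norm{AQ_F}_{p,2}+R^{1/p}$. Together with $R\le\Gamma\OPT_A\le\Gamma\cost_A(F)$, this gives $\sum_j\norm{Q_Fb_j}_2^p\lesssim_{p,\Gamma}\cost_A(F)$. Also $\norm{Q_Fe_i}_2^p\le\rho_iR\le\Gamma\rho_i\cost_A(F)$. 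Combined with \eqref{eq:robust-dominance}, the normalized vectors $x_F(i)=Q_Fb_i/(q_i\cost_A(F))^{1/p}$ and $y_F(i)=Q_Fe_i/(q_i\cost_A(F))^{1/p}$ then satisfy
\[
 \norm{x_F(i)}_2^p\lesssim D,\qquad \norm{y_F(i)}_2^p\lesssim 1,\qquad Z_F\le K=O(D).
\]
I also get second-moment bounds: $P\norm{x_F}_2^p=O(1)$ and $P\norm{y_F}_2^p=O(1)$.

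\textbf{Symmetrization and decomposition.} I apply \eqref{eq:standard-symmetrization} and then write $Z_F(i)=\phi(u_1,u_2,u_3,u_4)$ in four scalar coordinates:
\begin{itemize}
 \item $u_1=\norm{x}_2^p$;
 \item $u_2,u_3$, the parallel-part power and the signed interaction $\ip{J_p(x)}{y}$-type term, describing the component of $y$ along $x$;
 \item $u_4=\norm{y_\perp}_2^p$, the power of the part of $y$ orthogonal to $x$.
\end{itemize}
I will verify the identity $\norm{x+y}^p=\bigl(\abs{\norm{x}+y_\parallel}^2+\norm{y_\perp}^2\bigr)^{p/2}$ and check that $\phi$ is $O_p(1)$-Lipschitz with $\phi(0)=0$, suitably normalized, also in the degenerate cases $x=0$ and $p=1$. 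Then \eqref{eq:vector-contraction} reduces the problem to four scalar Rademacher complexities.

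\textbf{Bounding the four classes.} I bound each class by Dudley's integral \eqref{eq:standard-dudley}, using $L_2(P_m)$ radii controlled by the envelope.
\begin{itemize}
 \item The $u_1$ class depends on $F$ only through the restriction of $Q_F$ to the $r$-dimensional span of $B$, i.e.\ an $r\times r$ PSD operator. By the Lewis factorization of \cref{lem:lewis-position}, $u_1(i)=w_i\norm{Sy_i}_2^p/q_i\cost$, so it is a CP15-style $\ell_p$ Lewis-weight process. I will bound its covering numbers by $O(r^2\log)$ or better via a pseudodimension bound, using \eqref{eq:pdim-entropy}.
 \item The interaction and parallel classes involve $\ip{Q_Fb_i}{Q_Fe_i}$-type quantities. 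Each is a polynomial or algebraic function of $\le O(D)$ parameters once we augment the row with $e_i$ and $b_i$. Pseudodimension for classes $\{i\mapsto \text{sign pattern of polynomial in } P_F\}$ over the $(k+r+1)$-dimensional span of $F_0$, $F$, and $e_i$ gives $O(D\log D)$, independent of $d$.
 \item The orthogonal residual term $\norm{Q_Fe_i}^2$ restricted to the orthogonal complement has a subspace-entropy bound $O(kD)$ naively. I expect to need the row-augmented entropy argument to get linear-in-$D$ entropy at scale $u$ in the form $D\log(1/u)$.
\end{itemize}
With entropy $\lesssim D\log(Cm/u)$ and radius $R_m^2\lesssim D\cdot\max(1,P_m\text{-mass})$, Dudley gives $\sqrt{D\polylog/m}\cdot\sqrt{\sup_F P_m Z_F}$. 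A standard self-bounding step (solving $X\le a\sqrt{1+X}$ for $X=\sup\abs{P_m-P}$) then yields the $\sqrt{D/m}+D/m$ rate.

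\textbf{Main obstacle.} The hard step is dimension-free entropy for the classes in which $F$ acts on the high-rank residual $E$. There the naive parameter count is $kd$. I would first try the pseudodimension route: for a fixed finite set of $v$ sample rows, all relevant quantities depend on $F$ only through $P_F$ restricted to $\mathrm{span}(F_0,e_{i_1},\dots,e_{i_v})$. Warren's theorem then gives $v\lesssim k(r+v)\log$, which unfortunately is quadratic. If this fails, I would instead exploit that only $\norm{Q_Fe_i}$ and the inner products with the rank-$r$ part matter. This suggests a covering of $F$ by its $k$-dimensional trace on $F_0$ plus a Gaussian-projection (Sudakov/\eqref{eq:gaussian-projection-moment}) entropy bound scaling with $k$ rather than $d$.
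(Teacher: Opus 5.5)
Your architecture matches the paper's: the balanced envelope, symmetrization, the four coordinates $\norm{x}^p$, $|c|^p$, $p\ip{J_p(x)}{y}$, $\dist(y,\operatorname{span}\{x\})^p$, a Lipschitz decoder, and Maurer's vector contraction. The gap is that the step which must produce the rate $\sqrt{D/m}$ is missing, and the route you propose loses a factor of $D$, or at least $r$.

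\textbf{The Dudley step and the low-rank coordinate.} Your Dudley computation drops a $\sqrt D$. Entropy $D\log(Cm/u)$ with an $L_2(P_m)$ radius of order $\sqrt{D\sup_FP_mZ_F}$ gives $\Rad_m\approx D\sqrt{\log(m)/m}$, so your self-bounding step yields $m\approx D^2\eps^{-2}$.

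More fundamentally, the only coordinate whose empirical radius can genuinely be of order $\sqrt D$ is the low-rank power $T_F=\norm{x_F}^p$. For it, a uniform-entropy bound of the form $v\log(1/u)$ cannot succeed. The class $\{i\mapsto\norm{Sy_i}^p\}$ generally has pseudodimension at least of order $r$, so Dudley gives $\sqrt{vD/m}\gtrsim\sqrt{rD/m}$. That is the classical sensitivity-times-dimension bound, and it costs an extra factor of $k$.

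The paper instead proves \cref{thm:low-rank-lewis} in three moves:
\begin{enumerate}
\item Each $h_T$ is written as a Gaussian mixture $\int\ell_{\widehat x_g}\,d\mu_T(g)$ of scalar Lewis-process functions $\ell_x(i)=(w_i/q_i)|\ip{y_i}{x}|^p$.
\item By convexity, the supremum over Hilbert-valued maps $T$ is then dominated by the supremum over $x\in\R^r$ with $\mathcal L(x)=1$.
\item That scalar process is handled by the Cohen--Peng first-moment bound (Talagrand-type chaining for Lewis weights). This gives $\sqrt{K/m}+K/m$ with $K=\max_iw_i/q_i\le c^{-1}D$.
\end{enumerate}

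\textbf{The residual-dependent coordinates.} For these three coordinates you have no dimension-free entropy bound. Your pseudodimension count is circular, as you observe, so the claimed $O(D\log D)$ is unsupported. A parametric entropy $D\log(1/u)$ is also not the right target. The missing tool is a single Gaussian sketch shared by the whole class: it is applied to a finite packing, and one keeps the fraction of the packing on which it succeeds.
\begin{itemize}
\item For $V_F=t_F^p|\ip{J_1(Q_FJy_i)}{\beta_iu_i}|^p$ and $H_F=pt_F\ip{J_p(\alpha_iC_F^{-1/p}Q_FJy_i)}{\beta_iu_i}$, averaging over $L=O(\delta^{-2})$ Gaussian directions approximates each function by one that depends on $F$ only through an $L\times r$ matrix. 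This gives entropy $\widetilde O(r\delta^{-2})$, via sign-pattern counting at $p=1$ (\cref{lem:directional}) or a volumetric net for $p>1$ (\cref{lem:phase}).
\item For $W_F^\perp=t_F^p\beta_i^p\dist(u_i,F+\operatorname{span}\{b_i\})^p$, the paper uses a $\chi^2$-normalized Gaussian projection to $k+1+O(\delta^{-2})$ dimensions, followed by a Grassmann-chart pseudodimension bound. This gives $\widetilde O(k\delta^{-2})$ (\cref{thm:angular}).
\end{itemize}
Such $\delta^{-2}$ entropies cost only a logarithm in Dudley's integral, yielding $\widetilde O(\sqrt{r/m})$ and $\widetilde O(\sqrt{k/m})$. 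This works precisely because their scale constants are $O(1)$. The condition $q_i\gtrsim\rho_i$ gives $\norm{\beta_iu_i}\le c^{-1/p}$, and $t_F\le\Gamma^{1/p}$. For $H_F$, whose envelope grows like $D^{1-1/p}$, the scale is the empirical energy $E_\nu=\sup_FP_mT_F$. Its expectation is $O(1)$ by the low-rank bound itself.
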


The balanced sensitivity lemma below supplies a distribution satisfying
\eqref{eq:robust-dominance} and bounds each row's contribution to every
query. We then normalize the query loss and decompose it into four scalar
coordinate classes. Appendix~\ref{sec:auxiliary-process} proves empirical
process bounds for these classes with no dependence on $d$. In
Section~\ref{sec:split-proof}, we combine these estimates through
symmetrization and vector contraction to prove \cref{thm:split}.

If $R=0$, then $A=B$, and the conclusion follows directly from
\cref{thm:low-rank-lewis}. Hence assume $R>0$ throughout the remainder of the
proof.

\subsubsection{Balanced sensitivity}

\begin{lemma}[Balanced sensitivity for a low-rank split]
\label{lem:sensitivity}
Let $A=B+E$ satisfy \eqref{eq:split-assumption} with $R>0$, let
$\rho_i=\norm{e_i}_2^p/R$, and let $w_i$ be the Lewis weight of $b_i$, with
$w_i=0$ when $b_i=0$. With $D=k+r$, define
\begin{equation}\label{eq:balanced-score}
 s_i=w_i+D\rho_i.
\end{equation}
Then
\[
 \norm{Q_Fa_i}_2^p
 \le C_{p,\Gamma}s_i\sum_j\norm{Q_Fa_j}_2^p
 \qquad(F\in\Fset_k).
\]
Moreover, if $S=\sum_i s_i$ and $q_i=s_i/S$, then
\[
 S\le2D,\qquad
 q_i\ge\frac{w_i}{2D},\qquad
 q_i\ge\frac{\rho_i}{2}.
\]
\end{lemma}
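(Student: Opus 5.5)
The plan is to bound the two contributions of row $i$ separately, using $a_i=b_i+e_i$ and the elementary inequality $\norm{u+v}_2^p\le 2^{p-1}(\norm{u}_2^p+\norm{v}_2^p)$. This gives
\[
\norm{Q_Fa_i}_2^p\le 2^{p-1}\bigl(\norm{Q_Fb_i}_2^p+\norm{Q_Fe_i}_2^p\bigr),
\]
so it suffices to control each term by a multiple of $\cost_A(F)=\sum_j\norm{Q_Fa_j}_2^p$.

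For the low-rank term, fix a basis of $\operatorname{rowspan}(B)$, writing $b_j=Um_j$ with $U\in\R^{d\times r}$ orthonormal. After deleting zero rows, the coefficient matrix $M$ has full column rank. I then apply \cref{lem:lewis-p2-sensitivity} with $T=Q_FU$, which gives $\norm{Q_Fb_i}_2^p\le w_i\sum_j\norm{Q_Fb_j}_2^p$; zero rows satisfy this trivially. Next I compare $\sum_j\norm{Q_Fb_j}_2^p$ with the cost. Writing $b_j=a_j-e_j$, using the same convexity inequality, and using that $Q_F$ is a contraction,
\[
\sum_j\norm{Q_Fb_j}_2^p\le 2^{p-1}\bigl(\cost_A(F)+R\bigr)\le 2^{p-1}(1+\Gamma)\cost_A(F).
\]
The last step uses $R\le\Gamma\OPT_A\le\Gamma\cost_A(F)$, which holds because $F\in\Fset_k$.

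For the residual term, contraction gives $\norm{Q_Fe_i}_2^p\le\norm{e_i}_2^p=\rho_iR\le\Gamma\rho_i\cost_A(F)$. Since $D\ge1$, this is at most $\Gamma D\rho_i\cost_A(F)$. Combining the two bounds,
\[
\norm{Q_Fa_i}_2^p\le 2^{p-1}\bigl(2^{p-1}(1+\Gamma)w_i+\Gamma D\rho_i\bigr)\cost_A(F)\le C_{p,\Gamma}\,s_i\,\cost_A(F).
\]

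For the normalization statement, I use $\sum_iw_i=r$, which follows from \eqref{eq:lewis-total} applied to the nonzero rows of $B$, together with $\sum_i\rho_i=1$. These give $S=r+D\le 2D$. Hence $q_i=s_i/S\ge w_i/(2D)$, and also $q_i\ge D\rho_i/(2D)=\rho_i/2$.

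No step here is a real obstacle. The only points needing care are the reduction to a full-column-rank coefficient matrix so that \cref{lem:lewis-p2-sensitivity} applies, and checking that the Lewis weights are basis-independent, which the discussion after \eqref{eq:lewis-total} already guarantees.
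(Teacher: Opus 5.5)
Your proposal is correct and follows essentially the same route as the paper: split $a_i=b_i+e_i$ with the $2^{p-1}$ inequality, control $\norm{Q_Fb_i}_2^p$ by \cref{lem:lewis-p2-sensitivity} together with a comparison of $\norm{BQ_F}_{p,2}^p$ against $\cost_A(F)$ via $R\le\Gamma\OPT_A$, control $\norm{Q_Fe_i}_2^p$ by contraction, and use $\sum_iw_i=r$, $\sum_i\rho_i=1$ for the normalization. The only cosmetic difference is that the paper compares $\norm{BQ_F}_{p,2}$ with the cost using Minkowski's inequality in the $(p,2)$-norm, which gives the constant $(1+\Gamma^{1/p})^p$ in place of your $2^{p-1}(1+\Gamma)$. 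You also spell out the use of $D\ge1$ and the reduction to a full-column-rank coefficient matrix, both of which the paper leaves implicit.
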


\begin{proof}
For a query $F$, write
$C_F=\sum_i\norm{Q_Fa_i}_2^p$. Since
$\OPT_A\le C_F$, $R\le\Gamma\OPT_A$, and $Q_F$ is a
contraction,
\[
 \norm{BQ_F}_{p,2}
 \le\norm{AQ_F}_{p,2}+\norm{EQ_F}_{p,2}
 \le C_F^{1/p}+R^{1/p}
 \le (1+\Gamma^{1/p})C_F^{1/p}.
\]
\Cref{lem:lewis-p2-sensitivity} and the row-wise $p$-triangle inequality
give
\[
\norm{Q_Fa_i}_2^p
\le2^{p-1}\bigl(\norm{Q_Fb_i}_2^p+\norm{Q_Fe_i}_2^p\bigr)
\le2^{p-1}\bigl((1+\Gamma^{1/p})^pw_i+\Gamma\rho_i\bigr)C_F
\le C_{p,\Gamma}s_iC_F.
\]
Finally,
\[
 S=\sum_iw_i+D\sum_i\rho_i=r+D\le2D.
\]
The two lower bounds for $q_i$ follow directly from
\eqref{eq:balanced-score}.
\end{proof}

\subsubsection{Balanced normalization}

Fix a distribution $q$ satisfying \eqref{eq:robust-dominance}. There is a
constant $c=c_{p,\Gamma}>0$ such that
\begin{equation}\label{eq:q-bounds}
 q_i\ge c\frac{w_i}{D},\qquad
 q_i\ge c\rho_i.
\end{equation}
For a query $F$ with positive cost, abbreviate $Q=Q_F$ and write
\[
 C_F:=\cost_A(F)=\sum_i\norm{Qa_i}_2^p.
\]
We normalize the low-rank and residual contributions of each row by
$q_iC_F$.

The i.i.d. rescaling by $q_i^{-1/p}$ turns the Lewis scale $w_i^{1/p}$ and
the residual scale $\rho_i^{1/p}$ into the row-wise factors
\[
 \alpha_i=(w_i/q_i)^{1/p},\qquad
 \beta_i=(\rho_i/q_i)^{1/p},
\]
with zero-over-zero values set to zero.
The bounds in \eqref{eq:q-bounds} imply
\begin{equation}\label{eq:alpha-beta-bounds}
 \alpha_i^p\le c^{-1}D,\qquad
 \beta_i^p\le c^{-1}.
\end{equation}

Define the normalized residual scale
\[
 t_F:=\left(\frac{R}{C_F}\right)^{1/p}.
\]
Since $C_F\ge\OPT_A$ and $R\le\Gamma\OPT_A$, we have
$0<t_F\le\Gamma^{1/p}$. Let $L_\Gamma=1+\Gamma^{1/p}$. By the definition
of $C_F$ and the fact that orthogonal projection is nonexpansive,
\[
 \norm{AQ}_{p,2}=C_F^{1/p},
 \qquad
 \norm{EQ}_{p,2}\le\norm{E}_{p,2}
 =R^{1/p}=t_FC_F^{1/p}.
\]
Since $B=A-E$, it follows that
\begin{equation}\label{eq:low-rank-total}
 \norm{BQ}_{p,2}
 \le\norm{AQ}_{p,2}+\norm{EQ}_{p,2}
 \le (1+t_F)C_F^{1/p}
 \le L_\Gamma C_F^{1/p}.
\end{equation}

By coordinate invariance of Lewis weights and
\cref{lem:lewis-position}, there exist a matrix $J\in\R^{d\times r}$ whose
columns span $F_0$ and vectors $y_i\in\R^r$ satisfying
\begin{equation}\label{eq:canonical-lewis}
  b_i=w_i^{1/p}Jy_i,\qquad \norm{y_i}_2\le1,\qquad
  \sum_iw_iy_iy_i^\top=I_r,\qquad \sum_iw_i=r.
\end{equation}
Here $\norm{y_i}_2=1$ when $b_i\ne0$, while $w_i=0$ and $y_i=0$ when
$b_i=0$. Likewise, write
\[
 e_i=R^{1/p}\rho_i^{1/p}u_i,
\]
where $u_i$ is a unit vector whenever $\rho_i>0$ and may be chosen to be any
unit vector otherwise. With these coordinates, define
\begin{equation}\label{eq:xy}
 x_F(i)=\alpha_i C_F^{-1/p}QJy_i
 =q_i^{-1/p}C_F^{-1/p}Qb_i,
 \qquad
 y_F(i)=t_F\beta_iQu_i
 =q_i^{-1/p}C_F^{-1/p}Qe_i.
\end{equation}
Consequently,
\begin{equation}\label{eq:Z}
 Z_F(i)=\norm{x_F(i)+y_F(i)}_2^p
 =\frac{\norm{Qa_i}_2^p}{q_iC_F},
 \qquad \E_qZ_F=1,
 \qquad 0\le Z_F(i)\le C_{p,\Gamma,c}D.
\end{equation}
The last inequality follows by combining \cref{lem:sensitivity} with
\eqref{eq:q-bounds}. Thus the remainder of the proof uses the sampling
distribution only through \eqref{eq:q-bounds}.

Finally, Equations~\eqref{eq:canonical-lewis}
and~\eqref{eq:low-rank-total} give
\[
 \sum_iw_i\norm{C_F^{-1/p}QJy_i}_2^p
 =\frac{\norm{BQ}_{p,2}^p}{C_F}\le L_\Gamma^p
\]
and applying \cref{lem:lewis-p2-sensitivity} to the rows of $B$ gives, for
every $i$ with $w_i>0$,
\begin{equation}\label{eq:normalized-pointwise}
 \norm{C_F^{-1/p}QJy_i}_2^p
 =\frac{\norm{Qb_i}_2^p}{w_iC_F}
 \le\frac{\norm{BQ}_{p,2}^p}{C_F}
 \le L_\Gamma^p.
\end{equation}

\subsubsection{An exact four-coordinate decomposition}
\label{sec:four-coordinate-decomposition}

By \eqref{eq:Z}, the normalized loss is the $p$th power of
$\norm{x_F(i)+y_F(i)}_2$. Fix $F$ and $i$, and abbreviate
$x=x_F(i)$ and $y=y_F(i)$. We first split $y$ into its component parallel
to $x$ and its component orthogonal to $x$. If $x\ne0$, set
\[
 \widehat x=\frac{x}{\norm{x}_2},
 \qquad
 c_F(i)=\ip{\widehat x}{y},
 \qquad
 y^\perp=y-c_F(i)\widehat x.
\]
If $x=0$, set $\widehat x=0$, $c_F(i)=0$, and $y^\perp=y$. In either case,
$y^\perp$ is orthogonal to $x$ and
\[
 x+y=(\norm{x}_2+c_F(i))\widehat x+y^\perp.
\]
Consequently,
\begin{equation}\label{eq:parallel-orthogonal}
 \norm{x_F(i)+y_F(i)}_2^2
 =\left|\norm{x_F(i)}_2+c_F(i)\right|^2+\norm{y^\perp}_2^2.
\end{equation}
Taking the $p/2$ power in \eqref{eq:parallel-orthogonal} and using
\eqref{eq:Z} gives
\begin{equation}\label{eq:four-coordinate-decoder}
 Z_F(i)=
 \left(
  \left|\norm{x_F(i)}_2+c_F(i)\right|^2
  +\norm{y^\perp}_2^2
 \right)^{p/2}.
\end{equation}

The parallel contribution depends on the sign of $c_F(i)$. We therefore
record its two magnitudes, their signed interaction, and the orthogonal
residual power. Recall that $J_p(z)=\norm{z}_2^{p-2}z$ for $z\ne0$ and
$J_p(0)=0$. Define
\begin{equation}\label{eq:four-coordinates}
\begin{aligned}
 T_F(i)&=\norm{x_F(i)}_2^p,
 &V_F(i)&=|c_F(i)|^p,\\
 H_F(i)&=p\ip{J_p(x_F(i))}{y_F(i)},
 &W_F^\perp(i)&=\dist(y_F(i),\operatorname{span}\{x_F(i)\})^p.
\end{aligned}
\end{equation}
When $x_F(i)\ne0$,
\[
 H_F(i)
 =p\,\sgn(c_F(i))\norm{x_F(i)}_2^{p-1}|c_F(i)|;
\]
when $x_F(i)=0$, our convention $J_p(0)=0$ gives $H_F(i)=0$.
Here $T_F$ is the low-rank power, $V_F$ is the parallel residual power,
$H_F$ records the signed interaction between the two, and $W_F^\perp$ is
the orthogonal residual power.

Apply \cref{lem:composition} with
$a=\norm{x_F(i)}_2$, $b=|c_F(i)|$, and
$\sigma=\sgn(c_F(i))$, taking $\sigma=1$ when $c_F(i)=0$.
It shows that the parallel power
$|\norm{x_F(i)}_2+c_F(i)|^p$ is a $C_p$-Lipschitz function of
$(T_F(i),V_F(i),H_F(i))$ on the set of feasible coordinate triples.
The outer $\ell_{2/p}$ norm is one-Lipschitz with respect to the
$\ell_1$ distance of the parallel and orthogonal powers. Consequently,
\begin{equation}\label{eq:coordinate-lipschitz}
 |Z_F(i)-Z_G(i)|
 \le C_p\bigl(
 |T_F(i)-T_G(i)|+|V_F(i)-V_G(i)|
 +|H_F(i)-H_G(i)|+|W_F^\perp(i)-W_G^\perp(i)|
 \bigr).
\end{equation}

We next bound the empirical complexity of each coordinate class separately.

\subsubsection{Control of the coordinate processes}

\paragraph{Low-rank power.}
For $\norm{BQ_F}_{p,2}>0$, set
\[
 h_F(i)=\frac{\norm{Q_Fb_i}_2^p}{q_i\norm{BQ_F}_{p,2}^p},
 \qquad
 \lambda_F=\frac{\norm{BQ_F}_{p,2}^p}{C_F}\le L_\Gamma^p.
\]
Then $T_F=\lambda_Fh_F$. By \eqref{eq:q-bounds},
\[
 K:=\max_{w_i>0}\frac{w_i}{q_i}\le c^{-1}D.
\]
Since $0\le\lambda_F\le L_\Gamma^p$,
\[
 \Rad_m^{\rm abs}(\{T_F\})
 \le L_\Gamma^p\Rad_m^{\rm abs}(\{h_F\}).
\]
Applying \cref{thm:low-rank-lewis} therefore gives
\begin{equation}\label{eq:T-rad}
 \Rad_m^{\rm abs}(\{T_F\})
 \le\widetilde O_{p,\Gamma,c}
 \left(\sqrt{D/m}+D/m\right).
\end{equation}
If $\norm{BQ_F}_{p,2}=0$, then $T_F$ is identically zero.

\paragraph{Parallel residual power.}
Equation~\eqref{eq:xy} gives
\[
 x_F(i)=q_i^{-1/p}C_F^{-1/p}Q_Fb_i,
 \qquad
 y_F(i)=t_F\beta_iQ_Fu_i.
\]
The scalar multiplying $Q_Fb_i$ in the first identity is positive. Hence,
when $Q_Fb_i\ne0$,
\[
 \frac{x_F(i)}{\norm{x_F(i)}_2}
 =\frac{Q_Fb_i}{\norm{Q_Fb_i}_2}\in F^\perp.
\]
Recalling the definition of $c_F(i)$ and using
$\ip{v}{Q_Fu_i}=\ip{v}{u_i}$ for $v\in F^\perp$, we obtain
\begin{align*}
 c_F(i)
 &=\ip{x_F(i)/\norm{x_F(i)}_2}{y_F(i)}\\
 &=t_F\beta_i
   \ip{Q_Fb_i/\norm{Q_Fb_i}_2}{Q_Fu_i}\\
 &=t_F\beta_i
   \ip{Q_Fb_i/\norm{Q_Fb_i}_2}{u_i}.
\end{align*}
This formula also holds when $Q_Fb_i=0$ if a normalized zero vector is
interpreted as zero. Since $b_i=w_i^{1/p}Jy_i$, define
\[
 g_F(i):=
 \ip{Q_FJy_i/\norm{Q_FJy_i}_2}{\beta_i u_i}
 =\ip{J_1(Q_FJy_i)}{\beta_i u_i}.
\]
Then $c_F=t_Fg_F$ and $V_F=t_F^p|g_F|^p$.
The vectors $\beta_i u_i$ have uniformly bounded norm by
\eqref{eq:alpha-beta-bounds}. Since $t_F^p\le\Gamma$,
\[
 \Rad_m^{\rm abs}(\{V_F\})
 \le\Gamma\Rad_m^{\rm abs}(\{|g_F|^p\}).
\]
Now \cref{lem:directional}, the truncated Dudley bound
\eqref{eq:standard-dudley}, and the Lipschitz map $g\mapsto|g|^p$ give
\begin{equation}\label{eq:V-rad}
 \Rad_m^{\rm abs}(\{V_F\})
 \le\widetilde O_{p,\Gamma,c}(\sqrt{r/m}).
\end{equation}

\paragraph{Signed interaction.}
Using \eqref{eq:xy} and the same projection identity as in the parallel
residual calculation gives
\begin{equation}
 H_F(i)
 =p\,t_F\ip{J_p(\alpha_iC_F^{-1/p}Q_FJy_i)}{\beta_i u_i}.
 \label{eq:H-phase}
\end{equation}

\Cref{lem:phase} controls classes of the form
\[
 i\longmapsto p\ip{J_p(\alpha_iAy_i)}{z_i},
 \qquad A:\R^r\to\cH.
\]
In \eqref{eq:H-phase}, the corresponding choices are
\[
 A=C_F^{-1/p}Q_FJ,
 \qquad z_i=\beta_i u_i.
\]
The bounds used in the preceding paragraph give
$\norm{z_i}_2\le c^{-1/p}$ and $t_F\le\Gamma^{1/p}$. Since $t_F$ is
independent of the row index, it contributes only a constant factor to the
absolute Rademacher supremum.

Let $\nu=m^{-1}\sum_{s=1}^m\delta_{I_s}$ be the empirical law of the
sampled indices. Recall that $P_mf=m^{-1}\sum_{s=1}^m f(I_s)$, and hence
$\sum_i\nu_i f(i)=P_mf$. Under the choices above, the empirical $p$-energy
in \cref{lem:phase} is
\begin{equation}\label{eq:empirical-energy}
 E_\nu=\sup_F\sum_i\nu_i
 \norm{\alpha_iC_F^{-1/p}Q_FJy_i}_2^p
 =\sup_F P_mT_F.
\end{equation}
Equations \eqref{eq:alpha-beta-bounds} and
\eqref{eq:normalized-pointwise} imply, pointwise in $i$ and $F$,
\[
 \norm{\alpha_iC_F^{-1/p}Q_FJy_i}_2^p
 \le C_{p,\Gamma,c}D.
\]
Thus $E_\nu\le C_{p,\Gamma,c}D$ deterministically. We first bound every
logarithmic factor involving $E_\nu$ by a polylogarithmic function of $D+2$
and absorb it into the $\widetilde O$ notation. Applying
\cref{cor:phase-rad} now gives, for $p>1$,
\begin{equation}\label{eq:H-conditional}
\Rad_m^{\rm abs}(\{H_F\}\mid I_1,\ldots,I_m)
 \le\widetilde O_{p,\Gamma,c}
 \left(\left(E_\nu^{\,1-1/p}
       +E_\nu^{\,(1-1/p)/2}\right)\sqrt{r/m}\right)
\end{equation}
At $p=1$, the endpoint statement of \cref{cor:phase-rad} directly gives
$\widetilde O_{1,\Gamma,c}(\sqrt{r/m})$.

For $p>1$, it remains to control the random empirical $p$-energy $E_\nu$ in
expectation.
For every query with
$\norm{BQ_F}_{p,2}>0$, the normalized low-rank contribution satisfies
$\E_qh_F=1$ and $T_F=\lambda_Fh_F$, where
$\lambda_F\le L_\Gamma^p$. If $\norm{BQ_F}_{p,2}=0$, then $T_F=0$.
Consequently,
\[
 E_\nu
 \le L_\Gamma^p\left(
  1+\sup_{\norm{BQ_F}_{p,2}>0}|P_mh_F-1|
 \right).
\]
Symmetrization and \cref{thm:low-rank-lewis} imply that, for
$m\ge\widetilde\Omega_{p,\Gamma,c}(D)$,
\[
 \E\sup_{\norm{BQ_F}_{p,2}>0}|P_mh_F-1|
 \le 2\Rad_m^{\rm abs}(\{h_F\})
 \le\widetilde O_{p,c}
 \left(\sqrt{D/m}+D/m\right)
 \le C_{p,\Gamma,c}.
\]
It follows that
\begin{equation}\label{eq:energy-expectation}
 \E E_\nu
 \le C_{p,\Gamma,c}.
\end{equation}
Both exponents of $E_\nu$ in \eqref{eq:H-conditional} lie in $(0,1)$.
After the deterministic treatment of the logarithmic factors above, Jensen's
inequality applies to these pure powers. Together with
\eqref{eq:energy-expectation}, it gives
\begin{equation}\label{eq:H-rad}
 \E\Rad_m^{\rm abs}(\{H_F\}\mid I_1,\ldots,I_m)
 \le\widetilde O_{p,\Gamma,c}(\sqrt{r/m}).
\end{equation}
Together with the endpoint estimate above, this proves \eqref{eq:H-rad} for
the full range $1\le p<2$.

\paragraph{Orthogonal residual power.}
By definition,
$W_F^\perp(i)=\dist(y_F(i),\operatorname{span}\{x_F(i)\})^p$.
Equation~\eqref{eq:xy} shows that
$\operatorname{span}\{x_F(i)\}=\operatorname{span}\{Q_Fb_i\}$ and
$y_F(i)=t_F\beta_iQ_Fu_i$. Therefore
\begin{equation}\label{eq:Wperp-distance}
 W_F^\perp(i)
 =t_F^p\beta_i^p
  \dist(Q_Fu_i,\operatorname{span}\{Q_Fb_i\})^p.
\end{equation}
The distance in \eqref{eq:Wperp-distance} can be written without projected
vectors:
\begin{align*}
 \dist(Q_Fu_i,\operatorname{span}\{Q_Fb_i\})
 &=\inf_{a\in\R}\norm{Q_F(u_i-a b_i)}_2\\
 &=\inf_{a\in\R,\,f\in F}\norm{u_i-a b_i-f}_2\\
 &=\dist(u_i,F+\operatorname{span}\{b_i\}).
\end{align*}
Consequently,
\[
 W_F^\perp(i)
 =t_F^p\beta_i^p
  \dist(u_i,F+\operatorname{span}\{b_i\})^p.
\]
After rescaling this class to have envelope at most one,
\cref{cor:angular} and the truncated Dudley bound
\eqref{eq:standard-dudley} give
\begin{equation}\label{eq:Wperp-rad}
 \Rad_m^{\rm abs}(\{W_F^\perp\})
 \le\widetilde O_{p,\Gamma,c}(\sqrt{k/m}).
\end{equation}
The $O_{p,\Gamma,c}(m^{-1})$ truncation terms for $V_F$ and $W_F^\perp$
are absorbed into their square-root bounds. The $D/m$ term in
\eqref{eq:T-rad} remains because the low-rank class has envelope of order
$D$.

\subsubsection{Proof of \texorpdfstring{\Cref{thm:split}}{the low-rank-split theorem}}
\label{sec:split-proof}

For $m\ge\widetilde\Omega_{p,\Gamma,c}(D)$, the symmetrization bound
\eqref{eq:standard-symmetrization}, with the measurability convention from
\cref{sec:prelim}, gives
\[
 \E\sup_{\substack{F\in\Fset_k\\ C_F>0}}|P_mZ_F-1|
 \le2\E_{I,\eps}\sup_{\substack{F\in\Fset_k\\ C_F>0}}
 \left|\frac1m\sum_s\eps_sZ_F(I_s)\right|.
\]
For each $F$ and $i$, collect the four coordinates into
\[
 v_F(i)=\bigl(T_F(i),V_F(i),H_F(i),W_F^\perp(i)\bigr)\in\R^4.
\]
On the set of feasible coordinate vectors, let $\phi$ be the decoder
satisfying $\phi(v_F(i))=Z_F(i)$. It maps the origin to zero, and
\eqref{eq:coordinate-lipschitz} gives
\[
 |\phi(u)-\phi(v)|
 \le C_p\norm{u-v}_1
 \le 2C_p\norm{u-v}_2.
\]
Extend $\phi$ to all of $\R^4$ if necessary without increasing its
Lipschitz constant. Applying Maurer's vector-contraction inequality
\eqref{eq:vector-contraction} and then separating the four coordinates gives
\begin{align*}
 &\E_{I,\eps}\sup_{\substack{F\in\Fset_k\\ C_F>0}}
 \left|\frac1m\sum_s\eps_sZ_F(I_s)\right|\\
 &\quad\le C_p\Bigl(
 \Rad_m^{\rm abs}(\{T_F\})
 +\Rad_m^{\rm abs}(\{V_F\})
 +\E_I\Rad_m^{\rm abs}(\{H_F\}\mid I_1,\ldots,I_m)
 +\Rad_m^{\rm abs}(\{W_F^\perp\})
 \Bigr).
\end{align*}
Substituting \eqref{eq:T-rad}, \eqref{eq:V-rad}, \eqref{eq:H-rad}, and
\eqref{eq:Wperp-rad}, and recalling that $D=k+r$, yields
\begin{equation}\label{eq:uniform-error}
 \E\sup_{\substack{F\in\Fset_k\\ C_F>0}}|P_mZ_F-1|
 \le\widetilde O_{p,\Gamma,c}
 \left(\sqrt{\frac{k+r}{m}}+\frac{k+r}{m}\right).
\end{equation}
For every fixed sample, the normalized loss $F\mapsto P_mZ_F$ is continuous
on each positive-cost stratum, so the countable-dense-subclass convention
indeed gives the displayed supremum. Zero-cost queries satisfy
$AQ_F=SAQ_F=0$ and are handled separately.

Taking $m=\widetilde O_{p,\Gamma}(D\eps^{-2}\eta^{-2})$ makes the
right-hand side of \eqref{eq:uniform-error} at most $\eta\eps$ after
adjusting the hidden constant. Markov's inequality therefore gives, with
probability at least $1-\eta$,
\begin{equation}\label{eq:power-preservation}
 1-\eps\le P_mZ_F\le1+\eps
 \qquad(F\in\Fset_k).
\end{equation}
By \cref{def:iid-sampling}, the sampled matrix $SA$ has at most $m$
nonzero rows, and
\[
 P_mZ_F
 =\frac{\norm{SAQ_F}_{p,2}^p}{\norm{AQ_F}_{p,2}^p}
\]
for every positive-cost query, proving the required $1\pm\eps$ cost
guarantee.\hfill\qedsymbol

\section{Input-Sparsity-Time Construction for
\texorpdfstring{$1\le p<2$}{1 <= p < 2}}
\label{sec:algorithm}

In this section, we give an efficient implementation of the sampling procedure in
\cref{thm:split}.
In the proof of \cref{thm:main-small-p}, we apply it to the
Woodruff--Yasuda coreset $A_0$ from
\cref{thm:WY-presparsification}. Since $A_0$ has
$\operatorname{poly}_p(k/\eps)$ rows, the resulting logarithmic factors depend
only on $k$ and $\eps$, rather than on the original row count $n$.

\begin{proposition}[Implementing the low-rank-split sampling procedure]
\label{prop:small-p-implementation}
Fix $1\le p<2$. Given $A\in\R^{N\times d}$, $1\le k<d$, and
$0<\eps<1/2$, there is a randomized algorithm running in
\[
 \wtO_p\!\left(\nnz(A)+k^\omega+k\eps^{-2}\right)
\]
time that returns a $(1\pm\eps)$ strong row coreset with probability at least
$5/6$. Its support size is at most
\begin{equation}\label{eq:small-p-N-bound}
 C_pk\eps^{-2}
 \left(\log\frac{C_p(N+k)}{\eps}\right)^{C_p}.
\end{equation}
The suppressed factors in the running time are polylogarithmic in
$N+d+\eps^{-1}$.
\end{proposition}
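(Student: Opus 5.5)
The plan is to implement the three ingredients of \cref{thm:split}: a bicriteria subspace $F_0$, a split $A=B+E$, and constant-factor estimates of the scores $w_i$ and $\rho_i$. Once these are in hand, we sample $m=\wtO_p(D\eps^{-2})$ rows with a constant $\eta$ (say $\eta=1/12$), and a union bound over the remaining failure events brings the total failure probability below $1/6$.

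\emph{Step 1 (bicriteria subspace).} Run \cref{lem:WY-bicriteria} with $\delta=1/24$. This returns a set $\mathcal J$ of $r\le C_pk(\log C_pN)^{C_p}$ rows spanning $F_0$ with $\cost_A(F_0)\le\Gamma_0\OPT_A$, in time $\wtO_p(\nnz(A)+r^\omega)$. Let $M\in\R^{r\times d}$ be the matrix of these rows, so $F_0=\operatorname{rowspan}(M)$; if the rows are dependent, we first extract a basis by a QR factorization of a sketch.

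\emph{Step 2 (sketched regression).} We want $b_i=x_i^\top M$ with $\norm{a_i-b_i}_2\le C\dist(a_i,F_0)$ for every $i$. Apply a sparse OSE $\Pi\in\R^{\wtO(r)\times d}$ on the right, i.e.\ to the columns. Since it is oblivious to the fixed $(r+1)$-dimensional subspace $\operatorname{span}(F_0\cup\{a_i\})$, with inverse-polynomial failure probability it simultaneously preserves the norms of all vectors in $\operatorname{span}(F_0,a_i)$ up to constant factors, and a union bound over $i$ handles all rows. Then set $x_i=\arg\min_x\norm{(a_i^\top-x^\top M)\Pi^\top}_2$. Computing $A\Pi^\top$ takes $\wtO(\nnz(A))$ time, $M\Pi^\top$ is $r\times\wtO(r)$, and each regression is a product with a precomputed pseudoinverse. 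The OSE guarantee gives $\norm{a_i-b_i}_2\le C\dist(a_i,F_0)$, and thus $R\le C^p\Gamma_0\OPT_A$. To ensure $\operatorname{rowspan}(B)=F_0$, note that the rows in $\mathcal J$ fit themselves exactly, so $b_j=a_j$ for $j\in\mathcal J$. We never form $B$ explicitly; we only keep $X\in\R^{N\times r}$ with rows $x_i^\top$ (computing $X$ costs $\wtO(Nr\cdot r)$ naively, which is too slow; see the obstacle below).

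\emph{Step 3 (scores).} The Lewis weights of $B=XM$ equal those of $XG$ for any $G$ with $MM^\top=GG^\top$, so we work with the $N\times r$ matrix $X$ (Lewis weights are invariant under invertible column maps). We run \cref{alg:cp-iteration} for $O_p(\log\log N)$ rounds, approximating leverage scores in each round by the usual sketch-then-JL method; \cref{lem:cp-iteration} yields constant-factor $\widetilde w_i$. For the residuals, $\norm{e_i}_2=\norm{a_i-x_i^\top M}_2$ is estimated for all $i$ at once by a JL map $G'\in\R^{O(\log N)\times d}$: compute $AG'^\top-X(MG'^\top)$. This gives constant-factor $\widetilde\rho_i$ after normalizing. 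We then set $q_i\propto\widetilde w_i+D\widetilde\rho_i$. Constant-factor errors only change the implicit constant in \eqref{eq:robust-dominance}, and by \cref{lem:sensitivity} the total remains $O(D)$. Sampling $m$ indices takes $O(N+m)$ time. Finally, apply \cref{thm:split}; the size bound \eqref{eq:small-p-N-bound} follows from $D=k+r$ and the polylogarithmic factors in $N+D+\eps^{-1}$.

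\emph{Main obstacle.} The hard part is the running time: naively $X=A\Pi^\top(M\Pi^\top)^\dagger$ costs $N\cdot\wtO(r)^2$, which exceeds $\wtO(\nnz(A))$. I would avoid materializing $X$. For leverage scores of $W^{1/2-1/p}X$, it suffices to compute $R^{-1}$ from a sketch of $W^{1/2-1/p}X$. That sketch equals $S'W^{1/2-1/p}(A\Pi^\top)(M\Pi^\top)^\dagger$, which costs $\wtO(\nnz(A)+r^\omega)$ when $S'$ is a sparse embedding. Individual row norms are then obtained by a JL projection applied to the right, so each round costs $\wtO(\nnz(A)+r^\omega)$. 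Likewise, residual estimates only require $X$ multiplied by an $r\times O(\log N)$ matrix, which can be computed as $A\Pi^\top$ times a small matrix. I would verify carefully that the $(M\Pi^\top)^\dagger$ factor composes correctly (the regression solution is linear in $a_i$), and that $r^\omega=\wtO_p(k^\omega)$.
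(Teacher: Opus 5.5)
Your proposal is correct and follows the paper's proof essentially step for step. It uses the bicriteria subspace from \cref{lem:WY-bicriteria}, a sparse right OSE with a union bound over the $N$ subspaces $F_0+\operatorname{span}\{a_i\}$, exact fitting of the rows of $\mathcal J$ to get $\rank(B)=r$, JL residual-norm estimates, and the Cohen--Peng iteration run through the factorization via $A\Pi^\top$ so that the $N\times r$ coordinate matrix is never formed. Your coefficient matrix $X=A\Pi^\top(M\Pi^\top)^\dagger$ differs from the paper's $C=A\Pi^\top V$ only by an invertible $r\times r$ factor, so the two have the same Lewis weights. Applying JL to $e_i$ directly rather than to $\Pi e_i$ is an equally valid variant.
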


\begin{proof}
We describe the construction and indicate where each logarithmic factor
enters. The claim is immediate when $A=0$, so assume otherwise. Apply
\cref{lem:WY-bicriteria} with failure probability $1/24$. It returns a set
$\mathcal J$ of
\[
 m_0\le C_pk\log^{C_p}(N+2)
\]
rows whose span $F_0$ has dimension $r_0\le m_0$ and satisfies
\begin{equation}\label{eq:implementation-bicriteria}
 \cost_A(F_0)\le\Gamma_0\OPT_A
\end{equation}
for a constant $\Gamma_0$ depending only on $p$. The construction takes
$\wtO_p(\nnz(A)+m_0^\omega)$ time.

We next construct the split used in \cref{thm:split}. Draw a sparse right OSE
$\Pi\in\R^{s\times d}$ with $s=\wtO(m_0)$ and inverse-polynomial failure
probability. Each of the $N$ subspaces
\[
 F_0+\operatorname{span}\{a_i\},
 \qquad i\in[N],
\]
has dimension at most $r_0+1$, where $a_i^\top$ denotes the $i$th row of
$A$. A union bound therefore shows that, with high probability, $\Pi$ has
constant distortion on all of them simultaneously.
Condition on this event, and let the columns of
$V\in\R^{s\times r_0}$ form an orthonormal basis for $\Pi F_0$. Define
$b_i$ by the sketched least-squares regression
\[
 b_i=\operatorname*{argmin}_{b\in F_0}\norm{\Pi(a_i-b)}_2,
 \qquad e_i=a_i-b_i.
\]
The OSE is injective on $F_0$, so the minimizer is unique. Since $VV^\top$
is the orthogonal projector onto $\Pi F_0$, the regression condition is
equivalently
\begin{equation}\label{eq:implementation-fit}
 \Pi b_i=VV^\top\Pi a_i.
\end{equation}
If $b_i^*=P_{F_0}a_i$, sketched optimality and the OSE inequalities give
\[
 \norm{e_i}_2
 \le C\norm{a_i-b_i^*}_2.
\]
Consequently, for the matrices $B,E$ with rows $b_i,e_i$,
\[
 A=B+E,\qquad \rank(B)=r_0,\qquad
 \norm{E}_{p,2}^p\le C_p\cost_A(F_0)\le\Gamma_1\OPT_A.
\]
The equality $\rank(B)=r_0$ follows because the selected rows spanning $F_0$
are fitted exactly.

It remains to estimate the two parts of the balanced score. Set
$H=A\Pi^\top$. By \eqref{eq:implementation-fit}, the $i$th row of
$H(I-VV^\top)$ is
\[
 a_i^\top\Pi^\top(I-VV^\top)
 =\bigl((I-VV^\top)\Pi a_i\bigr)^\top
 =(\Pi e_i)^\top.
\]
Since $e_i\in F_0+\operatorname{span}\{a_i\}$, the OSE event gives
$c\norm{e_i}_2\le\norm{\Pi e_i}_2\le C\norm{e_i}_2$ for every $i$, where
$c,C>0$ are absolute constants. Applying a
Johnson--Lindenstrauss map on the right preserves the norms of these $N$
rows simultaneously up to a constant factor. Raising the resulting row-norm
estimates to the $p$th power gives values $\widetilde r_i$ satisfying
\begin{equation}\label{eq:implementation-residual-estimates}
 c_p\norm{e_i}_2^p\le \widetilde r_i\le C_p\norm{e_i}_2^p
 \qquad\text{for every }i\in[N].
\end{equation}
We next compute the Lewis weights of $B$ in $r_0$ dimensions. Consider the
coordinate map
\[
 T:F_0\longrightarrow\R^{r_0},
 \qquad T(x)=V^\top\Pi x.
\]
This map is invertible because $\Pi$ is injective on $F_0$ and the columns of
$V$ form a basis for $\Pi F_0$. Moreover,
\[
 T(b_i)=V^\top\Pi b_i=V^\top\Pi a_i=(HV)_{i,*}^\top.
\]
Thus the matrix
\[
 C=HV\in\R^{N\times r_0}
\]
has $T(b_i)^\top$ as its $i$th row. Since the rows $b_i$ span $F_0$ and $T$
is invertible, $C$ has full column rank. By the coordinate invariance of
Lewis weights in \cref{def:lewis}, $C$ and $B$ have the same Lewis weights.

Run the Cohen--Peng iteration from \cref{lem:cp-iteration} on $C$. At
iteration $t$, let $W_t$ be the current diagonal row-weight matrix. Apply the
input sparsity time leverage score algorithm (e.g.,~\cite{CW13}) to
$W_t^{1/2-1/p}C$. Using the factorization
$C=HV=A\Pi^\top V$, the routine can be implemented without forming the dense
matrix $C$. Together with
\cref{lem:cp-iteration}, this yields simultaneous constant-factor estimates
$\widetilde w_i$ of the Lewis weights. Fresh sketches and a union bound cover the
$O_p(\log\log(N+2))$ iteration calls.

Using the residual estimates from
\eqref{eq:implementation-residual-estimates}, set
\[
 \widetilde\rho_i
 =\frac{\widetilde r_i}{\sum_j\widetilde r_j},
 \qquad
 \widetilde q_i
 =\frac{\widetilde w_i+(k+r_0)\widetilde\rho_i}
 {\sum_j\bigl(\widetilde w_j+(k+r_0)\widetilde\rho_j\bigr)}
\]
when the residual mass is positive. If the residual mass is zero, put
$\widetilde\rho_i=0$ and normalize only the $\widetilde w_i$ in the definition
of $\widetilde q_i$. Constant-factor accuracy and
$\sum_iw_i=r_0$ imply
\[
 \widetilde q_i\gtrsim_p\frac{w_i}{k+r_0},
 \qquad
 \widetilde q_i\gtrsim_p\rho_i.
\]
Thus \cref{thm:split} applies with $\eta=1/12$.

Finally, $r_0\le C_pk\log^{C_p}(N+2)$. Expanding the logarithmic factors in
\cref{thm:split} and increasing $C_p$ gives the support bound
\eqref{eq:small-p-N-bound}. Assigning total failure probability $1/24$ to the
sketching routines, the bicriteria, sketching, and sampling events fail with
total probability at most $1/24+1/24+1/12=1/6$.

We next analyze the running time of the remaining steps. Applying the sparse
right OSE to $A$ takes $\wtO(\nnz(A))$ time, and computing the basis $V$ takes
$\wtO(m_0^\omega)$ time because $s=\wtO(m_0)$. Computing the residual
estimates in \eqref{eq:implementation-residual-estimates} with the
Johnson--Lindenstrauss sketch takes
$\wtO(\nnz(A)+m_0^\omega)$ time. Each Cohen--Peng iteration takes
$\wtO_p(\nnz(A)+m_0^\omega)$ time. The
$O_p(\log\log(N+2))$ iterations are absorbed by the logarithmic factors in
$\wtO_p$. Finally, forming the sampling distribution and drawing the
$\wtO_p(k\eps^{-2})$ weighted rows take
$\wtO_p(\nnz(A)+k\eps^{-2})$ time. Since
$m_0\le C_pk\log^{C_p}(N+2)$, the total running time is
\[
 \wtO_p\!\left(\nnz(A)+k^\omega+k\eps^{-2}\right),
\]
as claimed.
\end{proof}

\begin{proof}[Proof of \cref{thm:main-small-p}]
For the row bound independent of $n$, first apply
\cref{thm:WY-presparsification} to $A$ with accuracy $\eps/3$ and failure
probability $1/12$. Denote its output by $A_0=S_0A$ and its number of rows by
$N_0$. Equation~\eqref{eq:WY-presparsification-size} gives
\begin{equation}\label{eq:small-p-presparsified-size}
 N_0\le
 C_pk\eps^{-4/p}
 \left(\log\frac{C_pk}{\eps}\right)^{C_p}.
\end{equation}
Condition on this event and apply
\cref{prop:small-p-implementation} to $A_0$ with accuracy $\eps/3$. Since
\[
 \log\frac{C_p(N_0+k)}{\eps}
 \le C_p\log\frac{C_pk}{\eps},
\]
the resulting matrix $A_1=S_1A_0$ has at most
\[
 C_pk\eps^{-2}
 \left(\log\frac{C_pk}{\eps}\right)^{C_p}
\]
rows after adjusting $C_p$. By \cref{lem:coreset-composition},
$A_1=S_1S_0A$ is a $(1\pm\eps)$ strong row coreset for $A$. The two stages
fail with total probability at most $1/12+1/6=1/4$, which is below $1/3$.

Because $A_0$ is a weighted row subset of $A$,
$\nnz(A_0)\le\nnz(A)$, and $k^\omega\le d^\omega$. The total running time is
therefore
\[
 \wtO_p\!\left(\nnz(A)+d^\omega+k\eps^{-2}\right).\qedhere
\]
\end{proof}

\paragraph{Lower bound.}
For every fixed $1\le p<2$, the upper bound in
\cref{thm:main-small-p} is tight up to logarithmic factors whenever
$0<\eps<1/2$ and $k+1\ge C\log(1/\eps)$. Woodruff and Yasuda observed the
following reduction to sampling-based subspace embeddings
\cite[Sec.~1.3]{WY23}. Set $d_0=k+1$, and for a unit vector
$x\in\R^{d_0}$ let $F_x=x^\perp$. Then
\[
 \norm{B(I-P_{F_x})}_{p,2}^p=\norm{Bx}_p^p.
\]
Thus a strong row coreset for residual queries over all subspaces of rank at
most $k$ gives a for-all sampling-based $\ell_p$ embedding. The lower bound of
Li, Wang, and Woodruff~\cite[Theorem~6.1]{LWW21}, applied in dimension
$d_0=k+1$, therefore requires $\widetilde\Omega_p(k\eps^{-2})$ rows under the
stated conditions. Consequently, the worst-case strong row coreset size is
$\widetilde\Theta_p(k\eps^{-2})$ in this parameter range.

\section{One-Round Sampling for \texorpdfstring{$p>2$}{p > 2}}
\label{sec:large-p-one-round}

\subsection{Ridge leverage scores}

Let $\A\in\R^{N\times d}$, and let $\A_k$ be a best rank-$k$ approximation
to $\A$ in Frobenius norm. Set
\[
    \lambda_{\A}:=\frac{\norm{\A-\A_k}_F^2}{k}.
\]
The ridge leverage score of row $i$ is
\[
    \tau_i^{\lambda_{\A}}(\A)
    :=\arow_i^\top
      (\A^\top\A+\lambda_{\A}\I)^{\dagger}
      \arow_i,
\]
where $\dagger$ denotes the Moore--Penrose pseudoinverse.
When the matrix is clear, we abbreviate this score to $\tau_i$.
Following Woodruff and Yasuda~\cite{WY25}, we call sampling with weights
$\bigl(\tau_i^{\lambda_{\A}}(\A)\bigr)^{p/2}$ \emph{root ridge leverage
score sampling}.

\begin{lemma}[Ridge leverage score trace bound~\cite{CMM17}]
\label{lem:trace}
For $\lambda_{\A}=\norm{\A-\A_k}_F^2/k$,
\[
    \sum_{i=1}^N\tau_i^{\lambda_{\A}}(\A)\leq 2k.
\]
\end{lemma}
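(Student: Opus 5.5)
The plan is to diagonalize $\A^\top\A$ and compute the trace in closed form. Let $\sigma_1\ge\sigma_2\ge\cdots\ge\sigma_t>0$ be the nonzero singular values of $\A$, where $t=\rank(\A)$. Then
\[
\sum_i\tau_i^{\lambda_{\A}}(\A)=\operatorname{tr}\bigl(\A(\A^\top\A+\lambda_{\A}\I)^\dagger\A^\top\bigr)=\operatorname{tr}\bigl((\A^\top\A+\lambda_{\A}\I)^\dagger\A^\top\A\bigr)
\]
by cyclicity of the trace. First I will handle the degenerate case $\lambda_{\A}=0$, that is, $\rank(\A)\le k$. Here the pseudoinverse is $(\A^\top\A)^\dagger$, so the trace equals $\rank(\A)\le k\le2k$. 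When $\lambda_{\A}>0$, the matrix $\A^\top\A+\lambda_{\A}\I$ is invertible and shares an eigenbasis with $\A^\top\A$. The trace is therefore
\[
\sum_{j=1}^t\frac{\sigma_j^2}{\sigma_j^2+\lambda_{\A}}.
\]

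Next I split this sum at index $k$. For $j\le k$, each term is at most $1$, so these terms contribute at most $k$. For $j>k$, I bound each term by $\sigma_j^2/\lambda_{\A}$. By Eckart--Young, $\sum_{j>k}\sigma_j^2=\norm{\A-\A_k}_F^2=k\lambda_{\A}$, so the tail contributes at most $k$. Adding the two parts gives $\sum_i\tau_i\le2k$.

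No step here is a real obstacle. The only care needed is the pseudoinverse convention when $\lambda_{\A}=0$, and the first paragraph handles that case separately.
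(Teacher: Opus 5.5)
Your proposal is correct and follows essentially the same route as the paper's proof. Both treat the case $\lambda_{\A}=0$ separately via $\rank(\A)\le k$, write the trace as $\sum_j\sigma_j^2/(\sigma_j^2+\lambda_{\A})$, bound the first $k$ terms by $1$ each, and bound the tail by $\lambda_{\A}^{-1}\sum_{j>k}\sigma_j^2=k$.
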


\begin{proof}
If $\lambda_{\A}=0$, then $\norm{\A-\A_k}_F=0$ and hence
$\operatorname{rank}(\A)\leq k$.
In this case,
\[
    \sum_{i=1}^N\tau_i^0(\A)
    =\operatorname{tr}\!\left(
      \A^\top\A(\A^\top\A)^\dagger
    \right)
    =\operatorname{rank}(\A)
    \leq k.
\]

Assume now that $\lambda_{\A}>0$.
Let $\sigma_1\geq\sigma_2\geq\cdots\geq 0$ be the singular values of $\A$.
Then
\[
\begin{aligned}
    \sum_{i=1}^N\tau_i^{\lambda_{\A}}(\A)
    &=\operatorname{tr}\!\left(
      \A^\top\A(\A^\top\A+\lambda_{\A}\I)^{\dagger}
      \right)\\
    &=\sum_j\frac{\sigma_j^2}{\sigma_j^2+\lambda_{\A}}.
\end{aligned}
\]
The first $k$ summands contribute at most $k$.
For the remaining summands,
\[
    \sum_{j>k}\frac{\sigma_j^2}{\sigma_j^2+\lambda_{\A}}
    \leq
    \frac{1}{\lambda_{\A}}\sum_{j>k}\sigma_j^2
    =k.
\]
\end{proof}

\subsection{The Woodruff--Yasuda one-round theorem}

We use the following normalization of the Woodruff--Yasuda one-round guarantee
for $p>2$~\cite[Theorem~3.3]{WY25}.

\begin{theorem}[Woodruff--Yasuda one-round theorem]
\label{thm:WY-one-round}
Fix $p>2$.
There are constants $c_p>0$ and $C_p\geq 1$ such that the following holds.
Let $\A\in\R^{N\times d}$ with $N\geq 3$, let
$\zeta,\eta\in(0,1/2)$, and suppose
\[
    N\geq \frac{C_p k^{p/2}}{\zeta}.
\]
Set
\[
    \alpha
    :=\frac{c_p\zeta^2}
    {(\log N)^3+\log(1/\eta)}.
\]
Let $\Smat$ be an $\ell_p$ sampling matrix whose probabilities satisfy
\[
    q_i\geq
    \min\left\{1,
    \frac{N^{p/2-1}
    \bigl(\tau_i^{\lambda_{\A}}(\A)\bigr)^{p/2}}
    {\alpha}
    \right\}.
\]
Then, with probability at least $1-\eta$, simultaneously for every
$F\in\Fset_k$,
\[
    \cost_{\Smat\A}(F)
    =(1\pm\zeta)\cost_{\A}(F)
    \pm
    C_p\zeta
    \left(\log\frac{C_pN}{\zeta\eta}\right)^{p/2}
    \OPT_A.
\]
\end{theorem}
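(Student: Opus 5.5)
The plan is to obtain \cref{thm:WY-one-round} as a renormalization of \cite[Theorem~3.3]{WY25}, not to rebuild the chaining argument. I would first state the cited theorem in its native parameters: a target accuracy, a failure probability, and a sampling oversampling factor multiplying the root ridge leverage scores $N^{p/2-1}\tau_i^{p/2}$. I would then check the dictionary term by term. Their oversampling factor is our $\alpha^{-1}$ with $\alpha=c_p\zeta^2/((\log N)^3+\log(1/\eta))$. Their error is a multiplicative $(1\pm\zeta)$ factor together with an additive multiple of $\OPT_A$, and the $(\log(N/(\zeta\eta)))^{p/2}$ factor records the polylogarithmic loss in their additive term. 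Their size hypothesis is our assumption $N\ge C_pk^{p/2}/\zeta$, and $N\ge3$ only ensures $\log N\ge1$. I will also confirm that $\lambda_{\A}=\norm{\A-\A_k}_F^2/k$ is their regularization scale, so that \cref{lem:trace} is the trace bound their argument uses.

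The second step is to justify the inequality form of the hypothesis, $q_i\ge\min\{1,\cdot\}$, rather than equality. I would argue this directly. Rows with $q_i=1$ are kept exactly and contribute no error. For every other row, the only property the analysis needs is that the rescaled contribution $\norm{Q_Fa_i}_2^p/q_i$ is bounded by $\alpha$ times the row's ridge-leverage sensitivity proxy; concretely, $\norm{Q_Fa_i}_2^p\lesssim_p(N\tau_i)^{p/2-1}\tau_i\,(\cost_\A(F)+\OPT_A)$-type bounds. Increasing $q_i$ only shrinks both the sup-norm envelope and the variance proxy that enter Bernstein's inequality and the Dudley-type chaining. So every bound in their proof remains valid verbatim, and $\E\norm{\Smat\A Q_F}_{p,2}^p=\norm{\A Q_F}_{p,2}^p$ still holds for Bernoulli sampling.

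For completeness I would outline why the underlying bound holds, following \cite{WY25}:
\begin{itemize}
\item Symmetrize the deviation $\sup_F|\cost_{\Smat\A}(F)-\cost_\A(F)|$.
\item Bound each sampled summand by $\alpha(\cost_\A(F)+\OPT_A)$, using the ridge-leverage sensitivity estimate.
\item Control the Rademacher process by generic chaining over $\Fset_k$, with entropy estimates in the ridge-weighted norm that depend on $k$ and $\log N$ but not on $d$. This produces the $(\log N)^3$ term.
\item Convert the expectation bound into a tail bound via a high-probability chaining or Bernstein argument, which produces the $\log(1/\eta)$ term.
\end{itemize}
The additive $\OPT_A$ term arises from the ridge regularization: $\lambda_\A$ controls only the tail $\norm{\A-\A_k}_F^2$, which relates to $\OPT_A$ for $p>2$ up to powers of $N$. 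This is where the condition $N\ge C_pk^{p/2}/\zeta$ is used.

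I expect the main obstacle to be the bookkeeping in the first step. The constants and logarithmic exponents in \cite{WY25} are stated in a different normalization, possibly with the accuracy parameter entering $\alpha$ differently or with $\log(N/\eta)$ in place of $(\log N)^3+\log(1/\eta)$. I would resolve any such mismatch by shrinking $c_p$ and enlarging $C_p$, which only makes the hypothesis stronger and the conclusion weaker.
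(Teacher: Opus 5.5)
Your proposal matches the paper's treatment: the paper gives no independent proof of this statement and simply imports \cite[Theorem~3.3]{WY25} in this normalization, so checking the parameter dictionary is the entire argument. Your remark that the inequality form $q_i\ge\min\{1,\cdot\}$ is harmless agrees with how the paper later uses overestimated scores $\widehat\tau_i$ in \cref{rem:runtime}. Your chaining sketch, including the guess about where $N\ge C_pk^{p/2}/\zeta$ enters, is optional, and the paper neither relies on nor verifies it.
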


\subsection{Uniform parameter tuning}

The next lemma tunes the internal parameter $\zeta$ so that both the
multiplicative and additive errors are at most a prescribed value $\theta$.
The upper bound $M\geq N$ allows the same tuning to be used throughout the
recursion.

\begin{lemma}[Uniform tuning]
\label{lem:tuning}
Fix $M\geq N\geq 3$ and $\theta,\eta\in(0,1/2)$.
Let
\[
    L:=\log\!\left(\frac{C_pM}{\theta\eta}\right),
    \qquad
    \zeta:=\frac{c_p\theta}{L^{p/2}},
\]
where $c_p>0$ is sufficiently small.
Suppose
\[
    N\geq C_p k^{p/2}\theta^{-1}L^{p/2}.
\]
Let $\Smat$ be obtained by applying \cref{thm:WY-one-round} with internal
accuracy $\zeta$ and failure probability $\eta$.
Then, with probability at least $1-\eta$, simultaneously for every
$F\in\Fset_k$,
\[
    \cost_{\Smat\A}(F)
    =(1\pm\theta)\cost_{\A}(F)
    \pm\theta\OPT_A.
\]
Moreover, the sampling threshold obeys
\[
    \alpha
    \geq
    \frac{c_p\theta^2}
    {L^p\bigl((\log M)^3+\log(1/\eta)\bigr)}.
\]
\end{lemma}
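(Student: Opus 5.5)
The plan is a direct instantiation of \cref{thm:WY-one-round} with internal accuracy $\zeta=c_p\theta/L^{p/2}$, followed by three checks: that its hypothesis holds, that its multiplicative and additive errors fit within $\theta$, and that the resulting threshold $\alpha$ obeys the stated lower bound.

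\textbf{Hypothesis.} \cref{thm:WY-one-round} requires $N\ge C_pk^{p/2}/\zeta$. Since $1/\zeta=c_p^{-1}L^{p/2}/\theta$, this is exactly the assumed bound $N\ge C_pk^{p/2}\theta^{-1}L^{p/2}$, once the constant $C_p$ in the lemma is enlarged by the factor $c_p^{-1}$.

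\textbf{Multiplicative error.} We need $\zeta\le\theta$. This holds because $L\ge\log(C_p\cdot 3\cdot 4)\ge1$ for $C_p$ large, so $\zeta\le c_p\theta\le\theta$ when $c_p\le1$.

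\textbf{Additive error.} We must bound $C_p\zeta\bigl(\log\frac{C_pN}{\zeta\eta}\bigr)^{p/2}\OPT_A$ by $\theta\OPT_A$. The main obstacle is that $\zeta$ appears inside the logarithm, so $\log(1/\zeta)$ has to be compared with $L$. I would write
\[
\log\frac{C_pN}{\zeta\eta}\le\log\frac{C_pM}{\theta\eta}+\log\frac{L^{p/2}}{c_p}=L+\tfrac p2\log L+\log(1/c_p)\le C'_pL,
\]
using $N\le M$, $\log L\le L$, and $L\ge1$. To avoid circularity, note that the constant $C'_p$ does not depend on the choice of $c_p$ once $L$ is large: enlarge the constant inside $L$ so that $L\ge\log(1/c_p)$. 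The additive term is then at most
\[
C_p\,c_p\theta L^{-p/2}(C'_pL)^{p/2}=C_pc_p(C'_p)^{p/2}\theta\le\theta
\]
for $c_p$ small. More carefully, one first fixes the constants of \cref{thm:WY-one-round}, then chooses $c_p$ small, and finally takes the constant in $L$ at least $1/c_p$ to ensure $L\ge\log(1/c_p)$. Combining this with the multiplicative check gives $\cost_{\Smat\A}(F)=(1\pm\theta)\cost_{\A}(F)\pm\theta\OPT_A$ with probability at least $1-\eta$.

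\textbf{Threshold.} Substituting $\zeta$ into the definition of $\alpha$ gives
\[
\alpha=\frac{c_p\zeta^2}{(\log N)^3+\log(1/\eta)}=\frac{c_p\,c_p^2\,\theta^2 L^{-p}}{(\log N)^3+\log(1/\eta)}.
\]
Since $N\le M$, we have $(\log N)^3\le(\log M)^3$, which yields the stated lower bound after renaming constants. The remaining hypotheses of \cref{thm:WY-one-round}, namely $\zeta,\eta\in(0,1/2)$ and $N\ge3$, are immediate.
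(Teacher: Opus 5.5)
Your proposal is correct and follows the paper's proof essentially step for step. You verify the size hypothesis by absorbing $c_p^{-1}$ into $C_p$, note $\zeta\le\theta$, bound $\log\frac{C_pN}{\zeta\eta}\le L+\log(c_p^{-1}L^{p/2})\le C_pL$ so that the additive term is at most $\theta\OPT_A$, and use $N\le M$ for the threshold. Your explicit ordering of constants, which enlarges the constant inside $L$ so that $L\ge\log(1/c_p)$, is a slightly cleaner way to avoid the apparent circularity; the paper leaves this implicit and relies on $c_p(\log(1/c_p))^{p/2}\to0$.
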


\begin{proof}
The assumed lower bound on $N$ implies the size condition in
\cref{thm:WY-one-round}, after changing $C_p$.
Also $\zeta\leq\theta$.
For the additive term, observe that
\[
\begin{aligned}
    \log\frac{C_pN}{\zeta\eta}
    &\leq
    \log\frac{C_pM}{\theta\eta}
    +\log\frac{\theta}{\zeta}\\
    &\leq
    L+\log\bigl(c_p^{-1}L^{p/2}\bigr)
    \leq C_pL.
\end{aligned}
\]
Thus, by taking $c_p$ sufficiently small,
\[
    C_p\zeta
    \left(\log\frac{C_pN}{\zeta\eta}\right)^{p/2}
    \leq \theta.
\]
Together with $\zeta\leq\theta$, this turns the one-round guarantee in
\cref{thm:WY-one-round} into
\[
    \cost_{\Smat\A}(F)
    =(1\pm\theta)\cost_{\A}(F)
    \pm\theta\OPT_A
\]
simultaneously for every $F\in\Fset_k$.
Finally,
\[
    \alpha
    =\frac{c_p\zeta^2}{(\log N)^3+\log(1/\eta)}
    \geq
    \frac{c_p\theta^2}
    {L^p\bigl((\log M)^3+\log(1/\eta)\bigr)},
\]
after another adjustment of $c_p$.
\end{proof}

\section{Threshold-aware row-count analysis}
\label{sec:threshold}

The retained row count is governed by a sum of truncated powers of ridge
leverage scores. The crucial point is that keeping the entire sampling scale
inside the truncation allows this sum to be bounded by the linear ridge
leverage score mass in \cref{lem:trace}.

\begin{lemma}[Threshold summation]
\label{lem:threshold}
Let $v>1$, $c>0$, and $\tau_1,\dots,\tau_N\geq 0$.
Then
\[
    \sum_{i=1}^N\min\{1,c\tau_i^v\}
    \leq c^{1/v}\sum_{i=1}^N\tau_i.
\]
In particular, if $p>2$, $v=p/2$, and $\sum_i\tau_i\leq 2k$,
\[
    \sum_{i=1}^N\min\{1,c\tau_i^{p/2}\}
    \leq 2k c^{2/p}.
\]
\end{lemma}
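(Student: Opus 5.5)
The plan is to reduce the sum to a termwise inequality and then sum. Set $y_i:=c^{1/v}\tau_i\ge0$, so that $c\tau_i^v=(c^{1/v}\tau_i)^v=y_i^v$. The claim then follows by summing over $i$ the pointwise inequality
\[
    \min\{1,y^v\}\le y\qquad(y\ge0,\ v>1).
\]

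To prove the pointwise inequality, split into two cases according to whether the truncation is active.
\begin{itemize}
    \item If $y\ge1$, then $\min\{1,y^v\}\le1\le y$.
    \item If $0\le y<1$, then $v>1$ gives $y^v=y\cdot y^{v-1}\le y$, so $\min\{1,y^v\}\le y^v\le y$.
\end{itemize}
The endpoint $y=0$ is covered by the second case, since $0^v=0$ for $v>0$.

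Summing the pointwise inequality gives
\[
    \sum_i\min\{1,c\tau_i^v\}\le\sum_i c^{1/v}\tau_i=c^{1/v}\sum_i\tau_i .
\]

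For the particular case, take $v=p/2>1$, which is valid since $p>2$. Then $c^{1/v}=c^{2/p}$, and the hypothesis $\sum_i\tau_i\le2k$ (as supplied, for instance, by \cref{lem:trace}) yields the bound $2kc^{2/p}$.

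There is no real obstacle in this argument. The only point needing care is that the scaling $c$ must be placed inside the power before comparing, via $c\tau^v=(c^{1/v}\tau)^v$. This is precisely the threshold-aware step the section emphasizes: it produces the factor $c^{1/v}$ rather than the factor $c$ that would come from pulling $c$ outside the minimum.
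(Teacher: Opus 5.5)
Your proof is correct and follows the paper's argument: the substitution $y_i=c^{1/v}\tau_i$, the pointwise bound $\min\{1,y^v\}\le y$ via the same two-case split, and summation, with $v=p/2$ and the trace bound giving the special case. The paper also notes that equality at $\tau_i=c^{-1/v}$ shows the factor $c^{1/v}$ cannot be improved.
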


\begin{proof}
Set $y_i=c^{1/v}\tau_i$.
For every $y\geq 0$,
\[
    \min\{1,y^v\}\leq y.
\]
Indeed, if $0\leq y\leq 1$, then $y^v\leq y$, while if $y>1$, then
$\min\{1,y^v\}=1<y$.
Therefore
\[
    \min\{1,c\tau_i^v\}
    =\min\{1,y_i^v\}
    \leq y_i
    =c^{1/v}\tau_i.
\]
Summing over $i$ proves the claim.
Equality at $\tau_i=c^{-1/v}$ shows that the factor $c^{1/v}$ is best
possible.
\end{proof}

\begin{corollary}[Expected one-round row count]
\label{cor:expectation}
Fix $p>2$ and $1\le k<d$. Let $\A\in\R^{N\times d}$ and $\alpha>0$, and set
$\lambda_{\A}=\norm{\A-\A_k}_F^2/k$.
Let $\Smat$ be the $\ell_p$ sampling matrix associated with
\[
    q_i=
    \min\left\{1,
    \frac{N^{p/2-1}\tau_i^{p/2}}{\alpha}
    \right\},
\]
where $\tau_i=\tau_i^{\lambda_{\A}}(\A)$.
Then
\[
    \E[\nnz(\Smat)]
    \leq
    2k\alpha^{-2/p}N^{1-2/p}.
\]
\end{corollary}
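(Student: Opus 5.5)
The bound follows by writing the expected number of retained rows as a sum of inclusion probabilities and then applying \cref{lem:threshold} together with the trace bound \cref{lem:trace}.

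First, by the definition of Bernoulli $\ell_p$ sampling, $\Smat$ is diagonal with $\Smat_{ii}=q_i^{-1/p}\xi_i$ when $q_i>0$ and $\Smat_{ii}=0$ otherwise. Here $\xi_i\sim\mathrm{Bernoulli}(q_i)$. Hence $\nnz(\Smat)\le\sum_i\xi_i$, and linearity of expectation gives
\[
    \E[\nnz(\Smat)]\le\sum_{i=1}^N q_i.
\]
Rows with $q_i=0$ contribute nothing to either side.

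Second, each $q_i$ already has the truncated form $\min\{1,c\tau_i^{p/2}\}$ with the single constant $c=N^{p/2-1}/\alpha>0$, which does not depend on $i$. We apply \cref{lem:threshold} with $v=p/2>1$. This requires $\tau_i\ge0$, which holds because $\A^\top\A+\lambda_{\A}\I$ is positive semidefinite, so its pseudoinverse is as well. The lemma gives
\[
    \sum_i q_i\le c^{2/p}\sum_i\tau_i .
\]
By \cref{lem:trace}, $\sum_i\tau_i\le2k$. This covers the case $\lambda_{\A}=0$ as well, since the lemma handles it separately. It remains to evaluate the constant:
\[
    c^{2/p}=N^{(p/2-1)(2/p)}\alpha^{-2/p}=N^{1-2/p}\alpha^{-2/p}.
\]
Combining the three displays yields $\E[\nnz(\Smat)]\le2k\alpha^{-2/p}N^{1-2/p}$.

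No step here is a real obstacle. The only point needing care is to keep the whole factor $N^{p/2-1}/\alpha$ inside the truncation, rather than pulling $1/\alpha$ outside the minimum as in the earlier analysis. Doing so lets \cref{lem:threshold} charge all rows, saturated or not, to the linear trace budget.
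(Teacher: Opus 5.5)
Your proposal is correct and is the paper's proof: express $\E[\nnz(\Smat)]$ through the sum of inclusion probabilities, apply \cref{lem:threshold} with $c=N^{p/2-1}/\alpha$ and $v=p/2$, and conclude with the trace bound of \cref{lem:trace}. Your extra remarks on nonnegativity of $\tau_i$, the $\lambda_{\A}=0$ case, and writing $\le$ instead of $=$ for the expectation are harmless bookkeeping that the paper leaves implicit.
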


\begin{proof}
Apply \cref{lem:threshold} with
$c=N^{p/2-1}/\alpha$ and use \cref{lem:trace}:
\[
\begin{aligned}
    \E[\nnz(\Smat)]
    &=\sum_{i=1}^Nq_i\\
    &\leq
    2k\left(\frac{N^{p/2-1}}{\alpha}\right)^{2/p}
    =2k\alpha^{-2/p}N^{1-2/p}.
\end{aligned}
\]
\end{proof}

To use the improved expectation bound in the recursive construction, we also
need a high-probability bound on the retained row count.

\begin{lemma}[Bernoulli row-count concentration]
\label{lem:chernoff}
Let $Z$ be a sum of independent Bernoulli random variables with mean $\mu$.
For every $\eta\in(0,1/2)$,
\[
    \Prob\!\left[Z>3\bigl(\mu+\log(1/\eta)\bigr)\right]
    \leq\eta.
\]
\end{lemma}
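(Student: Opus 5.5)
We prove the bound with the standard multiplicative Chernoff bound, derived from the moment generating function. Write $Z=\sum_i X_i$, where the $X_i$ are independent Bernoulli variables with means $\pi_i$ and $\sum_i\pi_i=\mu$. If $\mu=0$, then $Z=0$ almost surely and there is nothing to prove, so assume $\mu>0$.

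\emph{Step 1: the MGF bound.} Fix $\lambda>0$. Using $1+x\le e^x$ for each factor,
\[
\E e^{\lambda Z}=\prod_i\bigl(1+\pi_i(e^\lambda-1)\bigr)\le\exp\bigl(\mu(e^\lambda-1)\bigr).
\]
Markov's inequality then gives, for every $t>0$,
\[
\Prob[Z\ge t]\le\exp\bigl(\mu(e^\lambda-1)-\lambda t\bigr).
\]

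\emph{Step 2: choice of $\lambda$.} Take $\lambda=1$ and set $t=3(\mu+\log(1/\eta))$. The exponent becomes
\[
(e-1)\mu-3\mu-3\log(1/\eta)\le-3\log(1/\eta),
\]
because $e-1<3$. Hence the probability is at most $\eta^3\le\eta$.

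This uses only $\Prob[Z>t]\le\Prob[Z\ge t]$, so the strict inequality in the statement causes no difficulty. The only delicate point is the constant in front of $\mu$: the argument needs that constant to exceed $e-1$, and the factor $3$ does. No step presents a real obstacle.
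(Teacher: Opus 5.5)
Your proof is correct, but it takes a different route from the paper. The paper applies Bernstein's inequality, using variance at most $\mu$ and summands bounded by one, at the deviation $t=2\sqrt{\mu L}+2L$ with $L=\log(1/\eta)$. It checks that $t^2\ge 2L(\mu+t/3)$, which makes the tail at most $e^{-L}=\eta$. It then uses $2\sqrt{\mu L}\le\mu+L$ to get $\mu+t\le 3(\mu+L)$. You instead evaluate the Bernoulli moment generating function at the fixed parameter $\lambda=1$, which needs only $1+x\le e^x$ and Markov's inequality. The generous constant $3$ then absorbs the crude factor $e-1$ in front of $\mu$. Your argument is fully self-contained and proves slightly more than needed: it gives tail probability $\eta^3$, or equivalently $\Prob[Z\ge (e-1)\mu+\log(1/\eta)]\le\eta$. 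The paper's Bernstein route yields the sharper intermediate threshold $\mu+2\sqrt{\mu\log(1/\eta)}+2\log(1/\eta)$, whose coefficient on $\mu$ is $1$. That refinement is discarded in the final statement, so for this lemma and its use in the row-count recurrence the two arguments are interchangeable.
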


\begin{proof}
Set $L:=\log(1/\eta)$ and
\[
    t:=2\sqrt{\mu L}+2L.
\]
Bernstein's inequality gives
\[
    \Prob[Z-\mu\geq t]
    \leq
    \exp\!\left(-\frac{t^2}{2(\mu+t/3)}\right)
    \leq e^{-L}
    =\eta.
\]
Here the second inequality follows from
$t^2\geq 2L(\mu+t/3)$.
Moreover, $2\sqrt{\mu L}\leq\mu+L$, and hence
\[
    \mu+t
    \leq 2\mu+3L
    \leq 3(\mu+L).
\]
The claimed bound follows.
\end{proof}

Combining \cref{cor:expectation,lem:chernoff}, one sampling round on an
$N$-row matrix produces at most
\[
    O\!\left(
      k\alpha^{-2/p}N^{1-2/p}+\log(1/\eta)
    \right)
\]
rows with probability at least $1-\eta$.

\section{Recursive Construction for \texorpdfstring{$p>2$}{p > 2}}
\label{sec:construction}

We obtain the final coreset by applying the one-round sampler recursively to
the current weighted matrix. The recursion stops when the number of active
rows falls below a prescribed threshold or after a bounded number of rounds.
The next subsection analyzes the resulting row-count recurrence and shows
that the latter number of rounds suffices to reach its fixed-point bound. We
then apply the Woodruff--Yasuda pre-sparsification from
\cref{thm:WY-presparsification}. The reduced input size ensures that a single
recursive reduction achieves the desired row bound.

\subsection{Inner recursive sampling}
\label{sec:one-pass}

Fix an active matrix $\A^{(0)}\in\R^{N\times d}$ with $N\geq1$, a target
relative error $\rho\in(0,1/2)$, and a failure probability
$\eta\in(0,1/2)$.
Set
\[
    \gamma:=1-\frac{2}{p}\in(0,1)
\]
and
\[
    r:=\max\left\{1,
      \left\lceil
      \frac{\log\log(\max\{e^e,N\})}{-\log\gamma}
      \right\rceil
      \right\}.
\]
Choose
\[
    \theta:=\frac{c\rho}{r},
    \qquad
    \eta_0:=\frac{\eta}{4r},
\]
where $c>0$ is a sufficiently small absolute constant, and define
\[
    L:=\log\!\left(\frac{C_pN}{\theta\eta_0}\right),
    \qquad
    \zeta:=\frac{c_p\theta}{L^{p/2}}.
\]
Finally, set the stopping threshold
\[
    N_{\min}:=\frac{C_pk^{p/2}}{\zeta}
    =\frac{C_p}{c_p}k^{p/2}\theta^{-1}L^{p/2}.
\]

The parameter choices imply, after increasing $C_p$ if necessary,
\begin{equation}
\label{eq:pass-parameter-relations}
    r\leq C_pL,
    \qquad
    \log N\leq L,
    \qquad
    \log(1/\eta_0)\leq L,
    \qquad
    L\leq C_p\log\!\left(\frac{C_pN}{\rho\eta}\right).
\end{equation}
These relations follow from $\theta\eta_0=c\rho\eta/(4r^2)$ and
$r=O_p(\log\log(\max\{e^e,N\}))$.

In round $t$, let $\A^{(t-1)}$ be the current weighted active matrix and let
$N_{t-1}$ be its number of rows.
If $N_{t-1}\leq N_{\min}$, stop and output the current matrix.
Otherwise compute the ridge leverage scores of $\A^{(t-1)}$ at
\[
    \lambda_{t-1}
    :=\frac{\norm{\A^{(t-1)}-(\A^{(t-1)})_k}_F^2}{k},
\]
set
\[
    \alpha_t
    :=\frac{c_p\zeta^2}
    {(\log N_{t-1})^3+\log(1/\eta_0)},
\]
and sample independently with
\[
    q_i^{(t)}
    :=\min\left\{1,
    \frac{N_{t-1}^{p/2-1}
    \bigl(\tau_i^{\lambda_{t-1}}(\A^{(t-1)})\bigr)^{p/2}}
    {\alpha_t}
    \right\}.
\]
Let $\A^{(t)}=\Smat_t\A^{(t-1)}$ after deleting zero rows.
The sampler only deletes rows, so $N_t\le N_{t-1}\le N$ in every round.
The procedure executes at most $r$ rounds.

\subsection{Row-count contraction and its fixed point}

We first isolate the deterministic fixed-point calculation. We then combine
it with the high-probability row bound from \cref{sec:threshold}.

\begin{lemma}[Fixed-point recurrence; cf.~\cite{MMWY22}]
\label{lem:fixed-point}
Let $\gamma\in(0,1)$ and let $N_0\geq 3$.
Suppose
\[
    N_t\leq K N_{t-1}^{\gamma}
    \qquad (t=1,\dots,r),
\]
where $K\geq 1$ and
\[
    r\geq
    \frac{\log\log N_0}{-\log\gamma}.
\]
Then
\[
    N_r\leq eK^{1/(1-\gamma)}.
\]
In particular, when $\gamma=1-2/p$,
\[
    N_r\leq eK^{p/2}.
\]
\end{lemma}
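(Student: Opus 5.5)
Take logarithms and unroll the recurrence. Set $x_t=\log N_t$ and $\kappa=\log K\ge0$, so the hypothesis reads $x_t\le\kappa+\gamma x_{t-1}$ for $t=1,\dots,r$. Since the map $x\mapsto\kappa+\gamma x$ is increasing, induction on $t$ gives
\[
 x_r\le\kappa\sum_{j=0}^{r-1}\gamma^j+\gamma^r x_0
 \le\frac{\kappa}{1-\gamma}+\gamma^r\log N_0 .
\]
One point needs care. If some $N_t$ is zero or below one, the logarithm is undefined or negative. In that case I will argue directly with the multiplicative form instead: iterating $N_t\le KN_{t-1}^\gamma$ gives $N_r\le K^{1+\gamma+\dots+\gamma^{r-1}}N_0^{\gamma^r}$. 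This uses only that $y\mapsto Ky^\gamma$ is monotone on $[0,\infty)$, so no logarithms are needed.

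The second step bounds the tail term $\gamma^r\log N_0$ using the lower bound on $r$. Since $0<\gamma<1$, we have $-\log\gamma>0$. The hypothesis then gives $r(-\log\gamma)\ge\log\log N_0$, which is meaningful because $N_0\ge3$ implies $\log N_0>1$ and hence $\log\log N_0>0$. Exponentiating, $\gamma^r\le 1/\log N_0$, so $N_0^{\gamma^r}=\exp(\gamma^r\log N_0)\le e$.

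The third step bounds the geometric sum. Since $K\ge1$, we have $K^{\sum_{j<r}\gamma^j}\le K^{1/(1-\gamma)}$. Combining the two factors gives $N_r\le eK^{1/(1-\gamma)}$. For the special case, $\gamma=1-2/p$ gives $1/(1-\gamma)=p/2$.

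I expect no real obstacle here. The only points to watch are the two places where the hypotheses are used: $K\ge1$ is what allows the finite geometric sum to be replaced by its infinite limit, and $N_0\ge3$ is what makes $\log\log N_0$ well defined and positive.
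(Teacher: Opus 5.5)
Your proof is correct and is essentially the paper's argument: unroll the recurrence in logarithmic form, use the bound on $r$ to get $\gamma^r\log N_0\le 1$, and use $K\ge1$ to replace $\sum_{j<r}\gamma^j$ by $1/(1-\gamma)$. Your multiplicative form is just the exponentiated version. The paper sets aside only the case $N_t=0$, because negative logarithms cause no trouble in the unrolling.
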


\begin{proof}
If some $N_t=0$, the conclusion is immediate.
Otherwise set $a_t:=\log N_t$.
Then
\[
    a_t\leq \gamma a_{t-1}+\log K.
\]
Unrolling gives
\[
    a_r
    \leq
    \gamma^r\log N_0
    +(\log K)\sum_{j=0}^{r-1}\gamma^j
    \leq
    1+\frac{\log K}{1-\gamma}.
\]
Exponentiating proves the claim.
\end{proof}

\begin{proposition}[High-probability row bound]
\label{prop:one-pass-size}
Fix $p>2$, let $\A^{(0)}\in\R^{N\times d}$ with $N\ge1$ and
$1\le k<d$, and let $\rho,\eta\in(0,1/2)$. Let the recursive procedure and the
parameter $L$ be as defined in \cref{sec:one-pass}. With probability at least
$1-\eta/2$, the output has at most
\[
    C_p\frac{k^{p/2}}{\rho^2}L^{p+5}
\]
rows.
\end{proposition}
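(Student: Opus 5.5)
We will prove the bound by showing that, with high probability, every round that is actually executed satisfies a contraction $N_t\le KN_{t-1}^{\gamma}$ with $\gamma=1-2/p$, and then converting this recurrence into a row bound. The bound on $K$ comes from \cref{cor:expectation} and \cref{lem:chernoff}; the conversion uses \cref{lem:fixed-point}.

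\emph{Step 1: a per-round bound.} In round $t$ the procedure runs only if $N_{t-1}>N_{\min}$. By \eqref{eq:pass-parameter-relations}, $\log N_{t-1}\le L$ and $\log(1/\eta_0)\le L$. Together with $\zeta=c_p\theta/L^{p/2}$ and $\theta=c\rho/r\ge c'\rho/L$, this gives
\[
\alpha_t\ \ge\ \frac{c_p\zeta^2}{2L^3}\ \ge\ c_p\frac{\rho^2}{L^{p+5}} .
\]
Applying \cref{cor:expectation} to $\A^{(t-1)}$ shows that the expected number of retained rows is at most $2k\alpha_t^{-2/p}N_{t-1}^{\gamma}$. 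Conditioning on the past and applying \cref{lem:chernoff} with failure probability $\eta_0$, we get, with probability at least $1-\eta_0$,
\[
N_t\le 3\bigl(2k\alpha_t^{-2/p}N_{t-1}^{\gamma}+L\bigr).
\]
A union bound over at most $r$ rounds costs $r\eta_0=\eta/4\le\eta/2$.

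\emph{Step 2: absorbing the additive $L$.} This is the step I expect to be the main obstacle. The term $L$ must be absorbed into the multiplicative term so that the recurrence has the pure form needed by \cref{lem:fixed-point}. In any executed round, $N_{t-1}>N_{\min}\ge1$, so $N_{t-1}^\gamma\ge1$. Moreover $k\alpha_t^{-2/p}\ge1$, because $\alpha_t\le1$. Hence
\[
L\le L\,k\alpha_t^{-2/p}N_{t-1}^\gamma,
\]
and therefore $N_t\le KN_{t-1}^\gamma$ with
\[
K:=C_pLk\,\alpha_{\min}^{-2/p},
\qquad\text{where }\alpha_{\min}:=c_p\rho^2/L^{p+5}.
\]
The extra factor $L$ is deliberately wasteful; it only enlarges the final power of $L$ by $p/2$. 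If that loss is too large, the remedy is to note that $k\alpha^{-2/p}N^\gamma$ already dominates $L$ whenever $N>N_{\min}$. So I will first check that the final exponent stays within $p+5$ after bookkeeping, and tighten the absorption only if needed.

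\emph{Step 3: the two stopping cases.} If the procedure stops because some $N_t\le N_{\min}$, the output has at most
\[
N_{\min}=C_pk^{p/2}\theta^{-1}L^{p/2}\le C_pk^{p/2}\rho^{-1}L^{p/2+1}
\]
rows, which is within the claimed bound. Otherwise all $r$ rounds run. The case $N<3$ is trivial, so assume $N\ge3$. By the definition of $r$, we have $r\ge\log\log N/(-\log\gamma)$, so \cref{lem:fixed-point} gives
\[
N_r\le eK^{p/2}=C_p k^{p/2}L^{p/2}\alpha_{\min}^{-1}.
\]
Substituting $\alpha_{\min}^{-1}=C_p\rho^{-2}L^{p+5}$ yields $k^{p/2}\rho^{-2}L^{O(p)}$.

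\emph{Step 4: matching the stated exponent.} It remains to bring the logarithmic exponent down to $p+5$. For this I will redo the bound on $\alpha_t$ more carefully, using $\zeta^2=c\theta^2L^{-p}$, $\theta^{-2}\le C r^2\rho^{-2}\le C L^2\rho^{-2}$, and the factor $3L^3$ in the denominator of $\alpha_t$. This gives $\alpha_t^{-1}\le C_p\rho^{-2}L^{p+5}$. Raising to the power $2/p$ and then, through the fixed point, back to the power $p/2$ returns exactly $\alpha^{-1}$. The additional $L$ from Step 2 is then handled by the sharper absorption just described: since $N_{t-1}>N_{\min}$, the multiplicative term alone exceeds $L$. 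This keeps the exponent at $p+5$ after adjusting $C_p$.
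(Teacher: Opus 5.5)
Your plan is the paper's proof. You bound each executed round via \cref{cor:expectation} and \cref{lem:chernoff}, take a union bound over at most $r$ rounds, and handle early stopping through $N_{\min}$. Otherwise you feed $N_t\le KN_{t-1}^{1-2/p}$ into \cref{lem:fixed-point}. Your estimate $\alpha_t^{-1}\le C_p\rho^{-2}L^{p+5}$ is exactly the paper's bound on $\alpha_*^{-1}$.

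The one step that does not stand as written is the absorption of the additive Chernoff term $L$. Your first absorption, $L\le Lk\alpha_t^{-2/p}N_{t-1}^{\gamma}$, costs an extra $L^{p/2}$ after the fixed point, as you note. Your remedy asserts that the multiplicative term dominates $L$ \emph{because} $N_{t-1}>N_{\min}$, but that reason does not work. With the only upper bound on $\alpha_t$ you have ($\alpha_t\le1$), you get $kN_{\min}^{1-2/p}\approx k^{p/2}\theta^{-(1-2/p)}L^{p/2-1}$. For $2<p<4$ this can be much smaller than $L$: take $k=1$, $\rho$ constant, and $N$ huge, so that $\theta^{-1}\approx\log\log N$ while $L\approx\log N$.

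The domination claim is nevertheless true, for a different reason. Since $\zeta=c_p\theta L^{-p/2}$ and $\log(1/\eta_0)\ge\log 8$, you have $\alpha_t\le c_p\zeta^2/\log(1/\eta_0)\le\zeta^2\le L^{-p}$. Hence $k\alpha_t^{-2/p}\ge L^2\ge L$ in every round, regardless of $N_{t-1}$, and so $N_t\le 9k\alpha_t^{-2/p}N_{t-1}^{\gamma}$.

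The paper avoids the issue more simply. Because $N_{t-1}^{\gamma}\ge1$, it absorbs the additive term additively, taking $K=C_p\bigl(k\alpha_*^{-2/p}+\log(1/\eta_0)\bigr)$. Then $K^{p/2}\le C_p\bigl(k^{p/2}\alpha_*^{-1}+L^{p/2}\bigr)$, and the second term is trivially below the target since $k\ge1$ and $\rho<1$. With either fix, your argument gives the stated bound $C_pk^{p/2}\rho^{-2}L^{p+5}$.
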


\begin{proof}
The definition of $L$ is the common tuning scale in \cref{lem:tuning} with
$M=N$, target error $\theta$, and failure probability $\eta_0$.
Since $N_{t-1}\leq N$, the thresholds satisfy
\[
    \alpha_t\geq\alpha_*:=
    \frac{c_p\theta^2}
    {L^p\bigl((\log N)^3+\log(1/\eta_0)\bigr)}.
\]
Conditioned on all previous rounds, \cref{cor:expectation} gives
\[
    \mu_t:=\E[N_t\mid \A^{(t-1)}]
    \leq
    2k\alpha_t^{-2/p}N_{t-1}^{1-2/p}.
\]
By \cref{lem:chernoff}, except with conditional probability at most $\eta_0$,
\[
\begin{aligned}
    N_t
    &\leq
    C_p k\alpha_t^{-2/p}N_{t-1}^{1-2/p}
    +C_p\log(1/\eta_0)\\
    &\leq
    K N_{t-1}^{1-2/p},
\end{aligned}
\]
where
\[
    K:=C_p\left(
      k\alpha_*^{-2/p}+\log(1/\eta_0)
      \right).
\]
After increasing $C_p$, we have $K\geq1$. The last inequality uses
$N_{t-1}^{1-2/p}\geq 1$.
A union bound shows that this recurrence holds in every executed round with
probability at least $1-r\eta_0\geq 1-\eta/4$.

If the algorithm stops early, the output has at most
\[
    N_{\min}
    \leq C_pk^{p/2}\theta^{-1}L^{p/2}
    \leq
    C_p\frac{k^{p/2}}{\rho^2}L^{p+5}
\]
rows, using \cref{eq:pass-parameter-relations} and $\rho<1$.
Otherwise, the algorithm executes all $r$ rounds. In this case
$N>N_{\min}\geq3$, and the definition of $r$ gives
\[
    r\geq\frac{\log\log N}{-\log(1-2/p)}.
\]
Thus all hypotheses of \cref{lem:fixed-point} hold, and it yields
\[
\begin{aligned}
    N_r
    &\leq eK^{p/2}\\
    &\leq
    C_p\left(
    k^{p/2}\alpha_*^{-1}
    +\bigl(\log(1/\eta_0)\bigr)^{p/2}
    \right).
\end{aligned}
\]
Now
\[
\begin{aligned}
    \alpha_*^{-1}
    &\leq
    C_p\theta^{-2}L^p
    \bigl((\log N)^3+\log(1/\eta_0)\bigr)\\
    &\leq
    C_p\rho^{-2}r^2L^{p+3}\\
    &\leq
    C_p\rho^{-2}L^{p+5}.
\end{aligned}
\]
The last two inequalities use $\theta=c\rho/r$ and
\cref{eq:pass-parameter-relations}.
Also $\bigl(\log(1/\eta_0)\bigr)^{p/2}\leq L^{p/2}$, which is absorbed by
the same bound because $k\geq1$ and $\rho<1$.
This proves the claim after adjusting constants.
\end{proof}

\subsection{Accumulation of the approximation error}

The row-count analysis does not by itself control the distortion accumulated
across the recursive rounds. We next show that the mixed
additive-multiplicative guarantees from the executed rounds compose to a
multiplicative guarantee.

For $t\geq0$, write
\[
    C_t(F):=\cost_{\A^{(t)}}(F),
    \qquad
    \OPT_t:=\min_{F\in\Fset_k}C_t(F).
\]

\begin{lemma}[Composition of additive-multiplicative rounds]
\label{lem:error-composition}
Let $\theta\in(0,1/2)$ and let $T\geq0$ be an integer.
Suppose that for every $t=1,\dots,T$ and every $F\in\Fset_k$,
\[
    C_t(F)=(1\pm\theta)C_{t-1}(F)\pm\theta\OPT_{t-1}.
\]
If $T\theta\leq c_0$ for a sufficiently small absolute constant $c_0$, then
\[
    C_T(F)=(1\pm C T\theta)C_0(F)
    \qquad\text{for every }F\in\Fset_k,
\]
where $C$ is an absolute constant.
\end{lemma}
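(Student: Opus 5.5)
The plan is to convert each round's additive error into a multiplicative one by comparing $\OPT_{t-1}$ with $C_{t-1}(F)$, and then to multiply the per-round factors. The key observation is that $\OPT_{t-1}\le C_{t-1}(F)$ for every $F\in\Fset_k$, simply because $\OPT_{t-1}$ is the minimum of $C_{t-1}$ over $\Fset_k$. So the hypothesis at round $t$ gives
\[
 (1-2\theta)C_{t-1}(F)\le C_t(F)\le(1+2\theta)C_{t-1}(F)
\]
for every $F\in\Fset_k$. For the lower bound we subtract $\theta\OPT_{t-1}\le\theta C_{t-1}(F)$. For the upper bound we add at most the same amount. Both bounds use only the minimality of $\OPT_{t-1}$ and the nonnegativity of all costs, so nothing about the sampling itself is needed.

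Next, we iterate by induction on $t$. This gives
\[
 (1-2\theta)^T C_0(F)\le C_T(F)\le(1+2\theta)^T C_0(F).
\]
It then remains to estimate the two products using the assumption $T\theta\le c_0$. For the upper product, $(1+2\theta)^T\le e^{2T\theta}\le 1+4T\theta$ once $2T\theta\le1$, using $e^x\le1+2x$ for $0\le x\le1$. For the lower product, Bernoulli's inequality gives $(1-2\theta)^T\ge1-2T\theta$; this is valid since $2\theta<1$. Hence $C_T(F)=(1\pm 4T\theta)C_0(F)$, and we may take $C=4$. The case $T=0$ is trivial.

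There is no real obstacle; the only point needing care is that the additive term is referenced to $\OPT_{t-1}$ of the \emph{current} matrix rather than $\OPT_0$. The per-round reduction above avoids having to track how $\OPT$ drifts across rounds, because each round's additive term is bounded by the current cost of the same query $F$.
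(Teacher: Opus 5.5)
Your proof is correct, and it takes a more direct route than the paper's.

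The paper applies the minimality of the current optimum only to a fixed minimizer $F^*$ of $C_0$. This gives the drift bound $\OPT_t\le(1+2\theta)^t\OPT_0\le2\OPT_0$. It then unrolls the mixed recurrence for an arbitrary $F$, accumulating the additive terms $\theta\sum_{s<T}(1\pm\theta)^{T-1-s}\OPT_s$, and bounds them by $O(T\theta)\OPT_0$. Only at the very end does it use $\OPT_0\le C_0(F)$.

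You instead apply $\OPT_{t-1}\le C_{t-1}(F)$ to the query $F$ itself in every round. Each round then becomes a purely multiplicative $(1\pm2\theta)$ guarantee, and the induction is just a product of positive factors; positivity of $1-2\theta$ is where $\theta<1/2$ enters. This removes any need to track how $\OPT$ drifts across rounds, and gives the explicit constant $C=4$ under the mild condition $T\theta\le1/2$.

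The paper's bookkeeping keeps the multiplicative factor at $(1\pm\theta)^T$ and isolates an additive $O(T\theta)\OPT_0$ term. That keeps the result in the same additive--multiplicative form as the hypothesis, and it is sharper by a constant factor for queries with $C_0(F)\gg\OPT_0$. For the lemma as stated, which claims only some absolute constant $C$, your proof is shorter and loses nothing.
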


\begin{proof}
Let $F^*$ minimize $C_0$.
Since $\OPT_{t-1}\leq C_{t-1}(F^*)$,
\[
    C_t(F^*)
    \leq
    (1+\theta)C_{t-1}(F^*)+\theta\OPT_{t-1}
    \leq
    (1+2\theta)C_{t-1}(F^*).
\]
Hence
\[
    \OPT_t\leq C_t(F^*)
    \leq(1+2\theta)^t\OPT_0
    \leq 2\OPT_0
\]
when $T\theta$ is sufficiently small.
For arbitrary $F$, unrolling the upper recurrence gives
\[
\begin{aligned}
    C_T(F)
    &\leq
    (1+\theta)^T C_0(F)
    +\theta\sum_{s=0}^{T-1}
      (1+\theta)^{T-1-s}\OPT_s\\
    &\leq
    e^{T\theta}C_0(F)
    +2T\theta e^{T\theta}\OPT_0\\
    &\leq
    (1+CT\theta)C_0(F),
\end{aligned}
\]
because $\OPT_0\leq C_0(F)$.
Similarly,
\[
\begin{aligned}
    C_T(F)
    &\geq
    (1-\theta)^T C_0(F)
    -\theta\sum_{s=0}^{T-1}
      (1-\theta)^{T-1-s}\OPT_s\\
    &\geq
    (1-T\theta)C_0(F)-2T\theta\OPT_0\\
    &\geq
    (1-3T\theta)C_0(F).
\end{aligned}
\]
This proves the claim.
\end{proof}

\subsection{The recursive reduction}

The row-count and approximation analyses above combine into the following
reduction. We will apply it once to the pre-sparsified matrix.

\begin{theorem}[Recursive reduction]
\label{thm:one-pass}
Fix $p>2$.
Given $\A\in\R^{N\times d}$ with $N\geq1$, $k\geq1$, and
$\rho,\eta\in(0,1/2)$, the recursive procedure above outputs a weighted row
subset $\A'$ such that, with probability at least $1-\eta$,
\[
    \cost_{\A'}(F)=(1\pm\rho)\cost_{\A}(F)
    \qquad\text{for every }F\in\Fset_k,
\]
and $\A'$ has at most
\[
    C_p\frac{k^{p/2}}{\rho^2}
    \left(
      \log\frac{C_pN}{\rho\eta}
    \right)^{p+5}
    \quad\text{rows}.
\]
\end{theorem}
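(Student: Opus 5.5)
The plan is to combine the two halves already prepared: the row-count bound of \cref{prop:one-pass-size} and the error composition of \cref{lem:error-composition}. The failure budget is split into $\eta/2$ for the size event and $\eta/2$ for the accuracy event. The size bound is immediate from \cref{prop:one-pass-size}: with probability at least $1-\eta/2$ the output has at most $C_pk^{p/2}\rho^{-2}L^{p+5}$ rows. By \eqref{eq:pass-parameter-relations}, $L\le C_p\log(C_pN/(\rho\eta))$, so after adjusting $C_p$ this gives the displayed size bound. If $N\le N_{\min}$ at the start, the algorithm outputs $\A$ itself, and both claims are trivial.

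For accuracy, I would show that every executed round $t$ satisfies the hypothesis of \cref{lem:error-composition} with the common $\theta=c\rho/r$, except with conditional probability $\eta_0$. Round $t$ is executed only when $N_{t-1}>N_{\min}=C_pk^{p/2}\zeta^{-1}$. This is exactly the size hypothesis of \cref{lem:tuning} applied to $\A^{(t-1)}$ with $M=N\ge N_{t-1}$, target $\theta$, and failure probability $\eta_0$. The tuned $\zeta$ and $L$ are defined with $M=N$ fixed across rounds, and $\alpha_t$ is precisely the threshold of \cref{thm:WY-one-round} at row count $N_{t-1}$. Since $q_i^{(t)}$ equals the required lower bound, \cref{lem:tuning} gives, conditionally on $\A^{(t-1)}$,
\[
C_t(F)=(1\pm\theta)C_{t-1}(F)\pm\theta\OPT_{t-1}\qquad(F\in\Fset_k)
\]
with probability at least $1-\eta_0$. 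One should also check that $N_{t-1}\ge3$, which holds because $N_{\min}\ge3$ after increasing $C_p$.

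A union bound over at most $r$ rounds gives failure probability at most $r\eta_0=\eta/4$. On the success event, \cref{lem:error-composition} with $T\le r$ yields $C_T(F)=(1\pm CT\theta)C_0(F)$, and $T\theta\le c\rho\le c_0$ for $c$ small. Choosing $c\le 1/C$ makes $CT\theta\le\rho$, which is the claimed $(1\pm\rho)$ guarantee. Because the output is a product of Bernoulli sampling maps, it is a weighted row subset of $\A$. Adding $\eta/2$ and $\eta/4$ gives total failure probability at most $\eta$.

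The main point needing care is the conditioning in the round-by-round application. \cref{lem:tuning} must be invoked on a random matrix $\A^{(t-1)}$ whose row count is random, and the round may or may not be executed. I would handle this by stating the guarantee conditionally on the history and noting that both the stopping rule and the parameters $\alpha_t$ and $q^{(t)}$ are measurable with respect to that history. The uniform tuning, through $M=N$, keeps $\theta$ independent of $N_{t-1}$, and this is what allows \cref{lem:error-composition} to be applied with a single $\theta$.
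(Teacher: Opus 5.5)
Your proposal is correct and matches the paper's proof step for step. The row bound comes from \cref{prop:one-pass-size} together with $L\le C_p\log(C_pN/(\rho\eta))$; accuracy comes from applying \cref{lem:tuning} with $M=N$ in each executed round (justified by the stopping rule $N_{t-1}>N_{\min}$), a union bound costing $r\eta_0=\eta/4$, and \cref{lem:error-composition} with $T\theta\le r\theta=c\rho$, giving total failure probability at most $3\eta/4$. Your remarks on $N_{t-1}\ge3$ and on the history-measurability of the stopping rule are harmless details that the paper leaves implicit.
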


\begin{proof}
The row bound follows from \cref{prop:one-pass-size} and
\cref{eq:pass-parameter-relations}, since
\[
    L\leq
    C_p\log\!\left(\frac{C_pN}{\rho\eta}\right).
\]
For the approximation guarantee, condition on the current active matrix in
each round.
Because the algorithm continues only when $N_{t-1}>N_{\min}$,
\cref{lem:tuning} applies with $M=N$, target $\theta$, and failure probability
$\eta_0$. Thus, except with conditional probability at most $\eta_0$, the
round satisfies
\[
    C_t(F)=(1\pm\theta)C_{t-1}(F)\pm\theta\OPT_{t-1}
\]
simultaneously for every $F\in\Fset_k$.
Let $T\leq r$ be the number of rounds actually executed.
A union bound over at most $r$ rounds gives all of these events with
probability at least $1-r\eta_0\geq1-\eta/4$.
Since $T\theta\leq r\theta=c\rho$, \cref{lem:error-composition} gives a
$(1\pm\rho)$ multiplicative guarantee when $c$ is sufficiently small.
Combining this event with \cref{prop:one-pass-size} gives success probability
at least $1-3\eta/4\geq1-\eta$.
\end{proof}

\begin{remark}[Implementation and running time]
\label{rem:runtime}
The implementation does not require the exact value of $\lambda_{\A}$ or the
exact ridge leverage scores. Using the routines in~\cite{CMM17,WY25}, one may
compute a constant-factor estimate of $\lambda_{\A}$ together with
high-probability score estimates that, after adjusting constants, satisfy
\[
    \tau_i^{\lambda_{\A}}(\A)
    \leq \widehat\tau_i
    \leq C\tau_i^{\lambda_{\A}}(\A),
    \qquad
    \sum_i\widehat\tau_i=O(k).
\]
Using $\widehat\tau_i$ in the sampling probabilities only increases the
probabilities required by \cref{thm:WY-one-round}, while
\cref{lem:threshold} applies directly to the linear mass
$\sum_i\widehat\tau_i$. Thus every approximation and row-count bound above
remains valid up to constants. The score estimation and sampling primitives
run in $\wtO_p(\nnz(\A)+d^\omega)$ time over the recursive reduction as shown
in~\cite{WY25}. Allocating the unused constant fraction of the failure
budget to the estimation events preserves the success probability in
\cref{thm:one-pass}.
\end{remark}

\subsection{Completion of the \texorpdfstring{$p>2$}{p > 2} construction}

\begin{proof}[Proof of \cref{thm:main-large-p}]
Apply \cref{thm:WY-presparsification} to $\A$ with accuracy $\eps/3$ and
failure probability $\delta/2$. On its success event, the resulting weighted
row subset $\A_0=\Smat_0\A$ satisfies
\begin{equation}
\label{eq:large-p-presparsified-size}
    N_0
    \leq
    C_pk^{p/2}\eps^{-p}
    \left(
      \log\frac{C_pk}{\eps\delta}
    \right)^{C_p}
\end{equation}
and preserves every cost within a factor $1\pm\eps/3$.

Condition on this event and apply \cref{thm:one-pass} once to $\A_0$, with
$\rho=\eps/3$ and $\eta=\delta/2$. The output $\A_1=\Smat_1\A_0$ has at most
\[
    \frac{C_pk^{p/2}}{\eps^2}
    \left(
      \log\frac{C_pN_0}{\eps\delta}
    \right)^{p+5}
\]
rows. By \cref{eq:large-p-presparsified-size}, and because $p$ is fixed,
\[
    \log\frac{C_pN_0}{\eps\delta}
    \leq
    C_p\log\frac{C_pk}{\eps\delta}.
\]
After adjusting $C_p$, this is the row bound in
\cref{thm:main-large-p}.

The two stages fail with total probability at most $\delta$. On their common
success event, \cref{lem:coreset-composition} shows that $\A_1$ preserves all
rank-at-most-$k$ subspace costs within a factor $1\pm\eps$. Moreover,
$\A_1=(\Smat_1\Smat_0)\A$ remains a weighted row subset of the original input.

The first stage takes $\wtO_p(\nnz(\A)+d^\omega)$ time by
\cref{thm:WY-presparsification}. By \cref{rem:runtime}, the second stage takes
$\wtO_p(\nnz(\A_0)+d^\omega)$ time. Since
$\nnz(\A_0)\leq\nnz(\A)$, the total running time is
$\wtO_p(\nnz(\A)+d^\omega)$.
\end{proof}

\section*{Acknowledgements}
The proof for $p>2$ was first obtained using a fully automated Gemini-based
agentic system developed internally at Google. The proof for $1\leq p<2$ was
obtained with contributions from the authors, the aforementioned system, and
a harness built on top of GPT-5.6 Sol. The authors have verified the proofs
and edited them for
clarity of presentation, and take responsibility for the final version. The
authors would like to thank Taisuke Yasuda for helpful discussions.

\bibliographystyle{alpha}
\bibliography{references}

\appendix

\newpage

\section{Auxiliary Process Estimates for
\texorpdfstring{$1\le p<2$}{1 <= p < 2}}
\label{sec:auxiliary-process}

The proof of \cref{thm:split} decomposes each normalized query loss into the
four coordinates in \eqref{eq:four-coordinates}. The directional and
duality-map estimates below control $V_F$ and $H_F$, respectively. The
row-augmented subspace estimate controls $W_F^\perp$, and the final Lewis-weight
estimate controls $T_F$. The scalar composition lemma then recovers the full
loss from these four quantities. We formulate the estimates independently of
the coreset notation so that their hypotheses are transparent when they are
invoked in the main proof.

\subsection{Directional and duality-map classes}

Both classes in this subsection are indexed by linear maps out of an
$r$-dimensional space, but their values may lie in an arbitrarily large
Hilbert space. The estimates therefore retain the parameter $r$ while
avoiding any dependence on the ambient dimension. The first class records the
pairing of the unit direction $J_1(Ay_i)$ with $z_i$. The second replaces
$J_1$ by $J_p$, which also records the magnitude of $Ay_i$. In both proofs,
one Gaussian sketch is shared by the entire function class. After sketching,
the functions are controlled by at most $Lr$ real parameters, independently
of the dimension of the Hilbert space. For the applications in this paper,
the reader may simply take $\cH=\R^D$ for an arbitrary $D$; the Hilbert-space
formulation emphasizes that the bounds do not depend on $D$.

\begin{lemma}[Directional entropy at $p=1$]\label{lem:directional}
Let $r,M\ge1$ and $B\ge0$, let $\cH$ be a real Hilbert space, and let
$\nu$ be a probability law supported on at most $M$ indices. Let
$y_i\in\R^r$ and $z_i\in\cH$ satisfy $\norm{z_i}\le B$. For arbitrary
linear maps $A:\R^r\to\cH$, define
\[
 g_A(i)=\ip{J_1(Ay_i)}{z_i}.
\]
Then, for every $\delta>0$,
\begin{equation}\label{eq:directional-entropy}
 \log N(\delta,\{g_A\},L_2(\nu))
 \le C B^2r\delta^{-2}\log\!\bigl(C(M+r+2)\bigr).
\end{equation}
The convention $J_1(0)=0$ is included.
\end{lemma}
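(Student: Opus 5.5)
The plan is an empirical-method (Maurey-type) argument built on a Gaussian linearization of $J_1$, combined with a packing argument. All Gaussians used will be shared by the whole class. Let $G$ be a standard Gaussian in the finite-dimensional span of $\{z_i\}$ together with the range of $A$; we can always restrict to such a span. For any $v\ne0$ and any $z$,
\[
 \E_G\bigl[\sgn(\ip{v}{G})\ip{G}{z}\bigr]=\sqrt{2/\pi}\,\ip{v/\norm{v}}{z}.
\]
Indeed, $\ip{G}{z}$ splits into its component along $v/\norm{v}$ plus an independent part. With $\sgn(0)=0$, the identity also matches the convention $J_1(0)=0$. Hence
\[
 g_A(i)=\sqrt{\pi/2}\,\E_G\bigl[\sgn(\ip{y_i}{A^*G})\ip{G}{z_i}\bigr].
\]

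Draw $G_1,\dots,G_L$ i.i.d. and set
\[
 \widehat g_A(i)=\frac{\sqrt{\pi/2}}{L}\sum_{l=1}^L\sgn(\ip{y_i}{A^*G_l})\ip{G_l}{z_i}.
\]
Each summand has second moment at most $\E\ip{G}{z_i}^2\le B^2$. Averaging over $\nu$ therefore gives, for every fixed $A$,
\[
 \E_G\norm{\widehat g_A-g_A}_{L_2(\nu)}^2\le \frac{\pi B^2}{2L}.
\]
Choose $L=\lceil CB^2\delta^{-2}\rceil$. Then Markov's inequality gives
\[
 \Prob_G\bigl[\norm{\widehat g_A-g_A}_{L_2(\nu)}>\delta/3\bigr]\le1/2
\]
for each fixed $A$. (If $B=0$ the class is $\{0\}$ and there is nothing to prove.)

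The main obstacle is that $\widehat g_A$ depends on the continuous quantities $\ip{G_l}{z_i}$, so one cannot directly exhibit a finite cover. I would circumvent this with a packing argument instead of a cover. Let $\{g_{A_1},\dots,g_{A_K}\}$ be a $\delta$-packing in $L_2(\nu)$. By the probability bound and averaging, some fixed realization $G_1,\dots,G_L$ approximates at least $K/2$ of the $g_{A_j}$ within $\delta/3$.

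Once the $G_l$ are fixed, the values $\ip{G_l}{z_i}$ are fixed numbers. Then $\widehat g_{A}$ restricted to the support of $\nu$ is determined by the sign array
\[
 \bigl(\sgn(\ip{y_i}{w_l})\bigr)_{i\in\operatorname{supp}\nu,\ l\le L},
 \qquad w_l:=A^*G_l\in\R^r.
\]
For each $l$, the number of sign vectors (with values in $\{-1,0,1\}$) that a homogeneous linear functional on $\R^r$ induces on at most $M$ points is at most $(C(M+r+2))^{r}$, by the standard Sauer/Warren-type count for hyperplane arrangements. Hence there are at most $(C(M+r+2))^{rL}$ possible approximants.

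Two well-approximated packing elements sharing the same array would satisfy
\[
 \norm{g_{A_j}-g_{A_{j'}}}_{L_2(\nu)}\le2\delta/3<\delta,
\]
contradicting the packing property. Therefore
\[
 K/2\le(C(M+r+2))^{rL}.
\]
Since a maximal packing is a cover, taking logarithms and substituting $L=CB^2\delta^{-2}$ yields \eqref{eq:directional-entropy}. The additive $\log2$ is absorbed because $rL\ge1$.

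The steps needing care are the Gaussian sign identity, including the degenerate case $Ay_i=0$, and the sign-pattern count with zeros allowed. Both are standard, so I expect the packing-with-shared-randomness step to be the only genuinely delicate point.
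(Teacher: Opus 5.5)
Your proposal is correct and follows essentially the same route as the paper: an unbiased Gaussian sign sketch of $J_1$ with Gaussians shared across a fixed finite packing (drawn in the finite span of the vectors $A_jy_i$ and $z_i$), Markov plus averaging to fix one realization that approximates a constant fraction of the packing, a zero-inclusive sign-pattern count of $[C(M+r+2)]^{r}$ per Gaussian, and the packing contradiction. You still need to treat the trivial regime $\delta\ge B$ separately, as the paper does first: there a single ball around $g_0=0$ covers the class because $\norm{g_A}_{L_2(\nu)}\le B$, whereas your argument breaks down for large $\delta$, since the ceiling forces $L=\lceil CB^2\delta^{-2}\rceil=1$ and neither the substitution $L\approx CB^2\delta^{-2}$ nor the absorption of the additive $\log 2$ into the stated bound is justified.
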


\begin{proof}
Since the zero map $A=0$ gives the zero function and $|g_A(i)|\le B$, a
single $L_2(\nu)$-ball centered at zero covers the class when $\delta\ge B$.
Hence assume $\delta<B$, so
$L=\lceil CB^2\delta^{-2}\rceil\ge2$ after fixing the absolute constant.
Fix an arbitrary finite $\delta$-packing $\cP$.  Only the finitely many
functions in this packing matter, so choose one representing map $A$ for
each of them and use $\cP$ also for this set of representatives. Only the
finitely many vectors $Ay_i$ with $A\in\cP$ and
$i\in\operatorname{supp}(\nu)$, together with the vectors $z_i$ for those
same indices, matter. Let $g_1,\ldots,g_L$ be independent standard
Gaussian vectors in their finite-dimensional span and put
\[
 \widehat g_A(i)=\frac{1}{\gamma_1L}
 \sum_{\ell=1}^L
 \sgn\!\bigl(\ip{g_\ell}{Ay_i}\bigr)\ip{g_\ell}{z_i},
\]
Here $\sgn(0)=0$, and $\gamma_1$ is the Gaussian moment defined in
\eqref{eq:gaussian-projection-moment}.
For $x\ne0$, rotational invariance gives
$\E[\sgn(\ip{g}{x})g]=\gamma_1J_1(x)$; both sides are zero when $x=0$.
Taking the inner product with $z_i$ gives
$\E\widehat g_A(i)=g_A(i)$ for $A\in\cP$, and
$\E|\widehat g_A(i)-g_A(i)|^2\le CB^2/L$. Averaging over $\cP$ and applying
Markov's inequality shows that the expected fraction of maps for which the
$L_2(\nu)$ error exceeds $\delta/8$ is at most $1/8$, after increasing the
constant in $L$. Fix a realization for which at least $7/8$ of the packing is
approximated within $\delta/8$.

For fixed $g_1,\ldots,g_L$, write $X=GA\in\R^{L\times r}$. The coefficients
$\ip{g_\ell}{z_i}$ are now fixed, so the sketch is determined by the ternary
signs of $(Xy_i)_\ell$. As the row $X_{\ell,*}$ varies in $\R^r$, these are
the signs of
$M_0:=|\operatorname{supp}(\nu)|\le M$ linear polynomials in $r$ variables.
Discard the identically zero forms, whose signs are fixed, and let
$M_1\le M_0$ be the number that remain. If $M_1\ge r$, the zero-inclusive
sign-assignment bound of Goldberg and
Jerrum~\cite[Corollary~2.1]{GJ95} gives at most $(8eM_1/r)^r$ patterns; if
$M_1<r$, the trivial bound is $3^{M_1}$. In either case the number is at most
$[C(M+r+2)]^r$. Multiplying over the $L$ rows gives at most
$[C(M+r+2)]^{Lr}$ possible sketches. The retained packing therefore satisfies
the logarithmic bound in \eqref{eq:directional-entropy}, and restoring the
discarded elements changes its size by at most a factor of $8/7$. The same
estimate holds for every finite packing, so a maximal packing yields the
required cover. The sign count includes zero signs, and hence no separate
continuity argument is needed.
\end{proof}

\begin{lemma}[Duality-map entropy]\label{lem:phase}
Let $r,M\ge1$ and $B\ge0$, let $\cH$ be a real Hilbert space, and let
$\nu$ be supported on at most $M$ indices. Let $y_i\in\R^r$,
$\alpha_i\ge0$, and $z_i\in\cH$ satisfy $\norm{z_i}\le B$. Let
$\mathcal A$ be a nonempty family of linear maps $A:\R^r\to\cH$. Fix
$1\le p\le2$ and define
\[
 h_A(i)=p\ip{J_p(\alpha_iAy_i)}{z_i}.
\]
If $1<p\le2$, define the empirical $p$-energy of the class by
\[
 E_\nu=\sup_{A\in\mathcal A}
 \sum_i\nu_i\norm{\alpha_iAy_i}^p,
\]
and assume $E_\nu<\infty$. Then, for every $\delta>0$,
\begin{equation}\label{eq:phase-entropy}
 \log N(\delta,\{h_A\},L_2(\nu))
 \le \widetilde O_{p,B}\!
 \left(rE_\nu^{\,2-2/p}\delta^{-2}\right).
\end{equation}
The logarithms depend only on $M+r+2$, $1+1/\delta$, and $1+E_\nu$. In
this range,
\begin{equation}\label{eq:phase-radius}
 \sup_A\norm{h_A}_{L_2(\nu)}
 \le C_{p,B}E_\nu^{\,1-1/p}.
\end{equation}
At $p=1$, the entropy is
$\widetilde O_B(r\delta^{-2})$ for every $\delta>0$, and the class has
radius $O(B)$.
\end{lemma}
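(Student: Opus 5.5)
The proof will mirror the sketch-and-count argument of \cref{lem:directional}, with the sign pattern replaced by a more informative Gaussian estimator of $J_p$. The radius bound \eqref{eq:phase-radius} comes first and is direct. Write $x_i=\alpha_iAy_i$. Then $|h_A(i)|\le pB\norm{x_i}^{p-1}$, so
\[
 \norm{h_A}_{L_2(\nu)}^2\le p^2B^2\sum_i\nu_i\norm{x_i}^{2p-2}.
\]
Since $2p-2\le p$ for $p\le2$, the natural route is to compare the exponents via Hölder. However, $\sum_i\nu_i\norm{x_i}^{2p-2}\le(\sum_i\nu_i\norm{x_i}^p)^{(2p-2)/p}=E_\nu^{2-2/p}$ holds only because $\nu$ is a probability measure (Jensen with the concave power $(2p-2)/p\le1$). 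That gives the radius $C_{p,B}E_\nu^{1-1/p}$. At $p=1$, $|h_A|\le B$ trivially.

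For the entropy, I would again fix a finite $\delta$-packing $\cP$, restrict to the finite-dimensional span of the relevant $x_i$ and $z_i$, and draw Gaussians $g_1,\dots,g_L$. Using \eqref{eq:gaussian-projection-moment}, differentiate $\E|\ip{g}{x}|^p=\gamma_p\norm{x}^p$ in $x$. This gives $\E[\,|\ip{g}{x}|^{p-1}\sgn(\ip{g}{x})\,g\,]=\gamma_pJ_p(x)$, with both sides zero at $x=0$. Define the unbiased estimator
\[
 \widehat h_A(i)=\frac{p}{\gamma_pL}\sum_{\ell=1}^L|\ip{g_\ell}{x_i}|^{p-1}\sgn(\ip{g_\ell}{x_i})\ip{g_\ell}{z_i}.
\]
Its variance per index is at most $C_pB^2\norm{x_i}^{2p-2}/L$. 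Averaging over $\nu$ and applying the same Jensen step, the expected squared $L_2(\nu)$ error is at most $C_{p,B}E_\nu^{2-2/p}/L$. Taking $L=\lceil C_{p,B}E_\nu^{2-2/p}\delta^{-2}\rceil$ and using Markov over the packing, as in \cref{lem:directional}, a fixed sketch approximates $7/8$ of $\cP$ within $\delta/8$.

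The main obstacle is the counting step. After fixing the $g_\ell$, the sketch is a function of the real matrix $X=GA\in\R^{L\times r}$ through the values $(\alpha_iXy_i)_\ell$, which now vary continuously rather than taking finitely many sign values. I would therefore discretize. Each retained $\widehat h_A$ is a sum of terms $\phi(\alpha_i(Xy_i)_\ell)c_{i\ell}$ with $\phi(t)=|t|^{p-1}\sgn t$ and fixed coefficients $c_{i\ell}=\ip{g_\ell}{z_i}$. I would round the $X$-dependence to a $\delta/8$-net in $L_2(\nu)$. The function $\phi$ is $(p-1)$-Hölder, so rounding each quantity $\alpha_i(Xy_i)_\ell$ to an additive grid of mesh $\eta$ changes $\widehat h_A(i)$ by at most $C\eta^{p-1}\max_\ell|c_{i\ell}|$. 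The coefficients satisfy $|c_{i\ell}|\lesssim B\sqrt{\log(ML)}$ outside a small-probability event, which is included in the choice of realization. The grid values can then be encoded as sign patterns of the linear forms $\alpha_i(Xy_i)_\ell-j\eta$, with $j$ ranging over a grid whose size is polynomial in $(1+E_\nu)/\delta$; large values can be truncated because a huge $|(Xy_i)_\ell|$ contributes little mass after the bound $E_\nu$. Goldberg--Jerrum applied to these $\operatorname{poly}(M,1/\delta,E_\nu)$ linear forms in the $r$ variables of each row $X_{\ell,*}$ then gives $(\operatorname{poly})^{r}$ patterns per row, hence $\exp(\wtO(Lr))$ sketches in total. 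This yields $\log|\cP|\le\wtO_{p,B}(rE_\nu^{2-2/p}\delta^{-2})$, with logarithms in $M+r+2$, $1+1/\delta$, and $1+E_\nu$. If truncating the large values proves delicate, the fallback is to cap them at level $E_\nu^{1/p}\nu_{\min}^{-1/p}$; this cap is polynomial in $M$ only when $\nu$ is the empirical law, which is the case in the application. Finally, a maximal packing gives the cover. The $p=1$ endpoint is exactly \cref{lem:directional}, since $h_A=g_A$ there and $\alpha_i\ge0$ does not change $J_1$.
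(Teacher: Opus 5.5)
Up to the counting step your plan is the paper's: the radius bound by Jensen under the probability law $\nu$, the unbiased Gaussian estimator of $J_p$ with per-index variance $C_pB^2\norm{x_i}^{2p-2}/L$, the choice $L\approx E_\nu^{2-2/p}\delta^{-2}$, and fixing one sketch that serves most of a packing. The gap is your uniform truncation of the sketched values $v_{i\ell}=\ip{g_\ell}{x_i}$.

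First, $E_\nu$ does not control these values directly. The relevant quantity is the sketched energy $\sum_i\nu_i\sum_\ell|v_{i\ell}|^p$. It equals $\gamma_pL\sum_i\nu_i\norm{x_i}^p$ only in expectation, so you need an additional Markov step over the packing. Granting that step, Cauchy--Schwarz over $\ell$ together with $|v|^{2p-2}\le T^{p-2}|v|^p$ on $\{|v|>T\}$ bounds the squared $L_2(\nu)$ clipping error by $C_pB'^2T^{p-2}E_\nu$, where $B'$ is your envelope for $|\ip{g_\ell}{z_i}|$. For fixed $p<2$, the level $T\approx(B'^2E_\nu\delta^{-2})^{1/(2-p)}$ therefore works.

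At $p=2$, which the lemma includes, the factor $T^{p-2}$ equals $1$, and no level independent of $\nu_{\min}$ works. Take $r=1$, $\cH=\R$, $\nu=(1-\epsilon,\epsilon)$, $\alpha=(0,1)$, $y_2=1$, $z_2=B$, and scalar maps $A=a$ with $|a|\le\epsilon^{-1/2}$. Then $E_\nu=1$ and $\norm{h_A}_{L_2(\nu)}=2\sqrt\epsilon\,|a|B$. For $a=\epsilon^{-1/2}$, clipping the sketched values at any $T\ll\epsilon^{-1/2}$ moves $\widehat h_A$ by $\Theta(B)$ in $L_2(\nu)$. Yet the class itself has only $O(\log(1+B/\delta))$ entropy. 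Your fallback cap puts $\log(1/\nu_{\min})$ into the bound. The statement rules this out: for arbitrary $\nu$ on at most $M$ points, the logarithms may depend only on $M+r+2$, $1+1/\delta$, and $1+E_\nu$.

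The paper avoids truncation altogether. It works in the quotient of $\R^{L\times r}$ by the null space of the $\nu$-weighted seminorm $\norm{X}_{\nu,\alpha,p}=(\sum_i\nu_i\alpha_i^p\norm{Xy_i}_2^p)^{1/p}$. A Markov step places most packing elements in the ball of radius $C_p\sqrt L\,E_\nu^{1/p}$. It then covers that ball volumetrically in dimension at most $Lr$, using $\norm{\widehat h_X-\widehat h_{X'}}_{L_2(\nu)}\le C_pB\kappa L^{(1-p)/2}\norm{X-X'}_{\nu,\alpha,p}^{p-1}$. Because the net is measured in a $\nu$-weighted norm, large values at light indices cost nothing, and the argument is uniform over $1<p\le2$.

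If you prefer to keep your sign-pattern count, make the grid index-dependent. Since $\nu_i\norm{x_i}^p\le E_\nu$, a Gaussian tail bound and one more Markov step show that most packing elements satisfy $|v_{i\ell}|\le T_i:=(E_\nu/\nu_i)^{1/p}\sqrt{2\log(CML)}$ for all $i,\ell$. Choose the mesh $\eta_i$ by $\eta_i^{2p-2}=c\,\delta^2/(B'^2M\nu_i)$. Then the rounding error is at most $\delta/8$, and
\[
 \frac{T_i}{\eta_i}
 =E_\nu^{1/p}\sqrt{2\log(CML)}
 \Bigl(\frac{B'^2M}{c\,\delta^2}\Bigr)^{1/(2p-2)}
 \nu_i^{(2-p)/(p(2p-2))}.
\]
The exponent of $\nu_i$ is nonnegative, so this ratio is polynomial in $M$, $E_\nu$, $1/\delta$, and $B'$. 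Goldberg--Jerrum then again gives $\exp(\wtO(Lr))$ sketches for every $1<p\le2$.

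Finally, at $p=1$ your $h_A$ is not literally $g_A$ when some $\alpha_i=0$. Apply \cref{lem:directional} to the vectors $\alpha_iy_i$ instead.
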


\begin{proof}
Assume first $1<p\le2$, $B>0$, and $E_\nu>0$. Put
$\psi_p(0)=0$ and $\psi_p(t)=|t|^{p-2}t$ for $t\ne0$. Fix an arbitrary finite
$\delta$-packing $\cP$, choose one representing map $A$ for each function,
and use $\cP$ also for this set of representatives. Let $\mathcal H_0$ be
the span of the finitely many vectors $Ay_i$ with $A\in\cP$ and
$i\in\operatorname{supp}(\nu)$, together with the vectors $z_i$ for those
same indices. Let
$g_1,\ldots,g_L$ be independent standard Gaussians in
$\mathcal H_0$ and define
\begin{equation}\label{eq:phase-sketch}
 \widehat h_A(i)=\frac{p}{\gamma_pL}\sum_{\ell=1}^L
 \psi_p\!\bigl(\ip{g_\ell}{\alpha_iAy_i}\bigr)\ip{g_\ell}{z_i}.
\end{equation}
Differentiating
$\E|\ip{g}{x}|^p=\gamma_p\norm{x}^p$ gives
\[
 \E\bigl[\psi_p(\ip{g}{x})g\bigr]=\gamma_pJ_p(x),
\]
and hence $\E\widehat h_A=h_A$. To bound the variance despite the correlation
between the two Gaussian factors, suppose first that $x\ne0$. Write
$\widehat x=x/\norm{x}$ and $z=a\widehat x+z_\perp$, where
$z_\perp\perp x$. For a standard scalar Gaussian $\xi$, independence of the
parallel and orthogonal Gaussian components gives
\[
\begin{aligned}
 \E\left|\psi_p(\ip{g}{x})\ip{g}{z}\right|^2
 &=\norm{x}^{2p-2}
   \left(a^2\E|\xi|^{2p}
       +\norm{z_\perp}^2\E|\xi|^{2p-2}\right)\\
 \le C_p\norm{z}^2\norm{x}^{2p-2}.
\end{aligned}
\]
The same bound is immediate when $x=0$.
Since $\nu$ is a probability law and $2p-2\le p$, monotonicity of moments
gives, for every nonnegative sequence $(a_i)$,
\[
 \sum_i\nu_i a_i^{2p-2}
 \le\left(\sum_i\nu_i a_i^p\right)^{(2p-2)/p}.
\]
Applying this inequality to $a_i=\norm{\alpha_iAy_i}$ yields
\begin{equation}\label{eq:phase-mse}
 \sup_A\E\norm{\widehat h_A-h_A}_{L_2(\nu)}^2
 \le \frac{C_{p,B}E_\nu^{\,2-2/p}}{L}.
\end{equation}
The pointwise bound
$|h_A(i)|\le pB\norm{\alpha_iAy_i}^{p-1}$ and the same moment inequality give
\begin{equation}\label{eq:phase-radius-proof}
 \norm{h_A}_{L_2(\nu)}
 \le pB\left(\sum_i\nu_i\norm{\alpha_iAy_i}^{2p-2}\right)^{1/2}
 \le pB E_\nu^{1-1/p}.
\end{equation}
Thus the class has diameter at most $2pB E_\nu^{1-1/p}$. When
$\delta\ge2pB E_\nu^{1-1/p}$, a single ball centered at any member of the
class gives a cover. In the remaining case, choose
\begin{equation}\label{eq:phase-L}
 L=\left\lceil C_{p,B}E_\nu^{2-2/p}\delta^{-2}\right\rceil.
\end{equation}
The nontrivial-scale assumption makes the quantity inside the ceiling
bounded below by a positive constant depending only on $p,B$.

Write $Gx=(\ip{g_\ell}{x})_{\ell=1}^L$, $X_A=GA$, and equip
$\R^{L\times r}$ with the quotient seminorm
\[
 \norm{X}_{\nu,\alpha,p}
 =\left(\sum_i\nu_i\alpha_i^p\norm{Xy_i}_2^p\right)^{1/p}.
\]
This seminorm can vanish on a nonzero matrix, because only its action on the
vectors $\alpha_i y_i$ in the support of $\nu$ matters. We therefore identify
two matrices when their difference has zero seminorm.
Let $\mathcal N=\{X:\norm{X}_{\nu,\alpha,p}=0\}$.  The quotient
$\mathcal V=\R^{L\times r}/\mathcal N$ is a normed space of dimension at
most $Lr$; the sketched function depends only on the class of $X$ in this
quotient.  Gaussian moments give
\[
 \E\norm{X_A}_{\nu,\alpha,p}^p
 \le C_pL^{p/2}E_\nu.
\]

We next choose one sketch that works for a large subset of the packing.
Increase the constant in \eqref{eq:phase-L} so that, for every $A\in\cP$,
\[
 \Prb\{\norm{\widehat h_A-h_A}_{L_2(\nu)}>\delta/8\}\le1/64,
 \qquad
 \Prb\{\norm{X_A}_{\nu,\alpha,p}>
        C_p\sqrt L E_\nu^{1/p}\}\le1/64.
\]
The first inequality follows from \eqref{eq:phase-mse} and Markov's
inequality, and the second follows from the preceding $p$th-moment estimate.
The expected fraction of packing
elements failing each inequality is at most $1/64$; a second Markov
inequality makes each fraction at most $1/8$ except on an event of
probability at most $1/8$.

For every $i\in\operatorname{supp}(\nu)$, Gaussian concentration for the
norm of a Gaussian vector gives
\[
 \Prb\!\left\{\norm{Gz_i}>
 C B\sqrt{L+\log(8M+8)}\right\}\le\frac1{8M+8}.
\]
A union bound makes this simultaneous envelope event have probability at
least $7/8$. The two fraction bounds and the envelope event therefore hold
simultaneously with positive probability. Fix such a realization of $G$ and
remove the two exceptional subsets. At least $3|\cP|/4$ elements remain;
each retained function is approximated within $\delta/8$, and each retained
$X_A$ lies in the quotient ball
\begin{equation}\label{eq:phase-ball}
 \norm{X_A}_{\nu,\alpha,p}\le C_p\sqrt L\,E_\nu^{1/p}.
\end{equation}
Moreover, the retained sketched functions form a $3\delta/4$-packing by the
triangle inequality.

For a matrix $X$, let $\widehat h_X$ denote the function obtained from
\eqref{eq:phase-sketch} by replacing $X_A$ with $X$. The H\"older estimate
$|\psi_p(a)-\psi_p(b)|\le C_p|a-b|^{p-1}$, followed by Cauchy--Schwarz,
and
$\norm{v}_{2p-2}^{p-1}\le
L^{(2-p)/2}\norm{v}_2^{p-1}$ imply, row by row and then in
$L_2(\nu)$,
\[
\begin{split}
 \norm{\widehat h_X-\widehat h_{X'}}_{L_2(\nu)}
 &\le C_pB\kappa L^{(1-p)/2}
 \left(\sum_i\nu_i
   \norm{\alpha_i(X-X')y_i}^{2p-2}\right)^{1/2}\\
 &\le C_pB\kappa L^{(1-p)/2}
 \norm{X-X'}_{\nu,\alpha,p}^{p-1},
 \qquad
 \kappa=\left(1+\frac{\log(8M+8)}L\right)^{1/2}.
\end{split}
\]
The last step again uses monotonicity of moments under the probability law
$\nu$. Consequently, two quotient points within distance
\[
 \eta=c_{p,B}\sqrt L\,
       (\delta/(B\kappa))^{1/(p-1)}
\]
produce sketched functions within $\delta/8$, after reducing $c_{p,B}$ if
necessary. If $\eta$ exceeds the radius in \eqref{eq:phase-ball}, one net
point suffices. Otherwise, take a maximal $\eta$-separated subset of the ball.
The radius-$\eta/2$ balls around its points are disjoint and lie in the ball
with radius enlarged by $\eta/2$. Comparing their volumes in the at most
$Lr$ dimensional space $\mathcal V$ gives at most
\begin{equation}\label{eq:phase-volume}
 \left(1+
 C_{p,B}E_\nu^{1/p}
       (B\kappa/\delta)^{1/(p-1)}\right)^{Lr}
\end{equation}
mesh points. No two retained $3\delta/4$-separated sketches can be assigned
to the same mesh point. Hence $|\cP|$ is at most twice
\eqref{eq:phase-volume}. Substituting \eqref{eq:phase-L} and absorbing the
logarithm of the parenthesis into
$\polylog(M+r+2,1+\delta^{-1},1+E_\nu)$ proves
\eqref{eq:phase-entropy}.  Since the argument applies to every finite
packing, a maximal packing gives a cover after changing $\delta$ by an
absolute constant.

The asserted radius bound \eqref{eq:phase-radius} follows from
\eqref{eq:phase-radius-proof}. If
$E_\nu=0$, every $\alpha_iAy_i$ vanishes on the support of $\nu$, so the
class is the zero function.  If $B=0$ the same conclusion is immediate.

At $p=1$, positive $\alpha_i$ disappear inside $J_1$, while zero
$\alpha_i$ give a fixed zero coordinate. Apply the packing argument of
\cref{lem:directional} to the vectors $\alpha_i y_i$ (or, equivalently,
delete the zero coordinates). Its use of the same Gaussian sketch for every
function, together with the zero-inclusive sign count,
gives $\widetilde O_B(r\delta^{-2})$, including the strata $Ay_i=0$, and
$|h_A(i)|\le B$ gives the asserted radius.  This also covers $E_\nu=0$ at
the endpoint.
\end{proof}

\begin{corollary}[Conditional duality-map complexity]\label{cor:phase-rad}
For an empirical law $\nu=m^{-1}\sum_{s=1}^m\delta_{I_s}$, the class in
\cref{lem:phase} satisfies, conditionally on the sampled indices,
\[
\Rad_m^{\rm abs}(\{h_A\}\mid I_1,\ldots,I_m)
 \le \widetilde O_{p,B}\!
 \left(\left(E_\nu^{\,1-1/p}
       +E_\nu^{\,(1-1/p)/2}\right)\sqrt{\frac rm}\right)
\]
for $p>1$, and $\widetilde O_B(\sqrt{r/m})$ for $p=1$.
\end{corollary}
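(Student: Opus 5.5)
The plan is to plug the entropy and radius bounds of \cref{lem:phase} into the truncated Dudley inequality \eqref{eq:standard-dudley}, conditionally on the sample, with $\nu=P_m$ the empirical law. Then $M\le m$, and the logarithmic factors in \eqref{eq:phase-entropy} are polylogarithmic in $m+r+2$, $1+1/\delta$ and $1+E_\nu$; they are suppressed in $\widetilde O$. The symmetrized class $\{h_A\}^\pm=\{h_A\}\cup\{-h_A\}\cup\{0\}$ has covering numbers at most twice those of $\{h_A\}$ plus one. This costs only an additive constant in the log-entropy.

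\textbf{The case $1<p\le 2$.} If $E_\nu=0$ the class is zero and there is nothing to prove, so assume $E_\nu>0$. By \eqref{eq:phase-radius}, $R_m\le C_{p,B}E_\nu^{1-1/p}$, and
\[
\sqrt{\log N(u,\{h_A\}^\pm,L_2(P_m))}\le \widetilde O_{p,B}\bigl(\sqrt r\,E_\nu^{1-1/p}u^{-1}\bigr)+O(1).
\]
Take the truncation scale $\alpha=R_m/m$; if $R_m=0$ the Rademacher complexity vanishes anyway. The entropy integral then gives
\[
\frac{C}{\sqrt m}\int_\alpha^{R_m}\Bigl(\sqrt r\,E_\nu^{1-1/p}u^{-1}\,\mathrm{polylog}+1\Bigr)du
\le \widetilde O_{p,B}\Bigl(E_\nu^{1-1/p}\sqrt{r/m}\,\log m\Bigr)+\frac{CR_m}{\sqrt m}.
\]
The truncation term $4\alpha=4R_m/m$ and the term $CR_m/\sqrt m$ are both at most $C_{p,B}E_\nu^{1-1/p}\sqrt{r/m}$, since $r\ge1$. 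This already yields the $E_\nu^{1-1/p}$ term.

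The main obstacle is that the log factors contain $\log(1+1/\delta)$ at scales $\delta$ down to $\alpha$, which involves $E_\nu$. When $E_\nu$ is tiny, $\log(1/\alpha)$ could be large. I would handle this by picking the truncation more carefully, taking $\alpha=\min\{R_m,\,E_\nu^{(1-1/p)/2}/m\}$. The tail $4\alpha$ then contributes $E_\nu^{(1-1/p)/2}/m\le E_\nu^{(1-1/p)/2}\sqrt{r/m}$. At scales $u\ge\alpha$, the factor $\log(1+1/u)$ is bounded by $\log(1+m)+\tfrac12(1-1/p)\log(1+1/E_\nu)$. To absorb this last logarithm into a pure power, I would use $E_\nu^{1-1/p}\log(1+1/E_\nu)\lesssim_p E_\nu^{(1-1/p)/2}$ when $E_\nu\le 1$. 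When $E_\nu\ge1$, the $\log(1+E_\nu)$ factors are declared part of $\widetilde O$, as in the application. This is exactly why the second power $E_\nu^{(1-1/p)/2}$ appears in the statement, and summing the two regimes gives the claimed bound.

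\textbf{The case $p=1$.} The radius is $O(B)$ and the entropy is $\widetilde O_B(r\delta^{-2})$ with logs in $m+r+2$ and $1+1/\delta$. Take $\alpha=B/m$ (or $R_m$ if smaller). The same integral then gives $\widetilde O_B(\sqrt{r/m}\,\log m)$ plus $O(B/m)$, which is $\widetilde O_B(\sqrt{r/m})$.
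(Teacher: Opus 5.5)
Your proposal is correct and follows the paper's own argument. You apply the truncated Dudley bound to the symmetrized class using the entropy and radius estimates of \cref{lem:phase}, absorb the lower-scale logarithm via $x^{a}\log^{c}(1+1/x)\lesssim x^{a/2}$ when $E_\nu\le 1$ and leave it among the $1+E_\nu$ logarithms otherwise, and treat $p=1$ directly. Your truncation at $\min\{R_m,E_\nu^{(1-1/p)/2}/m\}$, instead of the paper's ``radius divided by $m$'', is a harmless and slightly more careful variant: it keeps the lower-scale logarithm depending only on $m$ and $E_\nu$, even when the actual radius $R_m$ is much smaller than $E_\nu^{1-1/p}$.
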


\begin{proof}
Suppose first that $p>1$. If $E_\nu=0$, the class is zero by
\cref{lem:phase}. Otherwise apply the truncated Dudley bound
\eqref{eq:standard-dudley} to the class with zero and negatives adjoined,
using \eqref{eq:phase-entropy} and \eqref{eq:phase-radius}. Adjoining these
functions increases each covering number by at most a constant factor.
Truncate the entropy integral at the $L_2(\nu)$ radius divided by $m$ and
write $a=1-1/p$. Up to the stated
logarithmic factors, \eqref{eq:phase-entropy} gives
\[
 \sqrt{\log N(u,\{h_A\},L_2(\nu))}
 \le \widetilde O_{p,B}\!\left(\frac{\sqrt r\,E_\nu^a}{u}\right),
\]
while \eqref{eq:phase-radius} bounds the integration range by
$O_{p,B}(E_\nu^a)$. Substitution into the Dudley integral gives the
$E_\nu^a\sqrt{r/m}$ term. For $E_\nu\le1$, the additional lower-scale
logarithm is absorbed using
$x^a(1+\log(1/x))^c\le C_{a,c}x^{a/2}$; for $E_\nu>1$, it is already among
the logarithms in $1+E_\nu$. At $p=1$, the same truncated Dudley bound
applied to the endpoint entropy and radius in \cref{lem:phase} gives the
stated estimate directly.
\end{proof}

\subsection{Entropy of row-augmented subspace distances}

The next estimate is used for the component of a residual vector orthogonal
to both the query subspace and its associated low-rank row. It bounds the
metric entropy of this class in terms of the query dimension $k$, with no
dependence on the ambient dimension.

Related uses of random projections for affine-space distances, projective
clustering, and subspace approximation appear in
\cite{Mag07,KR15,CW25}. Here we need a different guarantee: an $L_2(q)$
packing bound for the row-dependent augmented spaces
$F+\operatorname{span}\{b_i\}$.

\begin{theorem}[Row-augmented subspace entropy]\label{thm:angular}
Let $\cH$ be a finite-dimensional real Hilbert space and let $q$ be a
finitely supported probability law. Let $u_i\in\cH$ be unit vectors and let
$b_i\in\cH$ be arbitrary fixed vectors, including zero. For every
$k$-dimensional subspace $F\subseteq\cH$, define
\[
 f_F(i)=\dist(u_i,F+\operatorname{span}\{b_i\})^p,\qquad 1\le p\le2.
\]
For $0<\delta\le1$ and $k\ge1$,
\begin{equation}\label{eq:angular-entropy}
 \log N(\delta,\{f_F\},L_2(q))
 \le C_pk\delta^{-2}\polylog(k+2,1/\delta),
\end{equation}
independently of the number of rows and the ambient dimension.
\end{theorem}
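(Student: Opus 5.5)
We follow the same packing-plus-sketch template as \cref{lem:directional} and \cref{lem:phase}: fix an arbitrary finite $\delta$-packing $\cP$ of $\{f_F\}$ in $L_2(q)$, find one random linear map $G:\cH\to\R^L$ that approximates most packed functions, and count the distinct sketched functions through a low-dimensional parametrization. Because $x\mapsto x^{p}$ is $p$-Lipschitz on $[0,1]$ and $0\le f_F\le1$, it suffices to work with the unpowered distances
\[
d_F(i)=\dist(u_i,F+\operatorname{span}\{b_i\}).
\]
The target is to show that, for $L=\lceil C\delta^{-2}\polylog\rceil$ and an orthonormal basis $V_F$ of $F$, the quantity $d_F(i)$ is determined up to additive error $\delta/8$, on most indices in $L_2(q)$ measure, by the sketched data $(GV_F, Gu_i, Gb_i)$. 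Since $(Gu_i,Gb_i)$ is fixed once $G$ is fixed, the only free parameter is $X=GV_F\in\R^{L\times k}$, which has $Lk$ real coordinates. This gives the intended entropy $k\delta^{-2}\polylog$.

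The key steps are as follows. First, for a fixed $F$ and $i$, the space $W_i=F+\operatorname{span}\{b_i\}+\operatorname{span}\{u_i\}$ has dimension at most $k+2$. A Gaussian $G$ scaled by $L^{-1/2}$ with $L\gtrsim k\,\delta^{-2}$ would be a $(1\pm\delta)$ subspace embedding of $W_i$ with constant probability. That choice is too expensive, since we want $L$ to be independent of $k$ in the product $Lk$. Instead we need only an expected-error statement: $\E|\widehat d_F(i)-d_F(i)|^2\lesssim 1/L$ up to logs, where
\[
\widehat d_F(i)=\dist(Gu_i,\operatorname{span}(GV_F)+\operatorname{span}\{Gb_i\}).
\]
Then Markov's inequality over the packing and over $q$ lets us fix one realization of $G$ that is good for at least $7/8$ of $\cP$, exactly as in \cref{lem:directional}. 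Second, once $G$ is fixed, $\widehat d_F$ depends on $F$ only through the column span of $X$ in $\R^L$. This span is a point of $\Gr(\le k,L)$, which has dimension at most $kL$. Third, we count the sketched functions. We discretize the Grassmannian of $\R^L$ at scale $\delta/8$ in operator distance between projectors, which costs $(C/\delta)^{kL}$ points. We also check that $\widehat d$ is $O(1)$-Lipschitz in the projector, using $\norm{Gu_i}\lesssim\sqrt{1+\log(M)/L}$ on an envelope event, as in \cref{lem:phase}. The retained sketched functions are $3\delta/4$-separated, so $\log|\cP|\le C kL\log(C/\delta)$. Substituting $L$ then gives \eqref{eq:angular-entropy}.

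The main obstacle is the first step. The estimator $\widehat d_F(i)$ is a distance to a sketched $(k+1)$-dimensional space in only $L\ll k$ dimensions, and then $\operatorname{span}(GV_F)$ may fill all of $\R^L$. In that regime the naive sketch is badly biased: $\widehat d_F(i)$ tends toward $0$ when $L\le k+1$. The first thing we would try is to avoid sketching $F$ at all and sketch only the residual. Let $Q_i$ be the projector orthogonal to $\operatorname{span}\{b_i\}$, and write
\[
d_F(i)=\norm{Q_{F_i'}Q_iu_i}, \qquad F_i'=Q_iF .
\]
We then approximate this via the duality or low-distortion structure: $d_F(i)^2=\norm{Q_iu_i}^2-\norm{P_{F_i'}Q_iu_i}^2$. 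Here $\norm{P_{F'}v}^2$ is a quadratic form of the projector, which is a degree-two polynomial in a rank-$k$ matrix. For such functions, sign and threshold counting as in the Goldberg--Jerrum argument of \cref{lem:directional} yields pseudodimension $O(k\log k)$ after we handle the $b_i$-dependence. If that route works, \eqref{eq:pdim-entropy} gives an even smaller entropy $k\log(1/\delta)$, which is certainly within the claimed bound. We would pursue this pseudodimension route first, with the Gaussian-sketch argument as the fallback, and would accept the $\delta^{-2}$ factor as slack.
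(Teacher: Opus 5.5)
Your primary route fails. Goldberg--Jerrum counts real parameters, and a rank-$k$ projector on $\cH$ has $k(\dim\cH-k)$ of them. The pseudodimension you would obtain therefore grows with the ambient dimension, and the claim $O(k\log k)$ is false, not merely unproven. For instance, take $k=1$, $b_i=0$, and $u_i=e_i$ for $i\le d$ in $\R^{d+1}$, with thresholds $t_i=(1-1/(2d))^{p/2}$. The line spanned by the normalized indicator of $[d]\setminus S$ (or by $e_{d+1}$ when $S=[d]$) realizes the pattern $S$, so $d$ points are shattered. Similarly, take an $e^{cd}$-size family of lines with pairwise $|\langle f_a,f_b\rangle|\le1/2$, and let $q$ be a fine empirical approximation of the uniform law on $S^{d-1}$. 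These lines are pairwise $\gtrsim_p1/d$ apart in $L_2(q)$, so any dimension-free entropy bound must grow polynomially in $1/\delta$. The $\delta^{-2}$ is not slack, and a compression step is unavoidable.

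Your fallback stops at exactly the obstacle the proof has to overcome, without resolving it. The missing idea is to sketch into $L=k+1+s$ dimensions with $s=\lceil C_p\delta^{-2}\rceil$, and to rescale by $c_m=(L/(L-m))^{p/2}$, where $m=\dim(F+\operatorname{span}\{b_i\})$. Conditional on the restriction of $G$ to $F+\operatorname{span}\{b_i\}$, the image of the residual of $u_i$ is an independent isotropic Gaussian. The rescaled sketched quantity is then exactly $\dist(u_i,F+\operatorname{span}\{b_i\})^p(X_\ell/\ell)^{p/2}$ with $X_\ell\sim\chi^2_\ell$ and $\ell\ge s$, so its mean-square error is at most $2/s$ however large $k$ is. Without $c_m$, the shrinkage $(1-m/L)^{p/2}$ is far from one when $k\gg s$.

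The count must then use the fact that, once $G$ is fixed, the sketched function depends only on the image of $F$ in $\Gr(k,L)$, whose dimension is $k(L-k)=k(s+1)$. Your count of $kL$ coordinates would leave a $k^2$ term.

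Your discretization step also fails, because $E\mapsto\dist(x,E+\operatorname{span}\{a\})$ is not Lipschitz in the projector. With $a=e_1$ and $x=e_3$, the lines through $e_1+\epsilon e_2$ and $e_1+\epsilon e_3$ are $O(\epsilon)$ apart but give distances $1$ and $0$. Since many indices may share the same $Gb_i$, a net on the Grassmannian need not give an $L_2(q)$ cover. The paper avoids this by bounding the pseudodimension of the compressed class through sign-pattern counting on the $\binom{L}{L-k}$ Grassmann charts, each with $k(L-k)$ parameters. It treats the stratum $Gb_i\in E$ as a separate semialgebraic branch, and then applies the VC-subgraph entropy bound after dividing by the envelope $c_{k+1}\norm{Gu_i}^p$.
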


\begin{proof}
Fix a finite $\delta$-packing $\cP$. We first compress it to
$k+O_p(\delta^{-2})$ dimensions and retain a constant fraction whose
pairwise distances survive the compression. We then bound the
pseudodimension of the compressed class using Grassmann charts.

Let
\[
 s=\lceil C_p\delta^{-2}\rceil,\qquad L=k+1+s,
\]
and take a Gaussian map $R:\cH\to\R^L$ represented in orthonormal bases by
a matrix with independent $N(0,1/L)$ entries. Put
$G_i(F)=F+\operatorname{span}\{b_i\}$,
$m_i(F)=\dim G_i(F)\in\{k,k+1\}$, and
$c_m=(L/(L-m))^{p/2}$.  Define
\[
 h_{F,R}(i)=c_{m_i(F)}\dist(Ru_i,RG_i(F))^p.
\]
The normalization $c_m$ removes the deterministic shrinkage caused by the
dimension of the orthogonal complement. Without it, the projected distance
contains a factor $(1-m/L)^{p/2}$, whose squared deviation from one need not
be $O(1/s)$ when $k\gg s$.

To identify the distribution of the projected distance, decompose $u_i$ into
its projection onto $G_i(F)$ and an orthogonal residual $v_{i,F}$. Put
$\rho_{i,F}=\norm{v_{i,F}}=\dist(u_i,G_i(F))$ and
$\ell_{i,F}=L-m_i(F)$. Conditional on $R|_{G_i(F)}$, the vector $Rv_{i,F}$
is independent of $RG_i(F)$ and has covariance
$\rho_{i,F}^2I_L/L$. Its projection onto $(RG_i(F))^\perp$ therefore has
$\ell_{i,F}$ Gaussian degrees of freedom, and
\begin{equation}\label{eq:chi-square}
 h_{F,R}(i)=\rho_{i,F}^p
 \left(\frac{X_{\ell_{i,F}}}{\ell_{i,F}}\right)^{p/2},
 \qquad X_{\ell_{i,F}}\sim\chi^2_{\ell_{i,F}},\quad
 \ell_{i,F}\ge s.
\end{equation}
almost surely. Since $\rho_{i,F}\le1$,
$\E(X_\ell/\ell-1)^2=2/\ell$, and
$|x^{p/2}-1|\le|x-1|$ for $0<p/2\le1$, we obtain
\[
 \E_R\norm{h_{F,R}-f_F}_{L_2(q)}^2\le2/s.
\]
Increase the constant in $s$ until the right-hand side is at most
$\delta^2/2^{13}$.  For each $F\in\cP$, Markov gives
\[
 \Prb\{\norm{h_{F,R}-f_F}_{L_2(q)}>\delta/8\}\le1/128.
\]
Thus the expected fraction of elements of $\cP$ that fail this bound is at
most $1/128$, and another application of Markov's inequality makes that
fraction at most $1/8$ except
on an event of probability at most $1/16$.

Since $u_i$ is a unit vector and Gaussian moments are bounded for fixed
$p\le2$, $\E_R\sum_iq_i\norm{Ru_i}^{2p}\le C_p$.  Markov therefore shows
that the event
\begin{equation}\label{eq:angular-envelope-event}
 \sum_iq_i\norm{Ru_i}^{2p}\le C_p.
\end{equation}
has probability at least $7/8$, after enlarging $C_p$. A Gaussian map is
also injective on each of the finitely many relevant spaces $G_i(F)$ with
probability one, since each has dimension at most $k+1\le L$. We may
therefore fix a realization $R$ for which the packing-retention and envelope
events hold and all relevant restrictions are injective. In particular,
$RG_i(F)$ has dimension $k+1$ whenever
$b_i\notin F$. Discard the packing elements that fail the approximation
bound. The retained set $\cP_R$ has size at least $7|\cP|/8$, and its sketched
functions are a $3\delta/4$-packing by the triangle inequality.

Fix this $R$ and put $x_i=Ru_i$ and $a_i=Rb_i$. Injectivity on $G_i(F)$
implies $m_i(RF)=m_i(F)$, so the retained sketches belong to the class
obtained by allowing every $E\in\Gr(k,L)$. Define
\[
 h_E(i)=c_{m_i(E)}
 \dist(x_i,E+\operatorname{span}\{a_i\})^p,
\quad
 m_i(E)=\begin{cases}k,&a_i\in E,\\k+1,&a_i\notin E.\end{cases}
\]
With $c_*=c_{k+1}$ and $R_i^{\rm env}=c_*\norm{x_i}^p$, we have
$0\le h_E(i)\le R_i^{\rm env}$ and
$\mathcal R^2:=\sum_iq_i(R_i^{\rm env})^2\le C_pc_*^2$ by
\eqref{eq:angular-envelope-event}. Set $\eta=\delta/16$. If
$\mathcal R\le\eta$, one ball centered at zero covers the class, so the
retained packing has at most one element. Otherwise, discard the zero-envelope
coordinates and set
\[
 g_E(i)=h_E(i)/R_i^{\rm env},\qquad
 \widehat q_i=q_i(R_i^{\rm env})^2/\mathcal R^2.
\]
Then $0\le g_E\le1$ and
\begin{equation}\label{eq:angular-normalization}
 \norm{h_E-h_{E'}}_{L_2(q)}
 =\mathcal R\norm{g_E-g_{E'}}_{L_2(\widehat q)}.
\end{equation}

We next bound the pseudodimension of the compressed class, treating the
degenerate branch $a_i\in E$ separately.
Write $\ell=L-k=s+1$.  On a standard Grassmann chart parameterize
$E^\perp$ by the columns of
\[
 Z_B=\binom{B}{I_\ell},\qquad B\in\R^{k\times\ell},
\quad
 Q_B=Z_B(Z_B^\top Z_B)^{-1}Z_B^\top.
\]
Here $Q_B$ is the orthogonal projector onto $E^\perp$.
There are $\binom{L}{\ell}$ such charts and
$d_{\rm ch}=k\ell$ parameters per chart.
Let $\Delta=\det(I_\ell+B^\top B)>0$.  For fixed $x=x_i$, $a=a_i$, write
\[
 x^\top Q_Bx=N_X/\Delta,\quad
 a^\top Q_Ba=N_A/\Delta,\quad
 x^\top Q_Ba=N_C/\Delta,
\]
where the numerators have degree $O(\ell)$ and $N_A\ge0$.  If $N_A>0$,
\begin{equation}\label{eq:angular-ratio}
 \dist(x,E+\operatorname{span}\{a\})^2
 =\frac{N_XN_A-N_C^2}{\Delta N_A};
\end{equation}
if $N_A=0$, equivalently $a\in E$, the squared distance is $N_X/\Delta$.
For a pseudodimension threshold $\tau_i$ at sample point $i$, the inequality
$g_E(i)>\tau_i$ is identically true when $\tau_i<0$ and identically false
when $\tau_i\ge1$.
For $0\le\tau_i<1$, set, separately for $m=k,k+1$,
\[
 \theta_{i,m}=\left(\frac{\tau_iR_i^{\rm env}}{c_m}\right)^{2/p}.
\]
After raising the threshold inequality to the power $2/p$,
$g_E(i)>\tau_i$ is equivalent to
\[
 \bigl(N_A=0\ \text{and}\
       N_X-\theta_{i,k}\Delta>0\bigr)
 \quad\text{or}\quad
 \bigl(N_A>0\ \text{and}\
       N_XN_A-N_C^2-\theta_{i,k+1}\Delta N_A>0\bigr).
\]
Thus the zero-denominator branch is included as a separate semialgebraic
stratum. The thresholds may vary with $i$, as required in the definition of
pseudodimension.
If $v\le d_{\rm ch}$, the desired pseudodimension bound is immediate, so
assume $v>d_{\rm ch}$. Across $v$ thresholded rows, the predicates on one
chart involve $O(v)$ polynomials of degree $O(\ell)$ in $d_{\rm ch}$ real
parameters. The zero-inclusive sign-assignment bound of Goldberg and
Jerrum~\cite[Corollary~2.1]{GJ95} gives at most
\[
 \left(\frac{C\ell v}{d_{\rm ch}}\right)^{d_{\rm ch}}
 \le (C\ell v)^{d_{\rm ch}}
\]
labelings on one chart. Summing over the charts gives at most
\[
 \binom{L}{\ell}(C\ell v)^{d_{\rm ch}}
\]
labelings.  If $v$ points were shattered, then
\[
 v\le\log_2\binom L\ell+d_{\rm ch}\log_2(C\ell v).
\]
Since
$\log\binom L\ell\le\ell\log(eL/\ell)
\le C d_{\rm ch}\log(L+2)$ and $d_{\rm ch}=k\ell\le L^2$, solving this
inequality for $v$ gives
\[
 \Pdim(\{g_E\})\le C d_{\rm ch}\log(L+2).
\]
The pseudodimension entropy bound \eqref{eq:pdim-entropy} and
\eqref{eq:angular-normalization}, with
$c_*=[1+(k+1)/s]^{p/2}$, give
\[
 \log N(\eta,\{h_E\},L_2(q))
 \le Ck(s+1)\log(L+2)\log(C\mathcal R/\eta).
\]
Indeed $\mathcal R\le C_pc_*$ by \eqref{eq:angular-envelope-event}, and
\[
 \log(C\mathcal R/\eta)
 \le C_p\bigl(\log(k+2)+\log(1/\delta)\bigr).
\]
Since $s=O_p(\delta^{-2})$, this is
the right-hand side of \eqref{eq:angular-entropy}. Each $L_2(q)$-ball of
radius $\eta$ contains at most one member of the retained
$3\delta/4$-packing. Hence the cover bounds $|\cP_R|$, and
$|\cP|\le(8/7)|\cP_R|$ gives the same logarithmic estimate for the original
packing, up to an additive constant. The estimate holds for every finite
$\delta$-packing, and a maximal packing gives the required $\delta$-cover.
\end{proof}

\begin{corollary}[Rank-at-most row-augmented entropy]\label{cor:angular}
Let $\cH$ be a finite-dimensional real Hilbert space, let $q$ be a finitely
supported probability law, let $u_i\in\cH$ be unit vectors, and let
$b_i\in\cH$ be fixed vectors. Fix $k\ge1$ and $B,T_0>0$.
Suppose that $0\le\beta_i^p\le B$, and associate with every subspace
$F\subseteq\cH$ of dimension at most $k$ a scalar $t_F\in[0,T_0]$. Then the
class
\[
 i\longmapsto t_F^p\beta_i^p
 \dist(u_i,F+\operatorname{span}\{b_i\})^p
 \qquad(F\subseteq\cH,\ \dim F\le k)
\]
has
\[
 \log N(\delta,\mathcal G,L_2(q))
 \le \widetilde O_{p,B,T_0}(k\delta^{-2})
 \qquad(0<\delta\le1),
\]
where $\mathcal G$ denotes the displayed class.
\end{corollary}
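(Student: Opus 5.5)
The plan is to reduce \cref{cor:angular} to \cref{thm:angular} through three steps: a product-class covering estimate, a grid over the scalar factor $t_F^p$, and a stratification by $\dim F$. First I remove the weights $\beta_i^p$. Define $\widetilde q_i=q_i\beta_i^{2p}/Z$ with $Z=\sum_iq_i\beta_i^{2p}\le B^2$; if $Z=0$ the class is identically zero and there is nothing to prove. Then
\[
 \norm{\beta^p(f_F-f_G)}_{L_2(q)}=\sqrt Z\,\norm{f_F-f_G}_{L_2(\widetilde q)}.
\]
Hence a $(\delta/(2T_0^pB))$-cover of $\{f_F\}$ in $L_2(\widetilde q)$ becomes a $\delta/2$-cover of $\{\beta^pf_F\}$ in $L_2(q)$ after multiplication by $\beta^p$, up to the factor $T_0^p$ that is handled below. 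The dependence of $\widetilde q$ on $\beta$ is harmless, because \cref{thm:angular} holds for every finitely supported law.

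Second, I treat the scalar factor $t_F^p\in[0,T_0^p]$ with a grid. Every function in the class has the form $\lambda g$, where $\lambda\in[0,T_0^p]$ and $g\in\{\beta^pf_F\}$. Since $0\le f_F\le 1$ (because $u_i$ is a unit vector), we have $\norm{\beta^pf_F}_{L_2(q)}\le B$. I take a grid $\Lambda$ of $[0,T_0^p]$ with mesh $\delta/(2B)$, so $|\Lambda|\le 1+2T_0^pB/\delta$. I also take a $\delta/(2T_0^p)$-cover $\mathcal N$ of $\{\beta^pf_F\}$. Then
\[
 \norm{\lambda g-\lambda'g'}\le|\lambda-\lambda'|\norm{g}+\lambda'\norm{g-g'}\le\delta,
\]
so $\Lambda\cdot\mathcal N$ is a $\delta$-cover of the class, possibly not internal. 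Passing to an internal cover at scale $2\delta$ costs only a constant. The grid contributes an additive $\log(1+2T_0^pB/\delta)$, which is absorbed into $\widetilde O_{p,B,T_0}(k\delta^{-2})$.

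Third, I handle dimension at most $k$ rather than exactly $k$. For $j=0$, $F=\{0\}$ gives a single function. For $1\le j\le k$ I apply \cref{thm:angular} with $k$ replaced by $j$, at scale $\delta'=\min\{1,\delta/(2T_0^pB)\}$, which gives entropy at most $C_pj\delta'^{-2}\polylog(j+2,1/\delta')\le\widetilde O_{p,B,T_0}(k\delta^{-2})$. Taking the union over the $k+1$ strata adds $\log(k+1)$.

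An alternative is to embed $\cH$ into $\cH\oplus\R^k$ and pad $F$ with fresh orthogonal directions. Padding leaves distances from $u_i\in\cH$ unchanged, so this would also remove the stratification. Either route works, and I would present the union bound because it is simpler.

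The main obstacle is bookkeeping rather than any genuine difficulty. One must keep the covering scale below $1$, as \cref{thm:angular} requires $\delta\le1$, and check that the reweighting by $\beta^{2p}$ produces a legitimate probability law. Both checks are immediate.
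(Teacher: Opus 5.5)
Your proposal is correct and matches the paper's proof essentially step for step: stratify by $\dim F=j$, apply \cref{thm:angular} on each stratum, handle $t_F^p\in[0,T_0^p]$ with a grid via the triangle inequality, and take the union over the $k+1$ strata (adding $\log(k+1)$). The only cosmetic difference is that you absorb $\beta^p$ by reweighting to $\widetilde q\propto q\beta^{2p}$, whereas the paper simply uses the pointwise bound $\beta_i^p\le B$ to get $\norm{\beta^p(f_F-f_{F'})}_{L_2(q)}\le B\norm{f_F-f_{F'}}_{L_2(q)}$; both work.
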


\begin{proof}
For each $1\le j\le k$, apply \cref{thm:angular} to the stratum
$\dim F=j$. Indeed, for $a,a'\in[0,T_0^p]$ and two such subspaces $F,F'$,
the triangle inequality gives
\[
 \norm{a\beta^pf_F-a'\beta^pf_{F'}}_{L_2(q)}
 \le BT_0^p\norm{f_F-f_{F'}}_{L_2(q)}+B|a-a'|.
\]
Here multiplication by $\beta^p$ is row-wise.
Thus a suitably rescaled cover from \cref{thm:angular}, together with a
grid of mesh $c_{B,T_0}\delta$ for $a=t_F^p$, covers this stratum. The
$j=0$ stratum contains only $F=\{0\}$. Taking the union of the $k+1$ covers
adds only $\log(k+1)$ to the entropy.
\end{proof}

\subsection{A scalar composition lemma}

The following deterministic lemma shows that the parallel power is a
Lipschitz function of the first three scalar coordinates introduced in
\cref{sec:four-coordinate-decomposition}. Its bound includes zero amplitudes,
both signs of the parallel residual, and exact cancellation.

\begin{lemma}[Lipschitz scalar decoder]\label{lem:composition}
Fix $1\le p\le2$.  For $a,b\ge0$ and $\sigma\in\{-1,1\}$ define
\[
 T=a^p,\qquad V=b^p,\qquad H=p\sigma a^{p-1}b,\qquad
 U=|a+\sigma b|^p.
\]
At $p=1$ set $H=0$ when $a=0$. For another
$a',b'\ge0$ and $\sigma'\in\{-1,1\}$, define $T',V',H',U'$ analogously.
Then
\begin{equation}\label{eq:composition}
 |U-U'|\le C_p\bigl(|T-T'|+|V-V'|+|H-H'|\bigr),
\end{equation}
where $C_p$ depends only on $p$. This includes sign changes, exact
cancellation, and zero amplitudes.
\end{lemma}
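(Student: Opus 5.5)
The plan is to reduce \eqref{eq:composition} to a two-regime analysis according to whether the amplitudes are comparable. By symmetry assume $a\ge a'$ when needed, and write $s=a+\sigma b$, so that $U=|s|^p$. The key observation is that $T,V,H$ are exactly the first-order data of $|a+\sigma b|^p$: when $b\ll a$ we have $U=a^p+p\sigma a^{p-1}b+O(a^{p-2}b^2)=T+H+O(\cdot)$, and when $a\lesssim b$ both $U$ and $V$ are of order $b^p$. Accordingly, fix a small absolute constant $\kappa$ (say $\kappa=1/4$) and split into cases according to whether $b\le\kappa a$ and $b'\le\kappa a'$.

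\textbf{Case 1: both small ($b\le\kappa a$, $b'\le\kappa a'$).} In this regime write $U=a^p\,\phi(\sigma b/a)$, where $\phi(x)=|1+x|^p$ is smooth on $[-\kappa,\kappa]$. Set $x=\sigma b/a$, so that $H/T=px$ and hence $U=T\,\phi(H/(pT))$. The map $(T,H)\mapsto T\phi(H/(pT))$ is positively homogeneous of degree one on the cone $|H|\le p\kappa T$. Since its gradient is bounded there (the gradient is degree-zero homogeneous and continuous on the compact slice $T=1$), it is Lipschitz on this convex cone. This gives $|U-U'|\le C(|T-T'|+|H-H'|)$.

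\textbf{Case 2: both large ($b\ge\kappa a$, $b'\ge\kappa a'$).} Here $U\le(1+\kappa^{-1})^pb^p=C V$, and similarly $U'\le CV'$. A direct bound of this kind does not give a Lipschitz estimate, so instead parametrize by $V$: write $U=V\psi(\sigma a/b)$ with $\psi(y)=|y+1|^p$ for $|y|\le1/\kappa$. For $p>1$ the function $\psi$ is Lipschitz on this interval, and the map $(V,T)\mapsto V\psi(\pm(T/V)^{1/p})$ is degree-one homogeneous. However, $T^{1/p}$ is not Lipschitz near $T=0$. To handle this, use $|U-U'|\le|U-V|+|V-V'|+|V'-U'|$ together with $\bigl||a+\sigma b|^p-b^p\bigr|\le C_p(a^p+a b^{p-1})$. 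The term $ab^{p-1}$ equals $|H|\cdot(b/a)^{2-p}/p$ when $a>0$, and this factor is unbounded. I would therefore treat this subcase via an interpolation: compare $U$ with $U'$ through the intermediate point $(a',b,\sigma)$, and use monotonicity in $a$ together with the mean value theorem. The change in $U$ when $a$ moves to $a'$ at fixed $b$ is at most $p\,\max(|s|^{p-1})\,|a-a'|$. Then bound $b^{p-1}|a-a'|$ by $|H-H'|+|V-V'|+|T-T'|$, using $H=p\sigma a^{p-1}b$ and the observation that when $\sigma=\sigma'$ one has $b^{p-1}|a-a'|\lesssim b\,|a^{p-1}-a'^{p-1}|\cdot(\ldots)$. \emph{This is the step I expect to be the main obstacle}: since $p-1\le1$, the bound $|a-a'|b^{p-1}$ versus $|a^{p-1}-a'^{p-1}|b$ is only favorable when $a,a'\lesssim b$, which is exactly this regime. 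Concretely, $|a^{p-1}-a'^{p-1}|\ge(p-1)\max(a,a')^{p-2}|a-a'|\ge(p-1)(b/\kappa)^{p-2}|a-a'|$, so $b\,|a^{p-1}-a'^{p-1}|\gtrsim b^{p-1}|a-a'|$. Then $b|a^{p-1}-a'^{p-1}|\le |H-H'|/p+a'^{p-1}|b-b'|$, and the last term is controlled via $|b-b'|$ and $|V-V'|$ using $a'\lesssim b'$. When $\sigma\ne\sigma'$, note that $|H|+|H'|\le|H-H'|$, which bounds $ab^{p-1}+a'b'^{p-1}$ directly, so one may route through $V$ and $V'$ as in the triangle inequality above. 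At $p=1$, $H=\sigma b$ (or $0$ when $a=0$), and the case is handled by the identity $U=|a+\sigma b|$ together with $|b-b'|=|V-V'|$ and the sign information carried by $H$.

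\textbf{Mixed case ($b\le\kappa a$ but $b'>\kappa a'$).} Here I would insert a boundary point on the segment between the two configurations, for instance by scaling $b$ continuously at fixed $a$ and $\sigma$ until $b=\kappa a$, and apply the two previous cases along the path. Each coordinate $T,V,H$ is monotone along such a path, so the path lengths are bounded by the endpoint differences, up to constants. The remaining bookkeeping consists of the zero-amplitude conventions, which follow by continuity in Case 2.
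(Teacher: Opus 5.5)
Your route is genuinely different from the paper's, and with the repairs below it is correct.

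\textbf{Comparison with the paper.} The paper normalizes by $S=T+V$ and reduces to the one-variable curve $t=T/S\mapsto(t,h(t),u_\sigma(t))$ on $[0,1]$. It proves $|u_\sigma(t)-u_\sigma(s)|\le C_p(|t-s|+|h(t)-h(s)|)$ by showing $|u_\sigma'|\le C_p(1+|h'|)$ on the two end intervals, where $h$ is monotone. It lifts this by homogeneity for equal signs. Opposite signs are treated globally: keep $\sigma$, note $|H-\widetilde H'|\le|H-H'|$, and pay the flip cost $(c+d)^p-|c-d|^p\le C_pc^{p-1}d\le C_p|H-H'|$.

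You instead split by the cone $b\lessgtr\kappa a$.
\begin{itemize}
\item Your Case~1 is the paper's end $t\to1$. It is cleaner there, because $(T,H)$ are smooth coordinates and both signs are handled at once.
\item Your Case~2 concavity step $b^{p-1}|a-a'|\le\frac{\kappa^{p-2}}{p-1}\,b\,|a^{p-1}-a'^{p-1}|$ plays the role of the paper's comparison of $u_\sigma'$ with $h'$ near $t=0$. It is valid since $\max(a,a')\le b/\kappa$.
\end{itemize}
The paper's normalization buys a single one-dimensional estimate and a uniform sign flip. Your route uses only homogeneity and mean-value bounds, but it needs three cases and a gluing step.

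\textbf{Case~2 repairs.}
\begin{itemize}
\item The identity is $ab^{p-1}=(a/b)^{2-p}|H|/p$, not $(b/a)^{2-p}$. So the factor is at most $\kappa^{p-2}$ in Case~2, not unbounded. This boundedness is exactly what makes your opposite-sign subcase work.
\item In that subcase you also need $T=a^p\le\kappa^{-1}a^{p-1}b=|H|/(p\kappa)$. The bound $|U-V|\le C_p(a^p+ab^{p-1})$ contains $T$ itself rather than a difference. With both facts, $|U-U'|\le C_p(|H|+|H'|+|V-V'|)=C_p(|H-H'|+|V-V'|)$.
\item For equal signs you only wrote out the $a$-leg. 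The $b$-leg at fixed $a'$ follows from $a'\le\kappa^{-1}b'$ together with $\max(b,b')^{p-1}|b-b'|\le|b^p-b'^p|$, valid for $p\ge1$.
\end{itemize}

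\textbf{Mixed-case gap.} This is the substantive issue: your justification does not prove what is needed. Monotonicity of $T,V,H$ along the path $P\to P^*=(a,\kappa a,\sigma)$ only shows that the path length equals $\norm{c(P)-c(P^*)}_1$. Gluing requires $\norm{c(P)-c(P^*)}_1\le C_p\norm{c(P)-c(P')}_1$, and raising $b$ to $\kappa a$ can overshoot $b'$. The estimate does hold. First, $\norm{c(P)-c(P^*)}_1\le 2pa^{p-1}(\kappa a-b)$. Then split into three subcases.
\begin{itemize}
\item If $a'\le a/2$, the bound is at most $4p\kappa|T-T'|$.
\item If $b'\ge\kappa a$, then $a^{p-1}(\kappa a-b)\le\kappa^{1-p}((\kappa a)^p-b^p)\le\kappa^{1-p}|V-V'|$.
\item If $a'>a/2$ and $b'<\kappa a$, then $a'<b'/\kappa<a$. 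Split $\kappa a-b\le(\kappa a-b')+(b'-b)_+$. The first piece gives $a^{p-1}(\kappa a-b')\le\kappa a^{p-1}(a-a')\le\kappa|T-T'|$. The second gives $a^{p-1}(b'-b)_+\le(2/\kappa)^{p-1}|V-V'|$, because $b'>\kappa a/2$.
\end{itemize}

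\textbf{The endpoint $p=1$.} The convention $H=0$ at $a=0$ is discontinuous, so continuity does not cover it. Argue directly: when $a=0$ you have $U=V$, hence
$|U-U'|\le|V-V'|+\bigl|b'-|a'+\sigma'b'|\bigr|\le|V-V'|+|T-T'|$.
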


\begin{proof}
Assume first $1<p\le2$.  Put $S=T+V$.  When $S>0$, let
\[
 t=T/S,\qquad
 \alpha(t)=t^{1/p},\qquad \beta(t)=(1-t)^{1/p},
\]
\[
 h(t)=p\,t^{(p-1)/p}(1-t)^{1/p},\qquad
 u_\sigma(t)=|\alpha(t)+\sigma\beta(t)|^p.
\]
Then $(T,V,H,U)=S(t,1-t,\sigma h(t),u_\sigma(t))$.  We claim
\begin{equation}\label{eq:normalized-curve}
 |u_\sigma(t)-u_\sigma(s)|
 \le C_p\bigl(|t-s|+|h(t)-h(s)|\bigr).
\end{equation}
Write $\rho=(p-1)/p$. In $(0,1)$,
\[
 u_+'(t)=(\alpha+\beta)^{p-1}
            [t^{-\rho}-(1-t)^{-\rho}],
\]
\[
 |u_-'(t)|=|\alpha-\beta|^{p-1}
            [t^{-\rho}+(1-t)^{-\rho}],
\qquad
 \frac{h'(t)}{h(t)}
 =\frac{\rho-t}{t(1-t)}.
\]
For fixed $p$, these formulas give
\[
 |u_\sigma'(t)|\le C_p(1+|h'(t)|)
\]
on $(0,\rho/2]$ and $[3/4,1)$: near zero, $t^{-\rho}$ is controlled by
$|h'(t)|$, and near one the same is true of $(1-t)^{-\rho}$. Both
$u_\sigma$ and $h$ are $C_p$-Lipschitz on the middle interval
$[\rho/2,3/4]$, while $h$ is monotone on each endpoint interval. Thus,
integrating on any one of the three intervals gives
\eqref{eq:normalized-curve}. If $[s,t]$ crosses one boundary, apply these
bounds on the two pieces and use the Lipschitz bound for $h$ on the middle
piece. If it crosses both boundaries, then
$|t-s|\ge3/4-\rho/2\ge1/2$, and the claim follows directly from the
boundedness of $u_\sigma$. This proves \eqref{eq:normalized-curve} on
$[0,1]$, including the cancellation point of $u_-$, by continuity.

For two tuples of the same sign, say $S\ge S'>0$,
\[
 |Su_\sigma(t)-S'u_\sigma(t')|
 \le C_p|S-S'|
 +C_pS'\bigl(|t-t'|+|h(t)-h(t')|\bigr).
\]
Moreover,
\[
 |S-S'|\le|T-T'|+|V-V'|,\qquad
 S'|t-t'|\le|T-T'|+|S-S'|,
\]
\[
 S'|h(t)-h(t')|\le|H-H'|+2|S-S'|.
\]
This proves \eqref{eq:composition} for equal signs, including $S'=0$ by a
direct limit.

Now suppose that $\sigma'=-\sigma$. Keep the sign $\sigma$ while changing
the amplitudes, and define
\[
 \widetilde U'=|a'+\sigma b'|^p,
 \qquad
 \widetilde H'=p\sigma(a')^{p-1}b'=-H'.
\]
The equal-sign case gives
\[
 |U-\widetilde U'|
 \le C_p\bigl(|T-T'|+|V-V'|+|H-\widetilde H'|\bigr).
\]
Since
\[
 |H-\widetilde H'|
 =p\bigl|a^{p-1}b-(a')^{p-1}b'\bigr|
 \le |H-H'|,
\]
it remains only to flip the sign of the second tuple. For $c,d\ge0$,
\[
 (c+d)^p-|c-d|^p\le C_p c^{p-1}d.
\]
Indeed, this follows from the mean-value theorem when $d\le c$. When
$d\ge c$, the mean-value theorem gives the bound $C_p d^{p-1}c$, and
$d^{p-1}c\le c^{p-1}d$ because $p\le2$.  Hence in all cases the sign-flip
cost satisfies
\[
 |\widetilde U'-U'|
 \le C_p(a')^{p-1}b'
 \le C_p|H-H'|.
\]
The triangle inequality proves \eqref{eq:composition} for opposite signs.

At $p=1$, if $a,a'>0$, the reverse triangle inequality gives
\[
 \bigl||a+\sigma b|-|a'+\sigma'b'|\bigr|
 \le|a-a'|+|\sigma b-\sigma'b'|
 =|T-T'|+|H-H'|.
\]
If $a=0$, then $H=0$ by convention and $U=b=V$, independently of $\sigma$.
Thus, when $a=0<a'$,
\[
 |U-U'|
 \le |b-b'|+\bigl|b'-|a'+\sigma'b'|\bigr|
 \le |V-V'|+|T-T'|.
\]
If $a=a'=0$, then $|U-U'|=|V-V'|$. The remaining asymmetric case follows
by interchanging the two tuples.
\end{proof}

\subsection{Lewis-weight control of the low-rank term}

This section bounds the empirical complexity of normalized low-rank
costs. We first use the arbitrary-sampling formulation of Cohen and
Peng~\cite[Theorem~7.1]{CP15} and their moment reduction
\cite[Lemma~7.4]{CP15} to extract a scalar first-moment estimate from the
Lewis-weight concentration bounds of Talagrand~\cite{Tal90,Tal95}. We then
derive a scalar Rademacher bound and lift it to Hilbert-valued linear maps by
a Gaussian-mixture representation. The resulting estimate has no dependence
on the dimension of the output space.

\begin{corollary}[First-moment Cohen--Peng bound]
\label{cor:CP-first-moment}
Let $1\le p<2$ and $r\ge1$, and suppose $(w_i,y_i)_{i\in[n]}$ with
$w_i\ge0$ satisfy
$\sum_iw_iy_iy_i^\top=I_r$ with unit $y_i$ on the positive support. Let
$q$ be a probability distribution with $q_i>0$ whenever $w_i>0$. Put
\[
 \mathcal L(x)=\sum_iw_i|\ip{y_i}{x}|^p,
 \qquad
 \ell_x(i)=\frac{w_i}{q_i}|\ip{y_i}{x}|^p,
 \qquad
 K=\max_{w_i>0}\frac{w_i}{q_i},
\]
with $\ell_x(i)=0$ when $w_i=q_i=0$. For i.i.d. indices
$I_1,\ldots,I_j\sim q$, put
\[
 P_jf=\frac1j\sum_{s=1}^jf(I_s),
 \qquad
 \Delta_j=\E_I\sup_{\mathcal L(x)=1}|P_j\ell_x-1|.
\]
There are constants $C_p,c_p$ such that, for $0<\eta\le1/2$ and
$N_0=n+r+j+K+\eta^{-1}+2$,
\begin{equation}\label{eq:CP-first-moment}
 \frac jK\ge C_p\eta^{-2}\log^{c_p}N_0
 \quad\Longrightarrow\quad
 \Delta_j\le C_p\eta.
\end{equation}
\end{corollary}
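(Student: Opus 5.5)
We plan to derive \eqref{eq:CP-first-moment} directly from the Cohen--Peng machinery, using \cite[Theorem~7.1]{CP15} in its arbitrary-sampling form together with the moment reduction \cite[Lemma~7.4]{CP15}. First we reduce to the setting of that theorem. Form the matrix $M\in\R^{n\times r}$ whose $i$th row is $w_i^{1/p}y_i^\top$. Then $\mathcal L(x)=\norm{Mx}_p^p$, and $\ell_x(i)=|m_i^\top x|^p/q_i$, so that $P_j\ell_x=\norm{SMx}_p^p$ for the i.i.d.\ sampling matrix of \cref{def:iid-sampling} with $j$ samples. The normalization $\sum_iw_iy_iy_i^\top=I_r$ with unit $y_i$ says exactly that $M$ is in Lewis position: $M^\top W^{1-2/p}M=I_r$, so by \eqref{eq:lewis-fixed-point} the $w_i$ are the $\ell_p$ Lewis weights of $M$. (Zero-weight rows contribute nothing to $\mathcal L$ and are never sampled if $q_i=0$. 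If $q_i>0$ they only add zero summands, so they may be discarded.) The sampling ratio $K=\max w_i/q_i$ plays the role of the oversampling parameter in \cite{CP15}: each draw is a Lewis weight sample with effective rate $q_i\ge w_i/K$.

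Next we control the $p$th moment of the deviation. Symmetrization \eqref{eq:standard-symmetrization} bounds $\Delta_j$ by twice the Rademacher process $\E\sup_{\mathcal L(x)=1}|j^{-1}\sum_s\eps_s\ell_x(I_s)|$. Conditionally on the sample, this process is an $\ell_p$ Rademacher sum over the rescaled rows $(j q_{I_s})^{-1/p}m_{I_s}$. Each such row has Lewis weight (in the sampled matrix) at most of order $K/j$, provided the sampled matrix is itself near Lewis position. Talagrand's bound \cite{Tal90,Tal95}, as packaged in \cite[Lemma~7.4]{CP15}, then gives a moment estimate of the form
\[
 \E\sup_{\mathcal L(x)\le1}\Bigl|\tfrac1j\textstyle\sum_s\eps_s\ell_x(I_s)\Bigr|
 \le C_p\sqrt{\tfrac Kj\log^{c_p}N_0}\;\bigl(1+\E\sup_{\mathcal L(x)\le1}P_j\ell_x\bigr)^{1/2}.
\]
The logarithms depend on the number of rows, $r$, and $K$, all of which are absorbed into $N_0$. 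Because $\E\sup P_j\ell_x\le 1+\Delta_j$, we obtain $\Delta_j\le C_p\varepsilon_j(1+\Delta_j)^{1/2}$ with $\varepsilon_j=\sqrt{(K/j)\log^{c_p}N_0}$. Solving this quadratic inequality gives $\Delta_j\le C_p(\varepsilon_j+\varepsilon_j^2)$. Under the hypothesis $j/K\ge C_p\eta^{-2}\log^{c_p}N_0$ we have $\varepsilon_j\le\eta$, which yields $\Delta_j\le C_p\eta$.

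The main obstacle is the middle step. \cite[Lemma~7.4]{CP15} is stated for sampling \emph{without} replacement or by Bernoulli rates, and the Lewis-weight bound it needs applies to the sampled matrix, not to $M$. The main tool we would use is the self-bounding form above, which is what the moment reduction provides: the sampled matrix's Lewis weights are bounded by $(K/j)$ times its own norm through the same fixed-point comparison $\norm{Tm_i}^p\le w_i\sum_j\norm{Tm_j}^p$ (\cref{lem:lewis-p2-sensitivity} applied with $T$ a linear functional, i.e.\ \eqref{eq:scalar-lewis-sensitivity}). If the with-replacement version is not directly available, our fallback is to Poissonize. We would replace $j$ i.i.d.\ draws by Poisson$(j)$ draws, which gives independent Bernoulli-like multiplicities with rates $jq_i$; this changes $\Delta_j$ by at most $O(j^{-1/2})$ times the sup of $\ell_x$, and that sup is at most $K$. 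This error is negligible under the hypothesis on $j/K$.
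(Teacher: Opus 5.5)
Your reduction matches the paper's. The matrix with rows $w_i^{1/p}y_i^\top$ has Lewis weights $w_i$, $P_j\ell_x=\norm{S\mathcal A x}_p^p$ with $\pi_i=jq_i$, and each sampled row has comparison weight $w_i/\pi_i\le K/j$.

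\textbf{The gap.} It sits in the step you flag as the main obstacle, and your proposed fix does not work. Talagrand's bound, in the form of \cite[Lemmas~7.2--7.3]{CP15}, controls the Rademacher process of a matrix \emph{all of whose rows} have Lewis weight at most $\bar u$, normalized by that matrix's own norm. A random sample $SM$ can have rows of Lewis weight of order one.

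The inequality \eqref{eq:lewis-sensitivity} does not fix this. It only says that a sampled row with index $i$ contributes at most $(K/j)\mathcal L(x)$ to $\norm{SMx}_p^p$. So its relative sensitivity in $SM$ is at most $(K/j)\,\mathcal L(x)/P_j\ell_x$, and the ratio $\mathcal L(x)/P_j\ell_x$ is exactly what has not yet been controlled. Moreover, this bounds sensitivities, whereas the chaining bound needs Lewis weights.

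\textbf{The missing idea.} You need the augmentation inside the proof of \cite[Lemma~7.4]{CP15}:
\begin{itemize}
\item Append a fixed auxiliary matrix from \cite[Lemma~B.1]{CP15}, whose $p$-norm is comparable to $\mathcal L$.
\item Appending rows cannot increase Lewis weights \cite[Lemma~5.5]{CP15}. Hence every row of the augmented matrix has Lewis weight controlled by $\bar u$; the sampled rows are controlled through $w_i/\pi_i\le K/j\le\bar u$.
\item Talagrand's bound then applies to the augmented matrix. Converting from its norm back to $\mathcal L$ yields a self-bounding inequality of the type you write down.
\end{itemize}

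\textbf{The small-$\bar u$ regime.} The augmentation raises a second issue that you miss. The auxiliary rows must themselves have Lewis weight at most $\bar u$, but those of \cite[Lemma~B.1]{CP15} have weight $O(1/r)$. This is why \cite[Lemma~7.4]{CP15} is stated only under $\bar u=\Omega(1/r)$. Here $\bar u=c_p\eta^2/\log^{c_p}N_0$, so the corollary needs $\bar u\ll1/r$, for instance for fixed $r$ and small $\eta$.

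The paper handles this as follows:
\begin{itemize}
\item Split each auxiliary row into $t=\lceil C/(r\bar u)\rceil$ collinear copies scaled by $t^{-1/p}$. This preserves the $p$-norm and the Lewis Gram matrix while dividing the weights by $t$.
\item Check that the augmented matrix has $j+O(r^2+r/\bar u)\le N_0^{C_p}$ rows, so the logarithms stay $O_p(\log N_0)$. Your remark that the logarithms are absorbed into $N_0$ skips this count.
\end{itemize}

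\textbf{Sampling with replacement.} Your worry here is unfounded. The arbitrary-sampling form of \cite[Theorem~7.1]{CP15} with $\pi_i=jq_i$ is exactly $j$ i.i.d.\ draws with probabilities $q_i$ and scaling $\pi_i^{-1/p}$. This matters because your Poissonization fallback fails as stated. An error of order $K/\sqrt j=\sqrt K\,\sqrt{K/j}$ is not $O(\eta)$ under the hypothesis, since $K\ge r$ can be large.
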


\begin{proof}
Let $\mathcal A$ have rows $w_i^{1/p}y_i^\top$. After deleting its zero
rows, the Gram matrix in the defining Lewis-weight equation is
\[
 \sum_{i:w_i>0}w_i^{2/p}w_i^{1-2/p}y_iy_i^\top
 =\sum_iw_iy_iy_i^\top=I_r,
\]
and the corresponding quadratic form for row $i$ is
$w_i^{2/p}\norm{y_i}_2^2=w_i^{2/p}$. Thus the $\ell_p$ Lewis weights of
$\mathcal A$ are $w_i$. Moreover,
$\mathcal L(x)=\norm{\mathcal A x}_p^p$.

In the sampling notation of Cohen and
Peng~\cite[Theorem~7.1]{CP15}, set $\pi_i=jq_i$. Since
$\sum_i\pi_i=j$, their procedure draws $j$ rows independently, chooses row
$i$ with probability $q_i$, and scales it by $\pi_i^{-1/p}$. Hence the
sampled $p$th-power norm is
\[
 \norm{S\mathcal A x}_p^p
 =\frac1j\sum_{s=1}^j
   \frac{w_{I_s}}{q_{I_s}}|\ip{y_{I_s}}{x}|^p
 =P_j\ell_x,
 \qquad
 \min_{w_i>0}\frac{\pi_i}{w_i}=\frac jK.
\]
Apply the moment reduction \cite[Lemma~7.4]{CP15} with $\ell=1$ and a fixed
constant value of its parameter $\delta$. The required first-moment
Rademacher bounds are supplied by
\cite[Lemma~7.2]{CP15} at $p=1$ and \cite[Lemma~7.3]{CP15} for $1<p<2$.
These bounds require a target Lewis-weight scale
$\bar u=c_p\eta^2/\log^{c_p}N_0$.

The statement of \cite[Lemma~7.4]{CP15} is presented under the temporary
restriction $\bar u=\Omega(1/r)$. Its proof uses the auxiliary matrix from
\cite[Lemma~B.1]{CP15}, which has $O(r^2)$ rows with Lewis weights $O(1/r)$.
When $\bar u<c/r$, split each auxiliary row into
$t=\lceil C/(r\bar u)\rceil$ collinear copies, each scaled by $t^{-1/p}$.
This operation preserves its contribution to the $p$th-power norm and divides
its Lewis weight by $t$, so every auxiliary copy has weight at most
$\bar u$. It also leaves the Lewis Gram matrix unchanged: if the original row
is $a$ with Lewis weight $w$, then the $t$ copies contribute
\[
 t\left(\frac wt\right)^{1-2/p}
   (t^{-1/p}a)(t^{-1/p}a)^\top
 =w^{1-2/p}aa^\top.
\]
Appending the sampled rows cannot increase these
weights~\cite[Lemma~5.5]{CP15}.

Each occurrence of sampled row $i$ carries comparison weight $w_i/\pi_i$.
The hypothesis in \eqref{eq:CP-first-moment} gives
\[
 \frac{w_i}{\pi_i}\le\frac Kj\le\bar u
\]
after adjusting the constants and logarithmic power. This verifies the
required sampling condition in \cite[Lemma~7.4]{CP15}. The resulting
augmented matrix has
\[
 j+O(r^2t)
 =O\left(j+r^2+\frac r{\bar u}\right)
 \le N_0^{C_p}.
\]
rows, so its logarithm remains $O_p(\log N_0)$. The auxiliary rows are used
only in the comparison argument and do not change the sampling law $q$.
Consequently,
\cite[Lemmas~7.2--7.4]{CP15} give
$\Delta_j\le C_p\eta$ under the condition in
\eqref{eq:CP-first-moment}.
\end{proof}

\begin{lemma}[Scalar Lewis-weight empirical bound]\label{lem:scalar-empirical}
Let $1\le p<2$ and $r\ge1$, let $(w_i,y_i)_{i\in[n]}$ with $w_i\ge0$
satisfy
$\sum_iw_iy_iy_i^\top=I_r$ with unit $y_i$ on the positive support, and let
$q$ be a probability distribution with $q_i>0$ whenever $w_i>0$. Put
\[
 \mathcal L(x)=\sum_iw_i|\ip{y_i}{x}|^p,
 \qquad
 \ell_x(i)=\frac{w_i}{q_i}|\ip{y_i}{x}|^p
 \quad(\mathcal L(x)=1).
\]
Again $\ell_x(i)=0$ when $w_i=q_i=0$. If
$K=\max_{w_i>0}w_i/q_i$, then
\begin{equation}\label{eq:scalar-empirical}
 \E_{I,\eps}\sup_{\mathcal L(x)=1}
 \left|\frac1m\sum_{s=1}^m\eps_s\ell_x(I_s)\right|
 \le C_p\Lambda_p(n+r+m+K+2)
 \left(\sqrt{\frac Km}+\frac Km\right),
\end{equation}
where $I_s$ are i.i.d. with law $q$ and
$\Lambda_p(N)=\log^{c_p}N$ for a constant $c_p$ depending only on $p$.
The absolute value is inside the supremum, and the sampling is with
replacement.
\end{lemma}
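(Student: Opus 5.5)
We derive \eqref{eq:scalar-empirical} from the first-moment bound of \cref{cor:CP-first-moment}. We do not redo any chaining: the corollary already controls the one-sided deviation $\Delta_j=\E\sup|P_j\ell_x-1|$, and we convert that into a Rademacher bound through a desymmetrization step plus a separate treatment of the mean term.

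\textbf{Splitting off the mean.} Write $\eps_s\ell_x(I_s)=\eps_s(\ell_x(I_s)-1)+\eps_s$. The mean $\E_q\ell_x=\mathcal L(x)=1$ is used here. By the triangle inequality, the left side of \eqref{eq:scalar-empirical} is at most
\[
\E\sup_{\mathcal L(x)=1}\Bigl|\frac1m\sum_s\eps_s(\ell_x(I_s)-1)\Bigr|+\E\Bigl|\frac1m\sum_s\eps_s\Bigr|.
\]
The second term is at most $m^{-1/2}\le\sqrt{K/m}$, since $K\ge1$: $\sum_iw_i=r$ and $\sum_iq_i=1$ force $\max w_i/q_i\ge r\ge1$.

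For the first term, apply the standard desymmetrization inequality for centered classes (\cite[Lemma~2.3.6]{VW96}). It bounds the first term by $2\E\sup_x|P_m\ell_x-1|=2\Delta_m$.

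\textbf{Bounding $\Delta_m$.} There are two regimes.

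When $m/K\ge C_p\log^{c_p}N_0$, with $N_0=n+r+m+K+2$, choose $\eta=\sqrt{C_pK\log^{c_p}N_0/m}\le1/2$ in \cref{cor:CP-first-moment}. This gives $\Delta_m\le C_p\eta$, which is of the claimed form $\Lambda_p\sqrt{K/m}$. The dependence of $N_0$ on $\eta^{-1}$ is harmless, because $\eta^{-1}\le\sqrt m$.

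In the complementary regime $m\le C_pK\log^{c_p}N_0$, use the crude bound
\[
\sup_x|P_m\ell_x-1|\le 1+\sup_x P_m\ell_x.
\]
It then suffices to show $\E\sup_xP_m\ell_x\le C_p\Lambda_p\,K/m$ up to logs. For this, use monotonicity in the sample size. Take $j=\lceil C_pK\log^{c_p}N_0'\rceil\ge m$. Then $mP_m\ell_x\le jP_j\ell_x$ when the first $m$ samples are embedded in $j$ samples, because all summands are nonnegative. Hence
\[
\E\sup_xP_m\ell_x\le\frac jm\,\E\sup_xP_j\ell_x\le\frac jm(1+\Delta_j)\le C_p\frac{K\log^{c_p}N_0}{m}.
\]
Here $\Delta_j\le C_p/2$ follows from the corollary with $\eta=1/2$. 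This yields the $K/m$ term. Since $1\le K/m\cdot\Lambda_p$ in this regime, the constant $1$ is absorbed as well.

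\textbf{Main obstacle.} The main obstacle is bookkeeping rather than new ideas. We must check that the corollary's logarithmic argument $N_0$, which involves $j$ and $\eta^{-1}$, stays polynomial in $n+r+m+K$, so that both regimes collapse to a single factor $\Lambda_p(n+r+m+K+2)$. We must also make the desymmetrization step rigorous given the measurability convention, which is handled by passing to a countable dense set of $x$ via continuity of $x\mapsto\ell_x$. Once these are in place, adding the two regimes gives \eqref{eq:scalar-empirical}.
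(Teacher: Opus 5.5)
Your proof is correct, but it differs from the paper's at the step that converts the signed Rademacher process into the unsigned deviation $\Delta_j$. It also differs in how the small-$m$ regime is handled.

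\textbf{The paper's route.} The paper conditions on the signs and splits the sample into the groups $\{s:\eps_s=1\}$ and $\{s:\eps_s=-1\}$. Each group is an i.i.d.\ sample from $q$, and the signed average becomes $\frac{M_+}{m}(P^+_{M_+}\ell_x-1)-\frac{M_-}{m}(P^-_{M_-}\ell_x-1)+\frac{M_+-M_-}{m}$. It applies the large-sample bound on $\Delta_j$ to both groups on the event $\{M_+,M_-\ge m/4\}$. The complement is absorbed via $Ke^{-m/16}\lesssim\sqrt{K/m}$. For small $m$, it uses that $a_j=\E\sup_x|\sum_{s\le j}\eps_s\ell_x(I_s)|$ is nondecreasing in $j$, by conditional Jensen on the signed process. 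It then compares with a threshold size $m_0\approx KL^2$.

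\textbf{Your route.} You apply the desymmetrization inequality \cite[Lemma~2.3.6]{VW96} once, which gives $2\Delta_m+m^{-1/2}$ for every $m$. Your term $\E|m^{-1}\sum_s\eps_s|$ is exactly the paper's drift term $\E|M_+-M_-|/m$. As a result, no control of the group sizes is needed. In the small-$m$ regime you use the trivial monotonicity of nonnegative sums on the unsigned process, together with $|P_m\ell_x-1|\le1+P_m\ell_x$, and absorb the constant $1$ into $\Lambda_pK/m$.

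\textbf{Comparison.} Your route is shorter and avoids the event-splitting and tail absorption. The paper's route is more self-contained: it uses only independence of the sign-split groups and Jensen, rather than an external desymmetrization lemma. The logarithmic bookkeeping is the same in both, namely the implicit dependence of $N_0$ on $j$ and $\eta^{-1}$.
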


\begin{proof}
Let $\lambda_p(N)=\log^{d_p}N$, where $d_p$ is large enough to absorb the
logarithmic powers in \cref{cor:CP-first-moment}, and put
\[
 \Delta_j=\E_I\sup_{\mathcal L(x)=1}|P_j\ell_x-1|.
\]
We first record a pointwise envelope. For $x\ne0$, the bound
$|\ip{y_i}{x}|\le\norm{x}$ and the inequality $p-2\le0$ give
\[
 \mathcal L(x)=\sum_iw_i|\ip{y_i}{x}|^p
 \ge\norm{x}^{p-2}\sum_iw_i|\ip{y_i}{x}|^2
 =\norm{x}^p.
\]
Since $\norm{y_i}=1$ on the positive support,
$0\le\ell_x(i)\le K$. Also,
\[
 r=\sum_{i:w_i>0}q_i\frac{w_i}{q_i}
 \le K\sum_{i:w_i>0}q_i\le K,
\]
so $K\ge r\ge1$.
The first-moment estimate in \cref{cor:CP-first-moment} implies that
\begin{equation}\label{eq:Delta-large}
 \Delta_j\le C_p\lambda_p(n+r+j+K+2)\sqrt{K/j}
\end{equation}
whenever $j\ge C_pK\lambda_p(n+r+j+K+2)^2$. Indeed, set
$\eta=C_p\lambda_p(n+r+j+K+2)\sqrt{K/j}$. A sufficiently large constant in
the lower bound on $j$ ensures that $\eta\le1/2$, and the term $\eta^{-1}$
in \eqref{eq:CP-first-moment} changes only the logarithmic factor.

We now pass from the unsigned empirical error to the signed process. Let
$M_+=|\{s:\eps_s=1\}|$ and $M_-=m-M_+$. Conditional on the signs,
the two groups consist of independent samples from $q$. Write
$P_{M_+}^+$ and $P_{M_-}^-$ for their empirical averages. When both groups
are nonempty,
\begin{align*}
 \frac1m\sum_{s=1}^m\eps_s\ell_x(I_s)
  ={}&\frac{M_+}{m}(P_{M_+}^+\ell_x-1)
     -\frac{M_-}{m}(P_{M_-}^-\ell_x-1)
     +\frac{M_+-M_-}{m}.
\end{align*}
Put
\[
 L=\lambda_p(n+r+m+K+2),
 \qquad
 j_0=\left\lceil C_pKL^2\right\rceil,
\]
with a sufficiently large constant $C_p$. If $m\ge4j_0$, then
on the event $\{M_+,M_-\ge m/4\}$, \eqref{eq:Delta-large} applies to both
groups. The complementary event has probability at most $2e^{-m/16}$, the
supremum is at most $K$, and its contribution is absorbed because
$Ke^{-m/16}\le C_p\sqrt{K/m}$ in this range. Moreover,
$\E|M_+-M_-|/m\le m^{-1/2}\le\sqrt{K/m}$. Taking expectations in the
preceding decomposition therefore gives
\[
 \E_{I,\eps}\sup_{\mathcal L(x)=1}
 \left|\frac1m\sum_{s=1}^m\eps_s\ell_x(I_s)\right|
 \le C_pL\sqrt{\frac Km}.
\]

It remains to treat $m<4j_0$. Write
\[
 a_j=\E\sup_{\mathcal L(x)=1}
 \left|\sum_{s=1}^j\eps_s\ell_x(I_s)\right|.
\]
The map $f\mapsto\sup_{\mathcal L(x)=1}|f(x)|$ is convex, and the next signed
summand has conditional mean zero. Conditional Jensen therefore gives
$a_j\le a_{j+1}$. Set $m_0=4j_0$.
Since $m_0$ is at most a fixed polylogarithmic multiple of
$n+r+m+K+2$,
\[
 \lambda_p(n+r+m_0+K+2)\le C_pL.
\]
Choosing the constant in $j_0$ sufficiently large makes $m_0$ satisfy the
large-sample condition with its own logarithmic factor. The bound just proved
then gives
\[
 a_{m_0}
 \le C_p\lambda_p(n+r+m_0+K+2)\sqrt{Km_0}
 \le C_pKL^2.
\]
By monotonicity,
\[
 \frac{a_m}{m}\le\frac{a_{m_0}}m
 \le C_pL^2\frac Km.
\]
Choose the exponent $c_p$ in the statement so that
$\Lambda_p(N)=\log^{c_p}N$ dominates $\lambda_p(N)^2$. The large- and
small-sample bounds then give \eqref{eq:scalar-empirical}.
\end{proof}

\begin{theorem}[Lewis-weight bound for low-rank terms]
\label{thm:low-rank-lewis}
Let $1\le p<2$ and $r\ge1$. For $i\in[n]$, let $w_i\ge0$ and
$y_i\in\R^r$ satisfy
\[
 \sum_{i=1}^n w_i y_i y_i^\top=I_r,
 \qquad
 \norm{y_i}_2=1\quad\text{whenever }w_i>0,
\]
and set $y_i=0$ whenever $w_i=0$. Let $q$ be any probability distribution
with $q_i>0$ whenever $w_i>0$, and put
$K=\max_{w_i>0}w_i/q_i$.  For a linear map $T:\R^r\to\cH$ with
$S_T=\sum_jw_j\norm{Ty_j}^p>0$, define
\[
 h_T(i)=
 \begin{cases}
  \dfrac{w_i\norm{Ty_i}^p}{q_iS_T},&w_i>0,\\[4pt]
  0,&w_i=0.
 \end{cases}
\]
Then
\begin{equation}
 \Rad_m^{\rm abs}(\{h_T\})
 \le C_p\polylog(n+r+m+K+2)
 \left(\sqrt{\frac Km}+\frac Km\right).
 \label{eq:low-rank-rad}
\end{equation}
Zero-cost maps are assigned the zero function.
\end{theorem}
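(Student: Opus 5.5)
The plan is to reduce the Hilbert-valued class $\{h_T\}$ to the scalar class in \cref{lem:scalar-empirical} by a Gaussian-mixture representation. The identity \eqref{eq:gaussian-projection-moment} gives, for a standard Gaussian $g$ in $\cH$ (finite-dimensional after restricting to the span of the finitely many relevant vectors $Ty_i$),
\[
 \norm{Ty_i}^p=\gamma_p^{-1}\E_g|\ip{y_i}{T^*g}|^p .
\]
Write $x=T^*g\in\R^r$. Then
\[
 w_i\norm{Ty_i}^p/q_i=\gamma_p^{-1}\E_g\,\mathcal L(T^*g)\,\ell_{\hat x}(i),
\]
where $\hat x=x/\mathcal L(x)^{1/p}$ is normalized so that $\mathcal L(\hat x)=1$. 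Similarly $S_T=\gamma_p^{-1}\E_g\mathcal L(T^*g)$. Hence each $h_T$ is a convex combination of the normalized scalar losses:
\[
 h_T=\E_{\mu_T}\ell_{\hat x},
 \qquad
 d\mu_T\propto\mathcal L(T^*g)\,d\gamma(g).
\]
Here $\mu_T$ is a probability law on $\{x:\mathcal L(x)=1\}$. Gaussians with $T^*g=0$ carry zero weight, so they can be discarded.

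For a fixed sample and fixed signs, the map $f\mapsto|\frac1m\sum_s\eps_sf(I_s)|$ is convex. Therefore
\[
 \Bigl|\frac1m\sum_s\eps_sh_T(I_s)\Bigr|
 \le\E_{\mu_T}\Bigl|\frac1m\sum_s\eps_s\ell_{\hat x}(I_s)\Bigr|
 \le\sup_{\mathcal L(x)=1}\Bigl|\frac1m\sum_s\eps_s\ell_x(I_s)\Bigr|.
\]
Taking the supremum over $T$ and then the expectation over $I$ and $\eps$ bounds $\Rad_m^{\rm abs}(\{h_T\})$ by the left side of \eqref{eq:scalar-empirical}. \Cref{lem:scalar-empirical} then gives \eqref{eq:low-rank-rad} with $\polylog=\Lambda_p$. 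This scalar estimate is where the Cohen--Peng first-moment bound of \cref{cor:CP-first-moment} enters, so the remaining work is only the mixture identity and the convexity step.

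The technical points to verify are the following.

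\begin{enumerate}
 \item \emph{Finite dimension.} The pointwise mixture identity only needs a finite-dimensional $\cH$. For arbitrary $\cH$ we replace it by the span of $T(\R^r)$, which has dimension at most $r$. The supremum is then over maps into a finite-dimensional space, but the bound is independent of that dimension, so nothing is lost.
 \item \emph{Measurability.} The supremum over $T$ is handled by the countable-dense convention from \cref{sec:prelim}.
 \item \emph{Validity of the normalization.} The normalization is legitimate because $\mathcal L(x)\ge\norm{x}^p>0$ for $x\ne0$; this is the envelope argument in the proof of \cref{lem:scalar-empirical}.
 \item \emph{Well-defined law.} $\mu_T$ is a genuine probability measure exactly when $S_T>0$. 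Maps with $S_T=0$ are assigned the zero function and contribute nothing.
\end{enumerate}

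I expect the main subtlety to be justifying the mixture step itself. The weights $\mathcal L(T^*g)/\E\mathcal L(T^*g)$ depend on $T$ but not on the row index $i$. This row-independence is what makes $h_T$ a mixture of the functions $\ell_x$, rather than merely a pointwise average with row-dependent weights. Once that is checked, the argument reduces to Jensen's inequality.
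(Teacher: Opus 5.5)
Your proposal is correct and follows the paper's proof essentially step for step. The paper also writes $h_T$ as the $\mu_T$-average of $\ell_{\widehat x_g}$ with $d\mu_T\propto\mathcal L(T^*g)\,d\gamma_T(g)$ for Gaussian $g$ on $\operatorname{span}\{Ty_i\}$, uses convexity of $f\mapsto|m^{-1}\sum_s\eps_sf(I_s)|$ to pass to the supremum over the scalar class, and concludes with \cref{lem:scalar-empirical}.
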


\begin{proof}
We use a Gaussian scalarization of Euclidean norms. A closely related
reduction from matrix-valued $(p,2)$-norm sampling to scalar $\ell_p$
sampling appears in~\cite[Lemma~3.2]{WY25}. Here the same idea represents
each normalized Hilbert-valued function $h_T$ as a mixture of scalar
functions to which \cref{lem:scalar-empirical} applies. Put
\[
 \mathcal L(x)=\sum_iw_i\abs{\ip{y_i}{x}}^p.
\]
For $\mathcal L(x)=1$ put
\[
 \ell_x(i)=
 \begin{cases}
  \dfrac{w_i}{q_i}\abs{\ip{y_i}{x}}^p,&w_i>0,\\[4pt]
  0,&w_i=0.
 \end{cases}
\]
Let $\gamma_T$ be standard Gaussian probability measure on
$\operatorname{span}\{Ty_i\}$, let $g\sim\gamma_T$, and set $x_g=T^*g$.
By \eqref{eq:gaussian-projection-moment},
$\E_g\mathcal L(x_g)=\gamma_pS_T$. Define the probability measure
\[
 d\mu_T(g)=\frac{\mathcal L(x_g)}{\gamma_pS_T}\,d\gamma_T(g),
 \qquad
 \widehat x_g=\frac{x_g}{\mathcal L(x_g)^{1/p}}
 \quad\text{when }\mathcal L(x_g)>0.
\]
Define $\widehat x_g$ arbitrarily on the remaining set, which has zero
$\mu_T$-measure. For $w_i>0$,
\begin{equation}\label{eq:tilted-mixture}
\begin{aligned}
 \int\ell_{\widehat x_g}(i)\,d\mu_T(g)
 &=\frac{w_i}{q_i\gamma_pS_T}
   \E_g|\ip{Ty_i}{g}|^p\\
 &=\frac{w_i\norm{Ty_i}^p}{q_iS_T}
 =h_T(i).
\end{aligned}
\end{equation}
When $w_i=0$, both sides of \eqref{eq:tilted-mixture} are zero.
Equation~\eqref{eq:tilted-mixture} places every $h_T$ in the closed convex
hull of $\{\ell_x:\mathcal L(x)=1\}$. For each fixed sample and sign vector,
the map
$f\mapsto|m^{-1}\sum_s\eps_sf(I_s)|$ is convex, so its supremum over
this convex hull is no larger than its supremum over the scalar class.
Therefore,
\[
 \sup_T\left|\frac1m\sum_s\eps_sh_T(I_s)\right|
 \le\sup_{\mathcal L(x)=1}\left|\frac1m\sum_s\eps_s\ell_x(I_s)\right|,
\]
and \cref{lem:scalar-empirical} proves \eqref{eq:low-rank-rad}.
\end{proof}

\begin{remark}[Necessity of the $K/m$ term]
The linear term in \eqref{eq:low-rank-rad} cannot in general be omitted.
Take $n=r$, $y_i=e_i$, $w_i=1$, and $q_i=1/r$ for $i\in[r]$. Then $K=r$
and the class contains $r\1_{\{i=j\}}$ for every $j\in[r]$, so its
one-sample absolute complexity is $r$.
\end{remark}

\end{document}